\documentclass[acmsmall]{acmart}
\listfiles

\synctex=1

\usepackage{libertine}
\usepackage{stackengine}
\usepackage{listings}
\usepackage{color}
\usepackage{mathpartir}
\usepackage{mathtools}
\usepackage{stmaryrd}
\usepackage{wrapfig}
\usepackage{pgffor}
\usepackage{definitions}
\usepackage{xspace}
\usepackage{pifont} 
\usepackage{colortbl}
\usepackage{pgf}
\usepackage{tikz}
\usepackage{soul}
\usepackage{siunitx}

\usetikzlibrary{matrix}
\usetikzlibrary{arrows,automata,quotes,positioning}
\usepackage{longtable}
\usepackage{paralist}

\usepackage{bm}
\usepackage{xcolor}
\usepackage[capitalise]{cleveref}
\usepackage{pftools}
\usepackage{marvosym}
\usepackage{subcaption}

\newcommand\expGMevtrans{1.9}
\newcommand\expGMcoarmochi{1.6}
\newcommand\expGMrealmochi{6.7}
\newcommand\expGMrethfl{5.1}
\newcommand\expGMdirect{0.9}
\newcommand\expSpeedupEVoverDrift{6.3}
\newcommand\expSpeedupEVoverRcaml{5.3}
\newcommand\expSpeedupEVoverRethfl{6.4}
\newcommand\expSpeedupEVoverRealMochi{16.8}
\newcommand\expNewOverCoarMochi{15}
\newcommand\expNewOverRealMochi{10}
\newcommand\expNewOverRethfl{3}
\newcommand\expNewOverTrans{9}
\newcommand\expBenchCount{23}
\newcommand\expDriftVerified{12}
\newcommand\expEDriftVerified{21}
\newcommand\expRealMochiVerified{11}
\newcommand\expRcamlVerified{6}
\newcommand\expRethflVerified{18}

\newcommand\expBestOverviewTrans{1.8}

\newcommand\expBestAuctionEVDrift{2.7}
\newcommand\expBestAuctionTrans{55.1}
\newcommand\expBestAuctionRethfl{18.6}

\newcommand\expTPGMoffDrift{11.0}
\newcommand\expTPGMonDrift{17.3}
\newcommand\expTPGMoffevDrift{1.6}
\newcommand\expTPGMonevDrift{1.9}
\newcommand\expTPSpeedupDrift{0.6}
\newcommand\expTPSpeedupevDrift{0.8}

\newcommand\evdrift{{\bf ev}{\sc Drift}}
\newcommand\drift{{\sc Drift}}
\newcommand\rcaml{{\sc RCaml/Spacer}}
\newcommand\rcamlONLY{{\sc RCaml}}
\newcommand\mochi{{\sc MoCHi}}
\newcommand\coar{{\sc CoAR}}

\newcommand\rethfl{{\sc ReTHFL}}

\lstset{
	language=ML,
	numbers=left,
	xleftmargin=1.1em,
	numbersep=3pt,
	escapeinside={@}{@},
	numberstyle=\scriptsize\color{gray},
	basicstyle=\small\ttfamily,
	columns=flexible,
	mathescape=true,
	morekeywords={ev,begin,assert},
	deletekeywords={open},
	morecomment=*[s][\color{green!60!black}]{(*}{*)},
	moredelim=**[is][\color{purple}]{|<}{>|},
}

\lstdefinelanguage{Lambda}{%
  numbers=none,
  morekeywords={if,then,else, ev},
  literate={
    {->}{{$\to$}}{2}
    {lambda}{{$\lambda$}}{1}
  },
  basicstyle=\small\ttfamily,
  keepspaces,
  mathescape 
}[keywords,comments,strings]%

\lstdefinelanguage{Alg}{%
  numbers=none,
  morekeywords={if,then,else, ev},
  literate={
    {->}{{$\to$}}{2}
    {lambda}{{$\lambda$}}{1}
  },
  basicstyle=\bfseries,
  keepspaces,
  mathescape 
}[keywords,comments,strings]%

\setcounter{secnumdepth}{3}
\crefformat{section}{\S#2#1#3} 
\crefmultiformat{section}{\S#2#1#3}{ and~\S#2#1#3}{, \S#2#1#3}{, and~\S#2#1#3} 
\crefrangeformat{section}{\S#3#1#4 to~\S#5#2#6} 
\crefrangeformat{line}{Lines #3#1#4--#5#2#6}

\title[Abstract Interpretation of Temporal Safety Effects of Higher Order Programs]
{Abstract Interpretation of Temporal Safety Effects of Higher Order Programs}

\author{Mihai Nicola}
\orcid{0000-0003-0204-1626}
\affiliation{%
  \institution{Stevens Institute of Technology}
  \city{Hoboken}
  \country{USA}
}
\email{lnicola@stevens.edu}

\author{Chaitanya Agarwal}
\orcid{0009-0005-0921-697X}
\affiliation{%
  \institution{New York University}
  \city{New York, NY}
  \country{USA}
}
\email{ca2719@nyu.edu}

\author{Eric Koskinen}
\orcid{0000-0001-7363-634X}
\affiliation{%
  \institution{Stevens Institute of Technology}
  \city{Hoboken}
  \country{USA}
}
\email{eric.koskinen@stevens.edu}

\author{Thomas Wies}
\orcid{0000-0001-7363-634X}
\affiliation{%
  \institution{New York University}
  \city{New York, NY}
  \country{USA}
}
\email{wies@cs.nyu.edu}

\begin{CCSXML}
<ccs2012>
   <concept>
       <concept_id>10003752.10003790.10003794</concept_id>
       <concept_desc>Theory of computation~Automated reasoning</concept_desc>
       <concept_significance>500</concept_significance>
       </concept>
   <concept>
       <concept_id>10011007.10010940.10010992.10010998.10010999</concept_id>
       <concept_desc>Software and its engineering~Software verification</concept_desc>
       <concept_significance>500</concept_significance>
       </concept>
 </ccs2012>
\end{CCSXML}

\ccsdesc[500]{Theory of computation~Automated reasoning}
\ccsdesc[500]{Software and its engineering~Software verification}

\keywords{temporal verification, type-and-effect systems, dependent temporal effects}


\bibliographystyle{ACM-Reference-Format}
\citestyle{acmnumeric}

\newcommand{\smartparagraph}[1]{\smallskip\noindent\emph{#1.}}

\widowpenalty=10000   
\clubpenalty=10000    

\setcopyright{cc}
\setcctype{by}
\acmDOI{10.1145/3763140}
\acmYear{2025}
\acmJournal{PACMPL}
\acmVolume{9}
\acmNumber{OOPSLA2}
\acmArticle{362}
\acmMonth{10}
\received{2025-03-25}
\received[accepted]{2025-08-12}

\begin{document}
\begin{abstract}

This paper describes a new \emph{abstract interpretation}-based approach to verify temporal safety properties of recursive, higher-order programs.
While prior works have provided theoretical impact and some automation, they have had limited scalability.
We begin with a new automata-based ``abstract effect domain'' for summarizing context-sensitive dependent effects, capable of abstracting relations between the program environment and the automaton control state.
Our analysis includes a new transformer for abstracting event prefixes to automatically computed context-sensitive effect summaries, and is instantiated in a type-and-effect system \added{grounded in abstract interpretation}.
Since the analysis is parametric on the automaton, we next instantiate it to a broader class of history/register (or ``accumulator'') automata, beyond finite state automata  to express some context-free properties, input-dependency, event summation, resource usage, cost, equal event magnitude, etc. 

We implemented a prototype \evdrift{} that computes dependent effect summaries (and validates assertions) for OCaml-like recursive higher-order programs. 
As a basis of comparison, we describe reductions to assertion checking for higher-order but effect-free programs, and demonstrate that our approach outperforms prior tools \drift{}, \rcaml{}, \mochi{}\added{, and \rethfl{}.
Overall, across a set of \expBenchCount{} benchmarks, 
\drift{} verified \expDriftVerified{} benchmarks,
\rcaml{} verified \expRcamlVerified{},
\mochi{} verified \expRealMochiVerified{},
\rethfl{} verified \expRethflVerified{}, and
\evdrift{} verified \expEDriftVerified{};
\evdrift{} also achieved a \expSpeedupEVoverDrift{}$\times$, \expSpeedupEVoverRcaml{}$\times$, \expSpeedupEVoverRealMochi{}$\times$, and \expSpeedupEVoverRethfl{}$\times$ speedup over \drift{}, \rcaml{}, \mochi{}, and \rethfl{}, respectively, on those benchmarks that both tools could solve.}

\end{abstract}
\maketitle


\section{Introduction}
\label{sec:intro}

The long tradition of temporal property verification has, in recent years, been also directed at programs written in languages with recursion and higher-order features.
In this direction, a first step was to go beyond simple types to dependent and/or refinement type systems~\cite{DBLP:conf/pldi/FreemanP91,DBLP:conf/popl/XiP99,DBLP:conf/popl/Terauchi10,DBLP:conf/pldi/RondonKJ08,DBLP:journals/pacmpl/PavlinovicSW21}, capable of validating merely (non-temporal) safety assertions.
Subsequently, works focused on verifying termination of higher-order programs, e.g., ~\cite{DBLP:conf/esop/KuwaharaTU014}.

As a next step, researchers focused on \emph{temporal} properties of higher-order programs. In this setting, programs have 
a notion of observable \emph{events} or \emph{effects}, 
typically emitted as a side effect of a program expression such as ``\texttt{ev} $e$,'' where $e$ is first reduced to a program value and then emitted. The semantics of the program is correspondingly augmented to reduce to a pair $(v,\pi)$, where $v$ is the value and $\pi$ is a sequence of events or an ``event trace.''
For such programs a natural question is whether the set of all event traces is included within a given temporal property expressed in Linear Temporal Logic~\cite{DBLP:conf/focs/Pnueli77}, or as an automaton.
Liveness properties apply to programs that may diverge, inducing infinite event traces.
A first approach at automated temporal verification was through the celebrated reduction to fair termination~\cite{DBLP:conf/lics/Vardi87}. \citet{DBLP:conf/popl/MuraseT0SU16} introduced a reduction from higher-order programs and LTL properties to termination of a calling relation.

In a parallel research trend, others have been exploring compositional type-and-effect theories for temporal verification. 
\citet{DBLP:conf/aplas/SkalkaS04} and \citet{DBLP:journals/jfp/SkalkaSH08} described a type-and-effect system to extract a finite abstraction of a program and then perform model-checking on that abstraction.
Later, \citet{DBLP:conf/csl/KoskinenT14} 
and
\citet{DBLP:conf/csl/0001C14}
showed that the effects component in a type-and-effect system
$\Gamma \vdash e : \tau \& \varphi$
could consist of a temporal property $\varphi$, where $\varphi$ holds of the events generated by the reduction of expression
$e$. This was combined with a dependent refinement system by~\citet{DBLP:conf/csl/KoskinenT14} and used with an abstraction of B\"{u}chi automata by~\citet{DBLP:conf/csl/0001C14}.
\citet{DBLP:conf/lics/Nanjo0KT18} then later gave a deductive proof system for verifying such temporal effects, even permitting the temporal effect expressions to depend on program inputs.
In a more distantly related line of research, others consider languages with programmer-provided ``algebraic effects'' and their handlers~\cite{DBLP:journals/pacmpl/SekiyamaU23,DBLP:journals/pacmpl/LagoG24} (see Sec.~\ref{sec:conclusion}).
\subsection{Better Automation Through Abstract Interpretation}

The first step of this paper is a new route to automate 
temporal effect inference and verification of 
recursive higher-order programs
through abstract interpretation.

As a preliminary step, we describe a \added{direct}\removed{na\"{i}ve} approach that reduces verification of such effect-full programs to verifying assertions of effect-less higher-order programs. \removed{We show that this can be done either by (i) translating the program so that every expression reduces to a (value, effect prefix) tuple, where the second component accumulates the effects as a sort of ghost return value, or (ii) translating the program into continuation-passing style, accumulating the effect prefix as an argument.}
We later experimentally show that, although this theoretically enables higher-order safety verifiers (e.g. \drift{}, \rcaml{}, \mochi{}, and \rethfl{}) to be applied to the effect setting, those tools do not exploit much of the property structure and ultimately struggle on the inherent \added{overhead}\removed{product construction} that comes from these transformations.

To achieve a more scalable solution, our core contribution is a novel \emph{effect abstract domain}. In the concrete semantics, an execution is simply the program execution environment paired with the event trace prefix that was thus far generated, i.e., an element of $(\cvalues^* \times \cenvs)$ where $\cvalues$ is the domain of program values and $\cenvs$ the domain of value environments. We first observe that both the environment and the possible trace prefix, somewhat counterintuitively, can be organized around the automaton control state. That is, an abstraction like $\mathcal{Q} \to \powerset(\cenvs)$
captures the possible execution environments that could be reachable at a control state $q \in \mathcal{Q}$ of the automaton. This 
control state-centric summary of environments enables the abstract domain to capture disjunctive invariants, guided by the target property of the verification. This abstraction often avoids the need for switching to a more expensive abstract domain that is closed under precise joins.
%
%
Having organized around control state, the final abstraction step is to associate with each $q$
a summary of the program environment, e.g. constraints like $\texttt{x}>\texttt{y}$.
The abstract domain \added{for summarizing program environments} can naturally be instantiated using any of a variety of standard numerical domains such as polyhedra~\cite{DBLP:conf/popl/SinghPV17,DBLP:journals/scp/BagnaraHZ08,
  DBLP:conf/popl/CousotH78}, octagons~\cite{DBLP:journals/corr/abs-cs-0703084}, etc.


\subsection{Better Expressiveness Through Accumulator Automata}

The effect abstract domain above turns out to be somewhat parametric over the kind of automaton, opening up another opportunity.

Specifically, there are many temporal safety properties that go beyond basic event sequencing properties especially, for example, if each event emits an integer. Examples include a property that the sum of the emitted integers is below some bound (e.g. resource analysis), or that the last emitted integer is the largest one. Properties could depend on inputs (see, e.g.~\citet{DBLP:conf/lics/Nanjo0KT18}), or involve context-free-like properties such as protocols of stateful \added{APIs~\cite{DBLP:journals/pacmpl/FerlesSD21}} or the sum of production being equal to the sum of consumption.



We support this wider class of temporal safety properties by augmenting our effect abstract domain automata
to symbolic accumulator automata (SAA).
Our automaton model is inspired by the various  notions of (symbolic) register or memory automata considered~\cite{DBLP:journals/tcs/KaminskiF94,DBLP:conf/cp/BeldiceanuCP04,DBLP:conf/cav/DAntoniFS019} and consists of a register ``accumulator'' (e.g., an integer or tuple of integers) that can remember earlier events, calculate summaries, etc. SAA is expressive enough to capture the  example properties above.

To instantiate SAA \removed{automata} in our framework, we refine the effect abstraction to
$\mathcal{Q} \to \powerset(\cvalues \times \cenvs)$, now
capturing the possible \emph{pairs of} accumulator value and execution environment that could be reachable at control state $q \in \mathcal{Q}$. Our abstraction thus associates with each $q$:
\begin{inparaenum}[(i)]
\item a summary of the program environment, e.g. constraints like $\texttt{x}>\texttt{y}$,
\item a summary of the automaton accumulator, e.g. constraints like $\texttt{acc}>0$,
and even
\item relations between the two, e.g. $\texttt{acc} > \texttt{x} - \texttt{y}$.
\end{inparaenum}
Thus, in this example, we capture at location $\ell$ in the program, that control state $q$ is reachable but only in a configuration where the accumulator is positive, the program variable \texttt{x} is greater than \texttt{y} and the accumulator bounds the difference between \texttt{x} and \texttt{y}.




\subsection{Challenges \& Contributions} 

To pursue the effect abstract domain, we address the following challenges in this paper:

\smartparagraph{Accumulative type and effect system (Sec.~\ref{sec:types-and-effects})} Our effect abstract domain, expressing properties of program expressions, is associated with the program through a type-and-effect system\removed{.
  Unfortunately existing type and effect systems
  are not suitable because their judgments of the form $\Gamma \vdash e : \tau \& \varphi$ do not involve event trace prefixes in their context: instead $\varphi$ describes the effects of $e$ alone, without information about what effects preceded the evaluation of $e$.
While this makes the type system more compositional, it also makes effect inference more difficult because one has to analyze $e$ for any possible prefix trace. In terms of our automata-based effect domain, it corresponds to analyzing $e$ for an arbitrary initial control state and accumulator value. (In the rest of this paper we will favor the more expressive accumulator automata, and forego accumulator-free DFAs.)
    We thus present a new effect system}, with judgments of the form 
    $\Gamma\;;\;\eff \vdash \term:\tandeff{\tau}{\eff'}$, where $\eff$ summarizes the prefix up to the evaluation of $\term$, $\eff'$ summarizes the \emph{extended} prefix with the evaluation of $\term$, and term-specific premises dictate how extensions are formed. The system is parametric in the abstract domains used to express dependent effects and dependent type refinements. \added{Our system resembles existing systems for sequential effects such as~\cite{DBLP:conf/ecoop/Gordon17,DBLP:journals/toplas/Gordon21} but is grounded in abstract interpretation to facilitate automated inference of types and effects.}
    
\smartparagraph{Effect abstract domain (Sec.~\ref{sec:acc-autom-eff})} We formalize the abstract domain discussed above as an instantiation of our effect system. 
    A key ingredient is the \emph{effect extension operator} $\effext$ that takes an abstraction of a reachable automaton configuration $\eff$, a type of a new event $\bval$ (we use refinement types for $\bval$ to capture precise information about the possible values of the event to extend a trace prefix), and produces an abstraction of the automaton configurations reachable by the extended trace. 
    The user-provided automata include symbolic error state conditions and so if the effect computed by the analysis associates error states with bottom, then the property encoded by the automaton holds of the program.
    Finally, we have proved the soundness of the effect abstract domain.

\smartparagraph{Automated inference of effects (Sec.~\ref{sec:inference})} We next address the question of automation. Recent work showed that, for programs \emph{without effects}, that abstract interpretation can be used to compute refinement types through a higher-order dataflow analysis~\cite{DBLP:journals/pacmpl/PavlinovicSW21}. We present an extension to \emph{effectful} programs through a translation-oriented embedding of programs with effects to effect-free programs and a specialized abstract transformer that exploits the structure of the translated programs and effect abstract domain. The resulting abstract interpretation propagates effects in addition to values through the program. To obtain the overall soundness of the inference algorithm, we show that the types inferred for the translated programs can be used to reconstruct a derivation in our type and effects system.

\smartparagraph{Verification, Implementation \& Benchmarks (Sec.~\ref{sec:implementation})}
    We implement the type system, effect abstract domain and abstract interpretation in a new tool \evdrift{} for OCaml-like recursive higher-order programs. Our implementation is an  extension of the \drift{} tool, which provides 
    assertion checking of effect-free programs. 
    There are no existing tools that can verify SAA properties of higher-order event-generating programs. Thus, in an effort to find the closest basis for comparison,
    we also implemented \added{a translation}\removed{two translations (one via encoding effects in tuples; another via encoding effects as a CPS parameter)} that reduce SAA verification of effect programs to assertion checking of effect-free programs (to which \drift{}, \rcaml{}, \mochi{}, \rethfl{}, etc. can be applied).
    To improve the precision of our abstract interpretation, we further adapted the classical notion of \emph{trace partitioning}~\cite{DBLP:conf/esop/MauborgneR05} to this higher-order effect setting.
    
    To date there are limited higher-order benchmark programs with properties that require an automaton with a register to express.
    We thus built the first suite of such benchmarks by creating \expBenchCount{} new examples and adapting examples from the literature including 
summation/max-min examples~\cite{DBLP:journals/tcs/KaminskiF94,DBLP:conf/cp/BeldiceanuCP04,DBLP:conf/cav/DAntoniFS019},
monotonicity examples,
programs with temporal event sequences~\cite{DBLP:conf/csl/KoskinenT14,DBLP:conf/popl/MuraseT0SU16,DBLP:conf/lics/Nanjo0KT18}, resource analysis~\cite{DBLP:journals/toplas/IgarashiK05,DBLP:conf/popl/HoffmannAH11,DBLP:conf/popl/HoffmannDW17},  
and an auction smart contract~\cite{DBLP:conf/sp/StephensFMLD21}.
    
\smartparagraph{Evaluation (Sec.~\ref{sec:evaluation})} We evaluated (i) the effectiveness of \evdrift{} at directly verifying SAA-expressible temporal safety properties over the use of \drift{}, \rcaml{}, \removed{and} \mochi{}\added{, and \rethfl{}} when applied via the translation/reduction to assertion checking, and (ii) the degree to which trace partitioning improves precision for \evdrift{}. 
\added{Overall, our approach is able to verify 
\expEDriftVerified{} out of the \expBenchCount{} benchmarks, which is \expNewOverTrans{}, \expNewOverCoarMochi{}, \expNewOverRealMochi{}, and \expNewOverRethfl{} more than \drift{}, \rcaml{}, \mochi{}, and \rethfl{}, respectively, (with our tuple translation) could verify.
Furthermore, \evdrift{} achieved a speedup of \expSpeedupEVoverDrift{}$\times$, \expSpeedupEVoverRcaml{}$\times$, \expSpeedupEVoverRealMochi{}$\times$, and \expSpeedupEVoverRethfl{}$\times$ over \drift{}, \rcaml{}, \mochi{}, and \rethfl{}, respectively, on those benchmarks that both tools could solve. }
\emph{The supplement to this paper includes the \evdrift{} source, all benchmark sources, and the Appendix.}

\section{Overview}
\label{sec:overview}

\newcommand\dd{\lstinline|d|}
\newcommand\cc{\lstinline|c|}
\newcommand\vvv{\lstinline|v|}
\newcommand\xx{\lstinline|x|}
\newcommand\nn{\lstinline|n|}
\newcommand\ttt{\lstinline|t|}
\newcommand\qq{\lstinline|q|}
\newcommand\kk{\lstinline|k|}
\newcommand\ttmain{\lstinline|main|}
\newcommand\ttacc{\lstinline|acc|}
\newcommand\ttbids{\lstinline|bids|}
\newcommand\ttrfds{\lstinline|rfds|}
\newcommand\xxamt{\lstinline|amt|}
\newcommand\xxi{\lstinline|i|}

This paper introduces a method for verifying properties of dependent effects of higher-order programs, through an abstraction that 
can express relationships between the (symbolic) next step of an automaton and the dependent typing context of the program at the location where a next event is emitted. We show that, when combining our approach with data-flow abstract interpretation~\cite{DBLP:journals/pacmpl/PavlinovicSW21}, and an abstract domain of symbolic accumulator automata, 
we can verify a variety of memory-based, dependent temporal safety properties of higher-order programs.

\subsection{Motivating Examples}

\begin{example}\label{ex:overview1} Consider the following example:

\begin{tabular}{l|l}
\begin{minipage}[m]{1.9in}
\begin{lstlisting}
let rec busy n t = 
  if (n <= 0) then ev (-t)
  else busy (n - 1) t
let main (x:int) (n:int) = 
  ev x; busy n x
\end{lstlisting}
\end{minipage}
&
\begin{minipage}[m]{3.5in}
\begin{tikzpicture}[shorten >=1pt, node distance=2.5cm, on grid, auto, initial text=$$]
    \node[state, initial] (q0) {$q_0$};
    \node[state] at (2, 0) (q1) {$q_1$};
    \node[state, accepting] at (4.5, 0) (q2) {$q_2$};

    \path[->] (q0) edge [bend left] node [pos=0.5, sloped, above] {\text{\lstinline{acc:=}}$v$} (q1);
    \path[->] (q1) edge [loop right] node {\text{\lstinline{else}}} (q1);
    \path[->] (q1) edge [bend left] node [pos=0.5, sloped, above] {\{\text{\lstinline{acc}}${}\neq-v$\}} (q2);
    \path[->] (q2) edge [loop right] node {true} (q2);
\end{tikzpicture}
\end{minipage}
\end{tabular}
\end{example}

Above in \ttmain, an integer event \xx\ is emitted, and then a recursive function \lstinline|busy| repeatedly iterates until \nn\ is below 0, at which point the event -\ttt\ (which is equal to -\xx) is emitted. For this program, the possible event traces are simply
\added{$\{x;-x \mid x\in\mathbb{Z}\}$}, i.e., any two-element sequence of an integer and its negation. This property can be expressed by a \emph{symbolic accumulator automaton} (a cousin to symbolic automata and to memory automata, as discussed in Sec.~\ref{sec:acc-autom-eff}), as shown above. \added{The automaton is provided by the user along with the program.} It has an initial control state $q_0$, from which point, whenever an event \texttt{ev($v$)} is observed for any integer $v$, the automaton's internal register \ttacc\ is updated to store value $v$ and a transition is taken to $q_1$. From $q_1$, observing another event whose value is not the negation of the saved \ttacc\ will cause a transition to the final accepting state $q_2$ or otherwise loop at $q_1$. The language of the automaton consists of traces that violate the property of interest. That is, the property expressed by the automaton is the complement of the automaton's language. It consists of the traces:
$\{x(-x)^* \mid x\in\mathbb{Z}\}$, which permits none or arbitrarily many $-x$ events after $x$. \added{We note that a stronger specification that exactly characterizes the set $\{x;-x \mid x\in\mathbb{Z}\}$ can also be expressed (and verified with our approach). We use the weaker specification here to highlight that, in general, the automaton approximates the program's traces.}

{\bf \removed{Na\"{i}ve}\added{Direct} approach: reduction to assertion checking}.
At least in theory, this program/property 
can be verified using existing tools through a cross-product 
transformation between the program and 
property that reduces the problem to an assertion-checking safety problem. As is common, the automaton can be encoded in the programming language
(or the program can be converted to an automaton~\cite{DBLP:conf/cav/HeizmannHP13})  with integer variables \qq\ and \ttacc\ for the automaton's control state and accumulator, respectively. The automaton's transition function is also encoded in the language through simple if-then-else expressions. This is shown in the function \lstinline|ev_step|, which consumes the current automaton configuration, and a next event value \vvv\ and returns the next configuration:
\begin{lstlisting}[aboveskip=0.4\baselineskip, belowskip=0.4\baselineskip]
let ev_step q acc v : (Q * int) =
  (* take one automaton step *)
  if      (q==0) then (1, v)
  else if (q==1 && v==-acc) then (2,acc)
  else if (q==1) then (1,acc)
  else (q,acc)
\end{lstlisting}

A product can then be formed, for example, by passing and returning the (\qq,\ttacc) configuration into and out of every expression, and replacing \lstinline|ev| expressions (which are not meaningful to typical safety verifiers) with a call to \lstinline|ev_step|. For Ex.~\ref{ex:overview1}, this yields the following product program:

\begin{tabular}{ll}
\begin{minipage}[t]{2.5in}
\vspace{-0.75\baselineskip} 
\begin{lstlisting}
let rec busy_prod q acc n t = 
  if (n <= 0) then ev_step q acc (-t)
  else busy_prod q acc (n - 1) t
\end{lstlisting}
\vspace{0.25\baselineskip} 
\end{minipage}

&
\begin{minipage}[t]{3.5in}
\vspace{-0.75\baselineskip} 
\begin{lstlisting}
let main_prod (x:int) (n:int) = 
  let (q,acc) = (0,0) in
  let (q',acc') = ev_step q acc x in
  let (q'',acc'') = busy_prod q acc n x
  in assert(q''==2)
\end{lstlisting}
\vspace{0.25\baselineskip} 
\end{minipage}
\end{tabular}

\noindent
In \lstinline|main_prod| above, the initial configuration is provided for the automaton, then the first event expression is replaced by a call to \lstinline|ev_step|, then the resulting next configuration is passed to \lstinline|busy_prod| and the returned final configuration is input to an \lstinline|assert|. \lstinline|busy_prod| is similar. 

\added{
  We implemented the above translation (details in \apxref{\cref{sec:tuple-translation}}{the extended version~\cite{Nicola2025}}) and used it in combination with a variety of existing verification tools for event-less higher-order programs: (1) the \drift{} tool
which uses a dependent type system and abstract interpretation to verify safety properties of higher-order recursive programs~\cite{DBLP:journals/pacmpl/PavlinovicSW21}, (2) \rcaml{} (part of \coar \cite{github:coar}), another fairly mature tool that can also verify assertions of higher order programs~\cite{hiroshiICFP2024,DBLP:journals/pacmpl/KawamataUST24,DBLP:journals/pacmpl/SekiyamaU23}, (3) \mochi{} \cite{github:mochi}, another software model checker based on higher-order recusion schemes~\cite{DBLP:conf/popl/Kobayashi09,DBLP:conf/pldi/KobayashiSU11}, and (4) \rethfl{}, a type-based validity checker for a fragment of a higher-order fixed-point logic, that leverages CHC solvers to infer predicates within a refinement type system.}

{\bf The problem.}
Although this example tuple product reduction can be verified by these existing tools, unsurprisingly, the approach does not scale well with any of the considered tools. \added{Let us examine another example called \lstinline{auction}, shown in the top left of Fig.~\ref{fig:overview2}, that is only slightly more involved yet already demonstrates the problem for existing tools when the tuple product reduction is used: \drift{} reports a potential assertion violation after \SI{\expBestAuctionTrans}{s}, \rcaml{} times out after \SI{900}{s}, and \mochi{} reports a potential assertion violation after \SI{91}{s}. Only \rethfl{} can verify the example, but still needs \SI{\expBestAuctionRethfl}{s}.
We will describe a technique and tool that can instead verify this example 
in only \SI{\expBestAuctionEVDrift}{s}. In fact, as we will see in our evaluation (\cref{sec:evaluation}), for several more elaborate benchmarks like those inspired by amortized complexity analysis, this techniques is the only one for which verification succeeds.}


\newcommand\fff{\lstinline|f|}
\newcommand\xxx{\lstinline|x|}
\newcommand\pos{\lstinline|pos|}
\newcommand\nneg{\lstinline|neg|}
\newcommand\mfff{\textrm{\lstinline|f|}}
\newcommand\mxxx{\textrm{\lstinline|x|}}
\newcommand\mpos{\textrm{\lstinline|pos|}}
\newcommand\mnneg{\textrm{\lstinline|neg|}}
\newcommand\mtti{\textrm{\lstinline|i|}}
\newcommand\mttj{\textrm{\lstinline|j|}}
\newcommand\mttk{\textrm{\lstinline|k|}}
\newcommand\mttmax{\textrm{\lstinline|iamt|}}
\newcommand\mttf{\textrm{\lstinline|f|}}

\newcommand\circled[1]{\raisebox{.5pt}{\textcircled{\raisebox{-.9pt} {#1}}}}

\begin{figure}
\begin{tabular}{|l|l|}
\hline
{\bf Input Program}: & {\bf Input Property}: Initially, \ttbids$=0$ and \ttrfds$=0$.\\
\begin{minipage}[m]{2.4in}
\begin{lstlisting}
let refund k kamt h _ =
  if k <= 1 then ()
  else ((ev 3)$^\text{\circled{r}}$; h ()) @\label{ln:ev-refund}@
let close j g =
  if j = 1 then ()
  else ((ev 2)$^\text{\circled{c}}$; g ()) @\label{ln:ev-close}@
let rec bid i iamt f = 
  let nmax = iamt + 1 in
  if * then
    ((ev 1)$^\text{\circled{b}}$; @\label{ln:ev-bid}@
    bid (i+1) nmax (refund i iamt f))
  else close i f
let main () = (bid 1 1 ($\lambda$ _.()))$^\text{\circled{f}}$  
\end{lstlisting}
\end{minipage}
&
\begin{minipage}[m]{2.45in}
\begin{tikzpicture}[shorten >=1pt, node distance=1.9cm, on grid, auto, initial text=$$]
    \node[state, initial] (q0) {$q_0$};
    \node[state] at (3.8, 0) (q1) {$q_1$};
    \node[state, accepting] at (1.9, -1.9) (q2) {$q_{err}$};

    \path[->] (q0) edge [loop above] node {\text{\lstinline{\{v=1\} bids:=bids+1}}} (q1);
    \path[->] (q0) edge node [pos=0.5, sloped, above] {\lstinline{\{v=2\}}} (q1);
    \path[->] (q0) edge [bend right] node [pos=0.5, sloped, above] {\text{else}} (q2);
    \path[->] (q1) edge [loop above] node [text width=2.3cm] {\text{\lstinline{\{v=3,}} \text{\lstinline{\ bids > rfds + 1\}}} \text{\lstinline{rfds:=rfds+1}}} (q1);
    \path[->] (q1) edge [bend left] node [pos=0.5, sloped, above] {\text{else}} (q2);
    \path[->] (q2) edge [loop below] (q2);

\end{tikzpicture}
\end{minipage} \\
\hline
\multicolumn{2}{|l|}{
  $\begin{array}{ll}
\textrm{\bf Computed Effect Abstractions}: \\
\text{\underline{Location \circled{b}}}: & \text{\underline{Location \circled{c}}}:\\
q_0^\text{\circled{b}} \mapsto (\textrm{\ttbids} = \textrm{\xxi}) \wedge (\textrm{\xxi} >= 1)
    & q_0^\text{\circled{c}} \mapsto \bot \\ 
q_1^\text{\circled{b}} \mapsto \bot 
    & q_1^\text{\circled{c}} \mapsto \textrm{\ttbids} = (\textrm{\mttj} - 1) \wedge (\textrm{\mttj} >= 2) \wedge (\textrm{\ttrfds} = 0)\\
q_{err}^\text{l\circled{b}} \mapsto \bot
    & q_{err}^\text{l\circled{c}} \mapsto \bot \\
    &\\
\text{\underline{Location \circled{r}}}: & \text{\underline{Location \circled{f}}}:\\
q_0^\text{\circled{r}} \mapsto \bot
    & q_0^\text{\circled{f}} \mapsto (\textrm{\ttbids} = 0) \wedge (\textrm{\ttrfds} = 0)\\ 
q_1^\text{\circled{r}} \mapsto \textrm{\ttbids} = (\textrm{\ttrfds} + \textrm{\mttk} - 1) \wedge (\textrm{\mttk} >= 2) 
    & q_1^\text{\circled{f}} \mapsto \textrm{\ttbids} = \textrm{\ttrfds} + 1 \\
q_{err}^\text{l\circled{r}} \mapsto \bot
    & q_{err}^\text{l\circled{f}} \mapsto \bot \\
    \text{For every other location \circled{i}}: 
    q_{err}^\text{\circled{i}} \mapsto \bot & \\
\end{array}$ 
}\\
\hline
\end{tabular}
\setlength{\belowcaptionskip}{0pt}
\caption{\label{fig:overview2} 
   Top left shows the {\bf auction} example. Top right illustrates the SSA property. At the bottom is the computed effects inferred by our tool.
}
\end{figure}

The \lstinline{auction} example in Fig.~\ref{fig:overview2} 
involves a first stage in the \lstinline{bid} function in which some nondeterministic number of bidders place increasing bids. Each bid event is represented as an \lstinline{ev 1} event (Ln~\ref{ln:ev-bid}). Then, a \lstinline{close} event \lstinline{ev 2} occurs (Ln~\ref{ln:ev-close}), after which point, the \lstinline{k-1} losers are refunded as a \lstinline{refund} event \lstinline{ev 3} (Ln~\ref{ln:ev-refund}).
This recursive program is also {\bf higher-order}: \lstinline{bid} constructs a function \lstinline{(refund i iamt f)} that tracks the amount \lstinline{iamt} to be refunded to bidder \lstinline{i} that was overtaken by the new bid, and \lstinline{f} is a similar function that tracks all previous refunds. When the bidding closes, the last constructed refund function is called to apply all refunds.

The event traces of the program are:
$\{ (1^n; 2; 3^{n-1}) \mid n\in\mathbb{N}\}$,
i.e. any sequence of some $n$ number of ``1''-events, followed by a ``2''-event, followed by $n-1$ occurrences of ``3''-events.
A simple temporal safety property\added{, expressed as an automaton,} that ensures the correct \emph{order} of events could involve three states: an initial state $q_0$ that loops at bid ``1''-events, a transition under close ``2''-events to an accepting state $q_1$, self-loop to $q_1$ under refund ``3''-events, and otherwise transitions to error state $q_{err}$. 
These states and transitions are depicted in the top right of Fig.~\ref{fig:overview2}. 
With an accumulator automaton, this property can be improved to more accurately capture the valid trace histories by counting the bids: we use a tuple accumulator (\lstinline{bids,rfds}) that has a counter for the number of \lstinline{bids} and a counter for the number of refunds \lstinline{rfds}. The self-loop at $q_0$ increments \lstinline{bids}, and the self-loop at $q_1$ ensures that more refunds have not been given than bids, and increments \lstinline{rfds}.

\emph{The struggle.}
A \removed{na\"{i}ve }translation-based reduction to existing safety verification tools for higher-order programs does not fare well and the reason is twofold. First, there is a blowup in the size of the analyzed program due to the translation, which causes a significant increase in analysis time. In addition, tools like \drift{} use abstract domains that are not closed under arbitrary disjunctions. A \removed{na\"{i}ve }translation of the automaton's state space and transition relation into the program will cause loss of precision due to computation of imprecise joins at data-flow join points. This will cause the analysis to infer an effect abstraction that is too imprecise for verifying the desired property. 

\subsection{Effect Abstract Domain}

The key idea of this paper is to 
exploit the structure of the automaton to better capture disjunctive reasoning in the abstract domain. Roughly speaking, the abstract domain will associate each \emph{concrete} automaton control state $q$, with \emph{abstractions} of (i) the event sequences that could lead to $q$ and (ii) the possible program environment at $q$. This abstraction is expressed as a relation between the accumulator value and the program environment.
We will now describe this abstraction and see the resulting computed abstraction depicted in the bottom of Fig.~\ref{fig:overview2}.

We obtain this abstraction in three main steps, provided a given input symbolic accumulator automaton
$A=(\mathcal{Q},\mathcal{V},\delta,\text{\ttacc},\ldots)$ with the alphabet being some set of values $\mathcal{V}$ (in this section let $\mathcal{V}=\mathbb{Z}$)
and transitions updating the control state and accumulator. We now discuss these steps.

\newcommand\hasVal{\!:\!}
\paragraph{Concrete semantics}
To begin, the concrete semantics of the program is simply pairs of event traces $\mathbb{Z}^*$ with program environments, i.e., $\powerset(\mathbb{Z}^* \times Env)$. Transitions in the concrete semantics naturally update the environment in accordance with the reduction rules, and the event sequence is only updated when an expression \lstinline|ev| $v$ is reduced: $\powerset(\mathbb{Z}^* \times Env) \xrightarrow{\textrm{\lstinline|ev|} \; v} \powerset(\mathbb{Z}^* \times Env)$. For the above example, a concrete sequence of states and transitions could be the following:
\[\begin{array}{l}
(\epsilon, [main, (\textrm{empty env})])
\leadsto
(\epsilon, [bid,\textrm{\lstinline|i|}\hasVal{}1, \textrm{\lstinline|iamt|}\hasVal{}1, \textrm{\lstinline|f|}\hasVal{}(\lambda\ \_\ldots)])\\
\xrightarrow{\textrm{\lstinline|ev|} \; 1}
(\{1\}, [bid,\textrm{\lstinline|i|}\hasVal{}1, \textrm{\lstinline|iamt|}\hasVal{}1, \textrm{\lstinline|f|}\hasVal{}(\lambda\ \_\ldots)])
\stackrel{(\textrm{\lstinline|ev|} \; 1)^{41}}{\leadsto}
(\{1^{42}\}, [close,\textrm{\lstinline|j|}\hasVal{}43, \textrm{\lstinline|g|}\hasVal{}(\lambda\ \_.(\lambda\ \_\ldots))])\\
\xrightarrow{\textrm{\lstinline|ev|} \; 2}
(\{1^{42},2\}, [\mathit{refund},\textrm{\lstinline|k|}\hasVal{}42, \textrm{\lstinline|kamt|}\hasVal{}42, \textrm{\lstinline|h|}\hasVal{}(\lambda\ \_.(\lambda\ \_\ldots))]\\
\xrightarrow{\textrm{\lstinline|ev|} \; 3}
(\{1^{42},2,3\}, [\mathit{refund},\textrm{\lstinline|k|}\hasVal{}42, \textrm{\lstinline|kamt|}\hasVal{}42, \textrm{\lstinline|h|}\hasVal{}(\lambda\ \_.(\lambda\ \_\ldots))])\\
\stackrel{(\textrm{\lstinline|ev|} \; 3)^{40}}{\leadsto}
(\{1^{42},2,3^{41}\}, [\mathit{refund},\textrm{\lstinline|k|}\hasVal{}2, \textrm{\lstinline|kamt|}\hasVal{}2, \textrm{\lstinline|h|}\hasVal{}(\lambda\ \_\ldots))])
\end{array}
\]
(Technically a transition takes the powerset of possible sequence/environment pairs to another powerset; here we show only one sequence for simplicity.)
Above the first component is an event sequence, starting with the empty sequence $\epsilon$ and, for this nondeterministic behavior, the trace will accumulate the event sequence $1^{42}; 2; 3^{41}$.

\paragraph{Intermediate abstraction via concrete automaton control states}
With integer variables and integer effect sequences, it is clear that abstraction is 
needed to represent the possible event sequences of a program even as simple as this running example.
In this example, there are infinitely many sequences of the form $1^{k}; 2; 3^{k-1}$.
The first key idea we explore in this paper is to organize the abstraction around the automaton and, crucially, \emph{keep the automaton control state concrete} while abstracting everything else: the environment, the possible event sequence prefixes, and the value of the automaton's accumulator.
The benefit is that this will lead to a somewhat disjunctive abstract effect domain, where event trace prefixes can be categorized according to the control state (and accumulator values and program environments) that those prefixes reach.
To this end, the first layer of abstraction uses the automaton control states $\mathcal{Q}$ (rather than merely event sequences), and associates each automaton control state with the possible set of pairs of accumulator value $\mathbb{Z}$ and program environment that reach that state along some event sequence:
 $\mathcal{Q} \mapsto \powerset(\mathbb{Z} \times Env)$. At this layer, 
 transitions from an  expression \lstinline|ev| $v$ are captured through the automaton's transition function $\delta(v)$, which leads to a (possibly) new automaton state and updates the accumulator value:
$\mathcal{Q} \mapsto \powerset(\mathbb{Z} \times Env)
\xrightarrow{\delta(v)} 
\mathcal{Q} \mapsto \powerset(\mathbb{Z} \times Env) 
$. 
For the \lstinline{auction} example, when an execution iterates \lstinline|bid| 42 times, there is an event trace prefix $1^{42}$, then the following lists some of the effects at body of \lstinline{bid} per each $q$:
{
\setlength{\abovedisplayskip}{6pt}
\setlength{\belowdisplayskip}{6pt}
\[\begin{array}{l}
q_0 \mapsto \{ (\{1\}, (\mtti\hasVal{}1, \mttmax\hasVal{}1, \mttf\hasVal{}\_)), 
    (\{1,1\}, (\mtti\hasVal{}2, \mttmax\hasVal{}2, \mttf\hasVal{}\_)), \ldots \}, \;\;\;
q_1 \mapsto \emptyset, \;\;\;
q_{err} \mapsto \emptyset.
\end{array}\]
}
Above $q_1$ is not reachable yet because at the point when the program reaches location {\circled{c}}, at least one close (``2'') event must have been emitted. Similarly $q_{err}$ is not yet reachable.
$q_0$ is, however, reachable with event sequences of the form $1^k$ and in the corresponding environment $\mtti$ will be equal to $k$.


\paragraph{Abstract relations with the accumulator.}
Thus far we associate event sequence and environment pairs per control state, but 
there are still infinite sets of pairs.
We thus next abstract \emph{relations} between the accumulator values at location $q$ and the environments, employing a parametric abstract domain of base refinement types. That is, the type system provides abstractions of program values, which we can then also relate to abstractions of the accumulator. 
We will discuss the formal details of this abstraction in Sec.~\ref{sec:acc-autom-eff} but illustrate the abstraction in the bottom of Fig.~\ref{fig:overview2}.
For every location {\circled{i}} and automaton state $q_j$, we compute a summary of the possible trace prefixes and corresponding abstraction of the program variables, accumulator, and relations between them. 
In this example, at the \lstinline{ev 1} location denoted \circled{b}, our summary for
$q_0^\text{\circled{b}}$ reflects that the number of bid (1) events in the prefix counted by accumulator \lstinline{bid} is equal to the environment variable $\mtti$, and that $\mtti$ is positive. No other automaton states are reachable.
Meanwhile, at the \lstinline{ev 3} location denoted \circled{r}, our summary for
$q_1^\text{\circled{r}}$ reflects that the number of refund (3) events seen in the prefix so far is $\mttk-1$ away from the number of bid (1) events, and that $\mttk \geq 2$.
The automaton specifies if ever this is violated it will transition to $q_{\mathit{err}}$. The program is safe because at every location \circled{i}, we compute $q_{\mathit{err}}^\text{\circled{i}} \mapsto \bot$.  Our accumulator automata can also include assertions that can be applied at the end of a trace. In this example, we would like to prove that the number of refunds was one less than the number of bids. We also compute abstractions at the final program location denoted \circled{f}, including the fact that
\lstinline{bids=rfds+1} (or \lstinline{bids=rfds=0}), which validates the end-of-program assertion.

\newcommand\monoacc[1]{\phi^{mono}_{acc}(\textrm{\lstinline{#1}})}
\newcommand\auctacc[1]{\phi^{auct}_{acc}(\textrm{\lstinline{#1}})}

\subsection{Type System, Inference, Evaluation}

Our approach to verifying effects is fully automated. Toward achieving this, the rest of this paper addresses the challenges identified in Sec.~\ref{sec:intro}, but here with more detail in the context of this example:

\emph{Accumulative type and effect system} (Sec.~\ref{sec:types-and-effects}). In order to form relations between reachable automaton configurations' accumulator and program variables, we present a novel dependent type and effect system that is \emph{accumulative} in nature. 
 The type system allows us to, for example, express judgments on the $(\evkw ~ \textrm{\lstinline{3}})^\textrm{\circled{r}}$ expression to ensure that the count of bid ($\textrm{\lstinline{2}}$) events is at least one more than the count of refund ($\textrm{\lstinline{3}}$) events. 

First, let $\auctacc{k}$ be shorthand for $\textrm{\lstinline|bids|} = \textrm{\lstinline|rfds| + \lstinline|k| - \lstinline{1}} \wedge \textrm{\lstinline|k|} >= 2$
, i.e., that the accumulator count of bids is equal to the accumulator count of refunds plus program variable $\textrm{\lstinline{k}}$ minus one, and that the value of $\textrm{\lstinline{k}}$ is at least 2. 
Further, due to the nested construction of delayed refund calls with decreasing arguments, when we reach $(\evkw ~ \textrm{\lstinline{3}})$, we have that $k \leadsto (k - 1)$. 
We thus obtain the judgment below.
We focus on the boxed area, in which we compute the abstract effect concatenation operation denoted $\odot$. This concatenation is between $q_1$'s existing effect in the context along with path condition $\textrm{\lstinline{k>1}}$, and the $g$uard/$u$pdate for a next single refund event (3).
\[
\begin{array}{l}
\Gamma ; [ q_0 \mapsto \bot, q_1 \mapsto \auctacc{k+1}, q_{err} \mapsto \bot, \ldots ]\\
\vdash
\termkw{ev}\; 3 :\tandeff{()}{
\left[
  \begin{array}{l}
    q_0 \mapsto \bot, \\
    q_1 \mapsto 
    \boxed{[\auctacc{k+1} \wedge (\textrm{\lstinline{k > 1}})] \odot [g: \textrm{\lstinline{bids}} > \textrm{\lstinline{rfds + 1}}]; [u: \textrm{\lstinline{rfds}} := \textrm{\lstinline{rfds + 1}}]},\\
   q_{err} \mapsto \bot
   \end{array} \right]
} \end{array}
\]
The context information $\auctacc{k+1}$ is strengthened by the constraints on the program variable \lstinline{k} imposed by the branching condition and this is sufficient to ensure the validity of the transition guard $\textrm{\lstinline{bids}} > \textrm{\lstinline{rfds + 1}}$.
The update of the accumulator $\textrm{\lstinline{rfds}}$ to $\textrm{\lstinline{rfds+1}}$ reestablishes $\auctacc{k}$ at $q_1$.
Moreover, the result of the concatenation guarantees that $q_{err}$ remains unreachable.


%
%
%


\emph{Effect abstract domain} (Sec.~\ref{sec:acc-autom-eff}). We formalize the effect abstract domain discussed above.

\emph{Automated inference of effects} (Sec.~\ref{sec:inference}). 
We introduce a dataflow abstract interpretation inference of types that 
calculates summaries of effects, organized around concrete automaton control states, as seen in the example in Fig.~\ref{fig:overview2}.  To achieve this, we exploit the parametricity of type systems (like~\cite{DBLP:journals/pacmpl/PavlinovicSW21}) over the kinds of constructs in the language, introducing \emph{sequences} as a new base type. We then embed sequences into the $q$-indexed effect components. 

\emph{Verification, Implementation \& Benchmarks} (Sec.~\ref{sec:implementation}).
To verify examples like \lstinline|auction| (and others among the \expBenchCount{} benchmarks), we have implemented our 
(i) abstract effect domain, (ii) accumulative type and effect system and (iii) automated inference in a new tool called \evdrift{}. 
\evdrift{} takes, as input, the program in an OCaml-like language (Fig.~\ref{fig:overview2}) as well as a symbolic accumulation automaton, written in a simple specification language (control states and the accumulator are integers and the automaton transition function is given by \evdrift{} expressions).
In Sec.~\ref{sec:implementation} we discuss how 
our inference is used for verification,
and implement the \removed{na\"{i}ve }product reductions to compare against tools for effect-free programs.

\emph{Evaluation} (Sec.~\ref{sec:evaluation}). 
\evdrift{} verifies \lstinline|auction| in \expBestAuctionEVDrift{}s,
whereas previous assertion-verifiers (combined with our translations) \rvremoved{either timeout (\drift{}) or }\rvadded{either took significantly longer to verify it (\expBestAuctionRethfl{}s for \rethfl), timeout (\rcaml{}) or fail to verify (\drift{} and \mochi{})}. More generally, \evdrift{} verifies more examples and otherwise outperforms \drift{}, \rcaml{}, \mochi{}, and \rethfl{} by \expSpeedupEVoverDrift{}$\times$,
\expSpeedupEVoverRcaml{}$\times$,
\expSpeedupEVoverRealMochi{}$\times$, 
\expSpeedupEVoverRethfl{}$\times$ resp.
on benchmarks that each solve.

\section{Preliminaries}
\label{sec:preliminaries}

We briefly summarize background definitions and notation.
The formal development of our approach uses an idealized language based on a lambda calculus with terms
$\term \in \cterms ::= {} \const ~|~ x 
~|~ \termite{\term}{\term}{\term} \allowbreak
~|~ \lambda x.\,\term 
~|~ (\term ~ \term)
~|~ \evkw ~ \term$ and 
values 
$
\cval \in \cvalues ::= {} \const ~|~ \lambda x.\,\term
$.
Expressions $\term$ in the language consist of constant values $\const \in \consts$ (e.g. integers and Booleans), variables $x \in
\vars$, function applications, lambda abstractions, conditionals, and event expressions $\evkw~\term_1$. We assume the existence of a dedicated unit value $\unitval \in \consts$ and the Boolean constants $\termtrue,\termfalse \in \consts$. Values $\cval \in \cvalues$ consist of constants and lambda abstractions.
We will often treat expressions as equal modulo alpha-renaming
and write $\term[\term'/x]$ for the term obtained by substituting all free occurrences of $x$ in $\term$ with term $\term'$ while avoiding variable capturing. We further write $\fv(\term)$ for the set of free variables occurring in $\term$.

A \emph{value environment} $\cenv$ is a total map from variables to values: $\cenv \in \cenvs \Def= \vars \to \cvalues$.

The operational semantics of the language is defined with respect to a transition relation over configurations $\pair{\term}{\ceff} \in \cterms \times \cvalues^*$ where $\term$ is a closed expression representing the continuation and $\ceff$ is a sequence of values representing the events that have been emitted so far. All configurations are considered initial and configurations $\pair{\cval}{\ceff}$ are terminal. To simplify the reduction rules, we use evaluation contexts $E$ that specify evaluation order:
$
  E ::= [] ~|~ E \; e \;|\; v \; E \;|\; \termite{E}{\term_1}{\term_2} \;|\; \evkw E
$.
The transition relation $\step{\term,\ceff}{\term',\ceff'}$ is then defined in \cref{fig:operational-semantics}. In particular, the rule \ruleref{e-ev} captures the semantics of event expressions: the evaluation of $\evkw \cval$ returns the unit value and its effect is to append the value $\cval$ to the event sequence $\ceff$. We write $\pair{\term}{\ceff} \leadsto \pair{\term'}{\ceff'}$ to mean that $\pair{\term}{\ceff} \rightarrow^* \pair{\term'}{\ceff'}$ and there exists no $\pair{\term''}{\ceff''}$ such that $\step{\term',\ceff'}{\term'',\ceff''}$.

{
 \setlength{\abovedisplayskip}{6pt}
 \setlength{\belowdisplayskip}{6pt}
\begin{figure}[t]
\vspace{-0.8\baselineskip}

\begin{mathpar}
  \axiomHtop{e-app}
  {\step{(\lambda x.\term) \;v, \pi}{\term[v/x], \pi}}
  \and
  \axiomHtop{e-ev}
  {\step{\evkw \cval, \pi}{\unitval, \pi \cdot \cval}}
  \and
  \inferH{e-context}
  {\step{\term, \pi}{\term', \pi'}}
  {\step{E[\term],\pi}{E[\term'], \pi'}}
  \and
  \axiomHtop{e-ite-true}
  {\step{\termite{\termtrue}{\term_1}{\term_2},\pi}{\term_1, \pi}}
  \and
  \axiomHtop{e-ite-false}
  {\step{\termite{\termfalse}{\term_1}{\term_2},\pi}{\term_2, \pi}}
\end{mathpar}
\caption{Reduction rules of operational semantics.\label{fig:operational-semantics}}
\end{figure}
}

\paragraph{(Non-accumulative) type and effect systems}

Conventional type and effect systems~\cite{DBLP:conf/popl/LucassenG88} typically take the form $\Gamma \vdash e : \tau\&\eff$ and capture the local effects that occur during the evaluation of expression $e$ to value $v$. 
Such systems have also been extended to the setting of higher-order programs~\cite{DBLP:conf/lics/Nanjo0KT18,DBLP:journals/jfp/SkalkaSH08,DBLP:conf/aplas/SkalkaS04}.
While these systems are generally suitable to deductive reasoning,
the judgements assume no information describing the program's behavior up to the evaluation of the respective expression. They thus 
fail to provide contextual reasoning for effects and so they suffer from a  lack of precision and increase the difficulty of automation.



\section{Accumulative Type and Effect System}
\label{sec:types-and-effects}

In this section, we present an abstract formalization of our dependent type and effect system for checking accumulative effect safety properties. The notion is parameterized by the notion of basic refinement types, which abstract sets of constant values, and the notion of dependent effects, which abstract sets of event sequences. Both abstractions take into account the environmental dependencies of values and events according to the context where they occur in the program. To facilitate the static inference of dependent types and effects, we formalize these parameters in terms of abstract domains in the style of abstract interpretation.

\paragraph{Base refinement types} We assume a lattice of base refinement types
$\langle \bdom, \bord, \bbot, \btop, \bjoin, \bmeet \rangle$.
Intuitively, a basic refinement type $\bval \in \bdom$ represents a set of pairs $\pair{c}{\cenv}$ where $c \in \consts$ and $\cenv \in \cenvs$ is a value environment capturing $c$'s environmental dependencies. To formalize this intuition, we assume a \emph{concretization function} $\bgamma \in \bdom \to \powerset(\cvalues \times \cenvs)$. We require that $\bgamma$ is monotone and top-strict (i.e., $\bgamma(\btop) = \cvalues \times \cenvs$). We assume the existence of a basic refinement type $\mathit{bool}$ such that $\bgamma(\mathit{bool}) = \{\termtrue,\termfalse\} \times \cenvs$.

We let $\dom(\bval)$ denote the set of variables $x \in \vars$ that are constrained by $\bval$. Formally:
\[\dom(\bval) = \pset{x \in \vars}{\exists v,\cenv,\cenv'. \pair{v}{\cenv} \in \bgamma(\bval) \notni \pair{v}{\cenv'} \land \cenv(x) \neq \cenv'(x) \land \cenv[x \mapsto \cenv'(x)] = \cenv'(x)}\enspace.\]
Examples of possible choices for $\bdom$ include \added{base types of the shape $\bval = \{\nu:t \mid \varphi\}$ where $t$ is a simple type like $\mathsf{int}$ and $\varphi$ a value in a standard} relational abstract domain such as octagons and polyhedra \added{that relates $\nu$ with the variables in the environment}. For instance, when considering the polyhedra domain, basic refinement types can represent values subject to a system of linear constraints, such as the following, where $x, y, z$ are the variables evaluated in the environments: 
\begin{align*}
    \bval = \{\nu: \mathsf{int} \mid x + y + z \leq \nu \wedge x-y \leq 0 \wedge y + z \leq 2x\}\enspace.
\end{align*}
\added{Note that the set notation in the example is just syntactic sugar. The value $\bval$ is not actually a set, but an element of $\bdom$ that denotes a set (of pairs) under $\bgamma$.}


\paragraph{Dependent effects}
Let $\langle \effdom, \efford, \effjoin, \effmeet, \effbot, \efftop \rangle $ denote a lattice of dependent effects. Similar to basic refinement types, a dependent effect $\eff \in \effdom$ represents a set of pairs $\pair{\ceff}{\cenv}$ where $\ceff$ is a trace and $\cenv$ captures its environmental dependencies. Again, we formalize this by assuming a monotone and top-strict function $\effgamma \in \effdom \to \powerset(\ceffs \times \cenvs)$. Similar to basic types, we denote by $\dom(\eff)$ the set of variables that are constrained by $\eff$.
We assume some additional operations on our abstract domains for dependent types and effects that we will introduce below.

\paragraph{Types}


With basic refinement types and dependent effects in place, we define our types as follows:
\begin{align*}
  \tval \in \tvalues ~::=~ {} & \bval ~\mid~ \tfun{x}{\tval_2}{\eff_2}{\tval_1}{\eff_1} ~\mid~ \texists{x}{\tval_1}{\tval_2} \enspace.
\end{align*}
Intuitively, a function type $\tfun{x}{\tval_2}{\eff_2}{\tval_1}{\eff_1}$ describes functions that take an input value $x$ of type $\tval_2$ and a prefix trace described by $\eff_2$ such that evaluating the body $\term$ produces a result value of type $\tval_1$ and extends the prefix trace to a trace described by $\eff_1$. Type refinements in $\tval_1$ may depend on $x$. Existential types $\texists{x}{\tval_1}{\tval_2}$ represent values of type $\tval_2$ that depend on the existence of a witness value $x$ of type $\tval_1$.

We lift the function $\dom$ from basic types and effects to types in the expected way:
\begin{align*}
\dom(\tfun{x}{\tval_2}{\eff_2}{\tval_1}{\eff_1}) = {} & \dom(\tval_2) \cup ((\dom(\eff_2) \cup \dom(\tval_1) \cup \dom(\eff_1)) \setminus \set{x})\\
\dom(\texists{x}{\tval_1}{\tval_2}) = {} & \dom(\tval_1) \cup (\dom(\tval_2) \setminus \set{x})
\end{align*}

We also lift $\bgamma$ to a concretization function $\tgamma \in \tvalues \to \powerset(\cvalues \times \cenvs)$ on types:
\begin{align*}
& \tgamma(\bval) = \bgamma(\bval)\\
& \tgamma(\tfun{x}{\tval_1}{\eff_1}{\tval_2}{\eff_2}) = \cvalues \times \cenvs \\
& \tgamma(\texists{x}{\tval_1}{\tval_2}) = \pset{\pair{\cval}{\cenv}}{\pair{\cval'}{\cenv} \in \tgamma(\tval_1) \land \pair{\cval}{\cenv[x \mapsto \cval']} \in \tgamma(\tval_2)}\enspace.
\end{align*}
Note that the function $\tgamma$ uses a coarse approximation of function values. The reason is that we will use $\tgamma$ to give meaning to typing environments, which we will in turn use to define what it means to strengthen a type with respect to dependencies expressed by a given typing environment. When strengthening with respect to a typing environment, we will only track dependencies to values of base types, but not function types. 

We define typing environments $\tenv$ as binding lists between variables and types:
$\tenv ::= {} \; \varnothing ~\mid~ \tenv,x:\tval$.
We extend $\dom$ to typing environments as:
$\dom(\varnothing)=\emptyset$ and $\dom(\tenv,x:\tval) = \dom(\tenv) \cup \set{x}$.
We then impose a well-formedness condition $\wf(\tenv)$ on typing environments. Intuitively, the condition states that bindings in $\tenv$ do not constrain variables that are outside of the scope of the preceding bindings in $\tenv$:
\begin{mathpar}
\axiomHtop{wf-emp}
{\wf(\varnothing)}
\and
\inferH{wf-bind}
{\wf(\tenv) \\
  \dom(\tval) \subseteq \dom(\tenv) \\
  x \notin \dom(\tenv)}
{\wf(\tenv,x:\tval)}
\end{mathpar}
If $\wf(\tenv)$ and $x \in \dom(\tenv)$, then we write $\tenv(x)$ for the unique type bound to $x$ in $\tenv$.

As previously mentioned, we lift $\tgamma$ to a concretization function for typing environments:
\[\tgamma(\varnothing) = \cenvs \qquad \tgamma(\tenv, x: \tval) = \tgamma(\tenv) \cap \pset{\cenv}{\exists \cval.\, \pair{\cval}{\cenv} \in \tgamma(\tenv(x))}\enspace.\]

\paragraph{Typing judgements}

Our type system builds on existing refinement type systems with semantic subtyping~\cite{DBLP:conf/plpv/KnowlesF09,DBLP:journals/pacmpl/BorkowskiVJ24}. Subtyping judgements take the form $\subtjudge{\tenv}{\tval_1}{\tval_2}$ and are defined by the rules in \cref{fig:subtyping}. We implicitly restrict these judgments to well-formed typing environments.

The rule \ruleref{s-base} handles subtyping on basic types by reducing it to the ordering $\bord$. Importantly, the basic type $\bval_1$ on the left side is strengthened with the environmental dependencies expressed by $\tenv$. To this end, we assume the existence of an operator $\tstrengthen{\bval}{\tenv}$ that satisfies the following specification:
\[\bgamma(\tstrengthen{\bval}{\tenv}) \;\supseteq\; \bgamma(\bval) \cap (\cvalues \times \tgamma(\tenv))\enspace.\]
We require this operator to be monotone in both arguments where $\tenv \leq \tenv'$ iff for all $x \in \dom(\tenv')$, $\tenv(x) = \tenv'(x)$.
We also assume a strengthening operator $\tstrengthen{\eff}{\tenv}$ on effects with corresponding assumptions.

The rule \ruleref{s-fun} handles subtyping of function types. As expected, the input type and effect are ordered contravariantly and the output type and effect covariantly. Note that we allow the input effect to depend on the parameter $x$.

The rule \ruleref{s-exists} introduces existential types on the left side of the subtyping relation whereas \ruleref{s-wit} introduces them on the right side. The latter rule relies on an operator $\tval[y/x]$ that expresses substitution of the dependent variable $x$ in type $\tval$ by the variable $y$. This operator is defined by lifting corresponding substitution operators $\bval[y/x]$ on basic types and $\eff[y/x]$ on effects in the expected way. The soundness of these operators is captured by the following assumption:
\begin{align*}
  \bgamma(\bval[y/x]) \supseteq {} & \pset{\pair{\cval}{\cenv[x \mapsto \rho(y)]}}{\pair{\cval}{\cenv} \in \bgamma(\bval)}\\
  \effgamma(\eff[y/x]) \supseteq {} & \pset{\pair{\ceff}{\cenv[x \mapsto \rho(y)]}}{\pair{\ceff}{\cenv} \in \effgamma(\eff)}
\enspace.
\end{align*}

\begin{figure}
\begin{mathpar}
\inferHtop{s-base}
{\tstrengthen{\btau_1}{\tenv} \bord \btau_2}
{\subtjudge{\tenv}{\bval_1}{\bval_2}}
\and
\inferHtop{s-wit}
{\subtjudge{\tenv}{\tval'}{\tval} \\
  \subtjudge{\tenv,y:\tval'}{\tval_1}{\tval_2[y/x]}}
{\subtjudge{\tenv,y:\tval'}{\tval_1}{\texists{x}{\tval}{\tval_2}}}
\and
\inferHtop{s-exists}
{ \subtjudge{\tenv,x:\tval}{\tval_1}{\tval_2}}
{\subtjudge{\tenv}{\texists{x}{\tval}{\tval_1}}{\tval_2}}
\and
\inferH{s-fun}	
{\subtjudge{\tenv}{\tval_{2}'}{\tval_{2}} \\
  \tstrengthen{\eff_{2}'}{\tenv,x: \tval_2'} \efford \eff_{2} \\
  \subtjudge{\tenv,x: \tval_{2}'}{\tval_{1}}{\tval_{1}'} \\
  \tstrengthen{\eff_{1}}{\tenv,x: \tval_2'} \efford \eff_{1}' }
{ \subtjudge{\tenv}{(\tfun{x}{\tval_{2}}{\eff_{2}}{\tval_{1}}{\eff_{1}})}{(\tfun{x}{\tval_{2}'}{\eff_{2}'}{\tval_{1}'}{\eff_{1}'})}}
\end{mathpar}
\caption{Semantic subtype relation.\label{fig:subtyping}}
\end{figure}

Typing judgments take the form $\tjudge{\tenv ; \eff}{\term}{\tval}{\eff'}$ and are defined by the rules in \cref{fig:typing}\footnote{\added{All auxiliary operators occurring in the typing rules such as $\odot$ are defined below.}}. Intuitively, such a judgement states that under typing environment $\tenv$, expression $\term$ extends the event sequences described by effect $\eff$ to the event sequences described by effect $\eff'$ and upon termination, produces a value described by type $\tval$. Again, the typing environments occurring in typing judgements are implicitly restricted to be well-formed. Moreover, we implicitly require $\dom(\eff) \subseteq \dom(\tenv)$.

\begin{figure}
\begin{mathpar}
\vspace{-0.5em}      
\axiomHtop{t-const}
{\tjudge{\tenv;\eff}{c}{\{\nu = c\}}{\eff}}
\and 
\axiomHtop{t-var}
{\tjudge{\tenv;\eff}{x}{\tenv(x)}{\eff}}
\and 
\inferH{t-ev}
{\Gamma\;;\;\eff \vdash \term:\tandeff{\tau}{\eff'}}
{\tjudge{\tenv;\eff}{\termkw{ev}~\term}{\{\nu = \unitval\}}{(\eff' \odot \tau)}}
\and
\vspace{-.5em}      
\inferH{t-abs}
{\Gamma, x:\tau_2\;;\;\eff_2 \vdash e: \tandeff{\tau_1}{\eff_1}}
{\Gamma\;;\;\eff \vdash \lambda x.\term:\tandeff{(x:\tandeff{\tau_2}{\eff_2} \to \tandeff{\tau_1}{\eff_1})}{\eff}}
\and
\vspace{-.5em}      
\inferH{t-app}
{\tjudge{\tenv ; \eff}{\term_1}{\tval_1}{\eff_1} \\
  \tjudge{\tenv ; \eff_1}{\term_2}{\tval_2}{\eff_2} \\
  \tval_1 = \tfun{\absn}{\tval_2}{\eff_2}{\tval}{\eff'}}
{\tjudgesimp{\tenv ; \eff}{\term_1 \; \term_2}{\exists x: \tau_2.\,(\tandeff{\tval}{\eff'})}}
\and 
\vspace{-.5em}      
\inferH{t-weaken}
{\tstrengthen{\eff}{\tenv} \efford \effp \\
  \tjudge{\tenv ; \effp}{e}{\tval'}{\effp'}\\
  \subtjudge{\tenv}{\tval'}{\tval}\\
\tstrengthen{\effp'}{\tenv} \efford \eff'}
{\tjudge{\tenv ; \eff}{e}{\tval}{\eff'}}
\and 
\vspace{-.5em}      
\inferH{t-cut}
{ \tjudge{\tenv;\eff}{\cval}{\tval}{\eff} \\
  x \notin \fv(\term)\\
  \tjudge{\tenv,x:\tval ; \eff}{\term}{\tval'}{\eff'}}
{\tjudgesimp{\tenv ; \eff}{\term}{\texists{x}{\tval}{(\tandeff{\tval'}{\eff'})}}}
\and
\vspace{-.5em}
\inferH{t-ite}
{\tjudge{\tenv ; \eff}{x}{\mathit{bool}}{\eff_0}\\
  \tjudge{\tenv[x = \mathsf{true}]; \eff_0}{\term_1}{\tval}{\eff'} \\
\tjudge{\tenv[x = \mathsf{false}]; \eff_0}{\term_2}{\tval}{\eff'}}
{\tjudge{\tenv;\eff}{\termkw{if}~x~\termkw{then}~\term_1~\termkw{else}~\term_2}{\tval}{\eff'}}
\end{mathpar}
\caption{Typing relation.\label{fig:typing}}
\end{figure}

The rule \ruleref{t-const} is used to type primitive values. For this, we assume an operator that maps a primitive value $c$ to a basic type $\{\nu = c\} \in \bdom$ such that $\bgamma(\{\nu = c\}) \supseteq \{c\} \times \cenvs$.

The rule \ruleref{t-ev} is used to type event expressions $\evkw \term$. For this, we assume an \emph{effect extension operator} $\eff \odot \tval$ that abstracts the extension of the traces represented by effect $\eff$ with the values represented by the type $\tval$, synchronized on the value environment:
\begin{equation}\effgamma(\eff \odot \tval) \;\supseteq\; \{\pair{\ceff \cdot \cval}{\cenv} \mid \tuple{\cval,\cenv} \in \tgamma(\tval) \land \tuple{\ceff,\cenv} \in \effgamma(\eff)\}\enspace.
\label{eq:effect-extension-soundness}
\end{equation}
We require that $\odot$ is monotone in both of its arguments.

%

The following is an example judgment for the bid event $\evkw~1$ expression in the \lstinline{auction} example from Sec.~\ref{sec:overview}:
\[
\tenv, i : \{\nu ~|~ \nu >= 1\}; [..q_1 \mapsto \mathit{bids} = i - 1 >= 0] \vdash \evkw~1 : \mathsf{unit} \& [..q_1 \mapsto \mathit{bids} = i >= 1]\
\]
The effect to the left of the turnstile describes event prefixes associated to all the executions leading to the evaluation of expression $\evkw~1$. It states that $q_1$ is the only reachable state and the accumulator $\textrm{\lstinline{bid}}$ is equal to $\textrm{\lstinline{i-1}}$.
The typing judgment states that, for all executions, the extended effect that account for a new bidding represented by the observable bid $(\textrm{\lstinline{1}})$ event, preserves the invariant between the accumulator and the program variable $i$, and that $i >= 1$ according to its type constraints. 

The rule \ruleref{t-ite} assumes without loss of generality that only variables are allowed to be used as test conditions. It is defined in terms of an environment strengthening operator $\tenv[x = c]$ for $x \in \dom(\tenv)$ defined as $\tenv[x = c](x) = \tenv(x) \bmeet \{\nu = c\}$ and $\tenv[x = c](y) = \tenv(y)$ for $y \in \dom(\tenv) \setminus \{x\}$.

The notation $\texists{x}{\tval}{(\tandeff{\tval'}{\eff})}$ used in the conclusion of rules \ruleref{t-app} and \ruleref{t-cut} is a shorthand for $\tandeff{(\texists{x}{\tval}{\tval'})}{(\texists{x}{\tval}{\eff})}$, where $\texists{x}{\tval}{\eff}$ computes the projection of the dependent variable $x$ in effect $\eff$, subject to the constraints captured by type $\tval$. That is, this operator must satisfy:
\[\effgamma(\texists{x}{\tval}{\eff}) \supseteq \pset{\pair{\ceff}{\cenv[x \mapsto \cval]}}{\pair{\ceff}{\cenv} \in \effgamma(\tstrengthen{\eff}{x:\tval})}\enspace.\]
As with our other abstract domain operators, we require this to be monotone in both $\tval$ and $\eff$.

The rule \ruleref{t-cut} allows one to introduce an existential type $\texists{x}{\tval}{\tval'}$, provided one can show the existence of a witness value $\cval$ of type $\tval'$ for $x$. In other dependent refinement type systems, this rule is replaced by a variant of rule \ruleref{s-wit} as part of the rules defining the subtyping relation. We use the alternative formulation to avoid mutual recursion between the subtyping and typing rules.

The remaining rules are as expected. In particular, the rule \ruleref{t-weaken} allows one to weaken a typing judgement using the subtyping relation (and ordering on effects), relative to the given typing environment.

\paragraph{Soundness} We prove the following soundness theorem. Intuitively, the theorem states that (1) well-typed programs do not get stuck and (2) the output effect established by the typing judgement approximates the set of event traces that the program's evaluation may generate.

\begin{theorem}[Soundness]
  \label{thm:soundness}
  If $\tjudge{\eff}{\term}{\tval}{\eff'}$ and $\pair{\ceff}{\cenv} \in \effgamma(\eff)$, then $\pair{\term}{\ceff} \leadsto \pair{\term'}{\ceff'}$ implies $\term' \in \cvalues$ and $\pair{\ceff'}{\cenv} \in \effgamma(\eff')$.
\end{theorem}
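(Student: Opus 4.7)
The statement packages together standard progress and preservation style claims adapted to the accumulative effect setting, where the assumption $\pair{\ceff}{\cenv}\in\effgamma(\eff)$ links the operational trace-so-far to the incoming effect in the typing judgement. My plan is to establish the theorem by iterating a one-step soundness lemma:

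\textbf{One-step soundness.} If $\tjudge{\tenv;\eff}{\term}{\tval}{\eff'}$ with $\wf(\tenv)$ and $\pair{\ceff}{\cenv}\in\effgamma(\eff)$ and $\cenv\in\tgamma(\tenv)$, then either $\term\in\cvalues$, or there exists a reduct $\pair{\term}{\ceff}\to\pair{\term''}{\ceff''}$ and an intermediate effect $\eff''$ with $\tjudge{\tenv;\eff''}{\term''}{\tval}{\eff'}$ and $\pair{\ceff''}{\cenv}\in\effgamma(\eff'')$. The final theorem then follows by induction on the length of the reduction $\pair{\term}{\ceff}\leadsto\pair{\term'}{\ceff'}$: non-value configurations always step (progress), so a terminal configuration must have $\term'\in\cvalues$, and preservation carries the effect-in-gamma invariant through to the last step, yielding $\pair{\ceff'}{\cenv}\in\effgamma(\eff')$.

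\textbf{Auxiliary lemmas.} Before proving the one-step lemma I would establish: (i) a \emph{canonical forms / inversion} lemma — a value typed at $\tfun{x}{\tval_2}{\eff_2}{\tval_1}{\eff_1}$ must (up to \ruleref{t-weaken} and \ruleref{t-cut}) come from \ruleref{t-abs}, so it has the form $\lambda x.\term_0$ whose body is typable at the argument/effect pair dictated by the function type; (ii) a \emph{weakening/strengthening} lemma asserting that the strengthening operators $\tstrengthen{-}{\tenv}$ and the soundness assumptions on $\bgamma,\effgamma$ cooperate with the ordering on $\bdom$ and $\effdom$; and (iii) a \emph{substitution} lemma: if $\tjudge{\tenv, x:\tval_2;\eff_2}{\term_0}{\tval_1}{\eff_1}$ and a value $\cval$ is such that $\pair{\cval}{\cenv}\in\tgamma(\tval_2)$, then $\tjudge{\tenv;\eff_2[\cval/x]}{\term_0[\cval/x]}{\tval_1[\cval/x]}{\eff_1[\cval/x]}$, concretized appropriately against $\cenv$. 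The specifications of $[y/x]$ and of the existential projection $\texists{x}{\tval}{\eff}$ given in the paper supply exactly what is needed to close this case.

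\textbf{Main case analysis.} One-step soundness proceeds by induction on the typing derivation. The rules \ruleref{t-const}, \ruleref{t-var}, and \ruleref{t-abs} are immediate (values, nothing to do). For \ruleref{t-app}, I would case-split on whether $\term_1$ or $\term_2$ is a value: if not, the inductive hypothesis yields a step inside an evaluation context via \ruleref{e-context}; if both are values, canonical forms gives $\term_1=\lambda x.\term_0$ and \ruleref{e-app} applies, and the substitution lemma together with \ruleref{t-cut}/\ruleref{s-wit} reconstructs the existential type in the conclusion and preserves the effect. For \ruleref{t-ev}, the reduct $\pair{\evkw\cval,\ceff}{}\to\pair{\unitval,\ceff\cdot\cval}{}$ is handled by the soundness specification of the effect extension operator, namely $\effgamma(\eff'\odot\tval)\supseteq\{\pair{\ceff\cdot\cval}{\cenv}\mid\pair{\cval}{\cenv}\in\tgamma(\tval),\pair{\ceff}{\cenv}\in\effgamma(\eff')\}$, combined with inversion on the premise typing $\term$; the result $\unitval$ gets $\{\nu=\unitval\}$ from \ruleref{t-const}. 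The \ruleref{t-weaken} case is routine given the monotonicity assumptions on $\bgamma$, $\effgamma$, $\tstrengthen{-}{-}$, and $\odot$, and \ruleref{t-cut} similarly threads through using the existential-projection specification.

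\textbf{Main obstacle.} The delicate point is the interaction of the substitution lemma with dependent effects and existentials in the \ruleref{t-app} case: one must argue that substituting a concrete value $\cval$ for $x$ in the postcondition type/effect yields an object still interpreted correctly in the ambient environment, and that wrapping the result in $\texists{x}{\tval_2}{(-)}$ recovers the stated conclusion under $\tgamma$ and $\effgamma$. Because the abstract domains are only specified via soundness inequalities rather than equations, every step that manipulates $\bgamma,\effgamma,\odot,\tstrengthen{-}{-},\texists{x}{\tval}{-}$ must go through these one-directional specifications; the bookkeeping of variable scopes and $\wf$ preservation across substitution is where the proof is most likely to require careful rephrasing, but no deep new idea beyond the abstract-domain assumptions is needed.
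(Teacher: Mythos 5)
Your overall architecture (progress plus preservation, iterated along the reduction sequence) matches the second half of the paper's proof, but you are missing its first and most important move: the paper does \emph{not} prove progress and preservation for the abstract type system. It first shows (\cref{lem:abstract-typing-gives-concrete-typing}) that every abstract typing derivation can be replayed in a \emph{concretized} instantiation of the system, where base types live in $\powerset(\cvalues\times\cenvs)$, effects in $\powerset(\ceffs\times\cenvs)$, and all operators ($\tstrengthen{\cdot}{\tenv}$, $\odot$, $\texists{x}{\tval}{\cdot}$) are the \emph{precise} set-theoretic ones. Progress and preservation are then proved only for that concrete system. The paper explicitly remarks that doing this directly at the abstract level ``would require stronger assumptions on the abstract domain operations,'' and your proposal runs into exactly those obstacles.

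Concretely, two steps of your plan do not go through with the stated interface. First, your substitution lemma substitutes a concrete value $\cval$ for $x$ inside abstract types and effects ($\tval_1[\cval/x]$, $\eff_1[\cval/x]$); the abstract domains only provide variable-for-variable substitution $\tval[y/x]$ and the existential projection $\texists{x}{\tval}{\eff}$, both specified by one-directional $\supseteq$ inequalities. The paper's substitution lemma (\cref{lem:substitution}) avoids this entirely by keeping the binding $x:\tval_v$ in the typing environment and substituting only in the term, then using \ruleref{t-cut} to reintroduce the existential; this manoeuvre is only available once types are concrete sets. Second, your preservation step must exhibit an \emph{abstract} intermediate effect $\eff''$ with $\tjudge{\eff''}{\term''}{\tval}{\eff'}$ and $\pair{\ceff''}{\cenv}\in\effgamma(\eff'')$; in the concrete powerset domain such an element always exists (the paper's \cref{lem:values} even lets it assume value typings leave the effect literally unchanged), but in an arbitrary lattice $\effdom$ with only monotone, sound-from-above operators there is no guarantee that a suitable $\eff''$ exists, nor that inversion through \ruleref{t-weaken} preserves the membership facts you need. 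To repair your proof you should insert the concretization phase up front and then carry out your case analysis in the concrete system, which is essentially what the paper does.
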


The soundness proof details are available in \apxref{Apx.~\ref{sec:soundness}}{the extended version~\cite{Nicola2025}}, but we summarize here.
The proof of \cref{thm:soundness} proceeds in two steps. We first show that any derivation of a typing judgement $\tjudge{\eff}{\term}{\tval}{\eff'}$ can be replayed in a concretized version of the type system where basic types are drawn from the concrete domain $\powerset(\cvalues \times \cenvs)$ and effects from the concrete domain $\powerset(\ceffs \times \cenvs)$ (i.e., both $\bgamma$ and $\effgamma$ are the identity on their respective domain). Importantly, in this concretized type system all operations such as strengthening $\tstrengthen{\tval}{\tenv}$ and effect extension $\eff \odot \tval$ are defined to be precise. That is, we have e.g.
$\eff \odot \tval \Def= \{\pair{\ceff \cdot \cval}{\cenv} \mid \tuple{\cval,\cenv} \in \tval \land \tuple{\ceff,\cenv} \in \eff\}\enspace$.
In a second step, we then show standard progress and preservation properties for the concretized type system.

While one could prove progress and preservation directly for the abstract type system, this would require stronger assumptions on the abstract domain operations. By first lowering the abstract typing derivations to the concrete level, the rather weak assumptions above suffice.

\section{Automata-Based Dependent Effects Domain}
\label{sec:acc-autom-eff}

In this section, we introduce an automata-based dependent effects domain $\effdom_\mnfa$ \added{that can be used to instantiate the domain of dependent effects $\effdom$ assumed by our type and effect system presented in \cref{sec:types-and-effects}. The domain is parametric in an automaton} 
$\mnfa$ that specifies the property to be verified for a given program. That is, the dependent effects domain is designed to support solving the following verification problem: given a program, show that the prefixes of the traces generated by the program are disjoint from the language recognized by $\mnfa$. To this end, the abstract domain tracks the reachable states of the automaton: each time the program emits an event, $\mnfa$ advances its state according to its transition relation. The set of automata states is in general infinite, so we abstract $\mnfa$'s transition relation by abstract interpretation. The abstraction takes into account the program environment at the point where the event is emitted, thus, yielding a domain of \emph{dependent} effects.

\subsection{Symbolic Accumulator Automata}

Our automaton model is loosely inspired by the various notions of (symbolic) register or memory automata considered in the literature \cite{DBLP:journals/tcs/KaminskiF94,DBLP:conf/cp/BeldiceanuCP04,DBLP:conf/cav/DAntoniFS019}. A \emph{symbolic accumulator automaton (SAA)} is defined over a potentially infinite alphabet and a potentially infinite data domain. In the following, we will fix both of these sets to coincide with the set of primitive values $\cvalues$ of our object language.
Formally, an SAA is a tuple $\mnfa=\tuple{Q, \Delta, \tuple{q_0,a_0},F}$. 
We specify the components of the tuple on-the-fly as we define the semantics of the automaton.

A state $\tuple{q,\vacc}$ of $\mnfa$ consists of a control location $q$ drawn from the finite set $Q$ and a value $\vacc \in \cvalues$ that indicates the current value of the accumulator register. The pair $\tuple{q_0,a_0}$ with $q_0 \in Q$ and $a_0 \in \cvalues$ specifies the initial state of $\mnfa$. The set $F \subseteq Q$ is the set of final control locations.

The symbolic transition relation $\Delta$ denotes a set of transitions $\tuple{q, \vacc} \xrightarrow{\cval} \tuple{q', \vacc'}$ that take a state $\tuple{q,\vacc}$ to a successor state $\tuple{q',\vacc'}$ under input symbol $\cval \in \cvalues$. The transitions are specified as a finite set of symbolic transitions $\tuple{q, \guard, \upd, q'} \in \Delta$, written $q \xrightarrow{\{\guard\}\upd} q'$, where $\guard \in \guards$ is a \emph{guard} and $\upd \in \updates$ an \emph{(accumulator) update}. Both guards and updates can depend on the input symbol $\cval$ consumed by the transition and the accumulator value $\vacc$ in the pre state, allowing the automaton to capture non-regular properties and complex program variable dependencies. We make our formalization parametric in the choice of the languages that define the sets $\guards$ and $\updates$\footnote{\added{In our implementation, we use integer arithmetic expressions for $\updates$ and conjunctions of (in)equality predicates for $\guards$.}}. To this end, we assume denotation functions $\den{\guard}(\cval,\vacc) \in \BB$ and $\den{\upd}(\cval,\vacc) \in \cvalues$ that evaluate a guard $\guard$ to its truth value, respectively, an update $\upd$ to the new accumulator value. We then have $\tuple{q,\vacc} \xrightarrow{\cval} \tuple{q',\vacc'}$ if there exists $q \xrightarrow{\{\guard\}\upd} q' \in \Delta$ such that $\den{\guard}(\cval,\vacc)=\mathtt{true}$ and 
$\den{\upd}(\cval,\vacc)=\vacc'$.
We require that $\Delta$ is such that this transition relation is total.
For $\ceff \in \cvalues^*$, we denote by $\tuple{q,\vacc} \xrightarrow{\ceff}\!\!{}^* \tuple{q',\vacc'}$ the reflexive transitive closure of this relation 
and define the \emph{semantics of a state} as the set of traces that reach that state:
\[\den{\tuple{q,\vacc}} = \pset{\ceff}{\tuple{q_0,a_0} \xrightarrow{\ceff}\!\!{}^* \tuple{q,\vacc}}\enspace.\]
With this, the language of $\mnfa$ is defined as
\[\lang(\mnfa) = \bigcup \pset{\den{\tuple{q,\vacc}}}{q \in F}\enspace.\]
Intuitively, $\lang(\mnfa)$ is the set of all \emph{bad} prefixes of event traces that the program is supposed to avoid.

\subsection{Automata-Based Dependent Effects Domain}

We now describe the domain $\effdom_\mnfa$. For the remainder of this section, we fix an SAA $\mnfa$ and omit subscript $\mnfa$ for $\effdom_\mnfa$ and all its operations. 

\added{
\paragraph{Concrete automata domain of dependent effects}
Recall from \cref{sec:types-and-effects} that a dependent effect domain $\effdom$ represents a sublattice of $\powerset(\cvalues^* \times \cenvs)$. Since the states of $\mnfa$ represents sets of event traces, a natural first step to define such a sublattice is to pair off automaton states with value environments: $\effdom_C = \powerset(Q \times \cvalues \times \cenvs)$. 

The corresponding concretization function $\effgamma_C:\effdom_C \to \powerset(\cvalues^* \times \cenv)$ is given by:
\[\effgamma_C(\eff_C) = \bigcup_{\tuple{q,\vacc,\cenv} \in \eff_C} \pset{\tuple{\ceff,\cenv}}{\ceff \in \den{\tuple{q,\vacc}}} \enspace.\]
Since $\effgamma_C$ is defined element-wise on $\effdom_C$, it is easy to see that it is monotone and preserves arbitrary meets. It is therefore the upper adjoint of a Galois connection between $\powerset(\cvalues^* \times \cenvs)$ and $\effdom_C$. Let $\effalpha_C$ be the corresponding lower adjoint, which is uniquely determined by $\effgamma_C$.

The operations on the dependent effect domain $\effdom_C$ are then obtained calculationally as the best abstractions of their concrete counterparts. In particular, we define:
\begin{align*}
\eff_C \odot_C \bval = {} & \effalpha_C(\pset{\pair{\ceff \cdot \cval}{\cenv}}{\tuple{\cval,\cenv} \in \bgamma(\bval) \land \tuple{\ceff,\cenv} \in \effgamma_C(\eff_C)})\\
= {} & \pset{\tuple{q',\vacc',\cenv}}{\exists \tuple{\cval,\cenv} \in \bgamma(\bval), \tuple{q,\vacc,\cenv} \in S.\, \tuple{q,\vacc} \xrightarrow{\cval} \tuple{q',\vacc'}}
\enspace.
\end{align*}
The characterization of $\odot_C$ relies on the fact that the transition relation of the automaton is total.
Note that the soundness condition on $\odot_C$ imposed in \cref{sec:types-and-effects} is obtained by construction from the properties of Galois connections:
\[\effgamma_C(\eff_C \odot_C \bval) \supseteq \pset{\pair{\ceff \cdot \cval}{\cenv}}{\tuple{\cval,\cenv} \in \bgamma(\bval) \land \tuple{\ceff,\cenv} \in \effgamma_C(\eff_C)}\enspace.\]
The remaining operations $\tstrengthen{\eff}{\tenv}$, $\eff_C[y/x]$, and $\texists{x}{\tval}{\eff_C}$ are obtained accordingly.

\paragraph{Abstract automata domain of dependent effects.}
Since the elements $S \in \effdom_C$ can be infinite sets, the operations on $\effdom_C$ such as $\odot_C$ are typically not computable. We therefore layer further abstractions on top of $\effdom_C$ to obtain an abstract automata domain of dependent effects with computable operations.

We proceed in two steps. Firstly, we change the representation of our abstract domain elements by partitioning the elements of each $\eff_C \in \effdom_C$ based on the control location of the automaton state. That is, we switch to the effect domain $\effdom_R = Q \to \powerset(\cvalues \times \cenvs)$, ordered by pointwise subset inclusion. The corresponding concretization function $\effgamma_R \in \effdom_R \to \effdom_C$ is given by
\[\effgamma_R(\eff_R) = \pset{\tuple{q,\vacc,\cenv}}{\tuple{\vacc,\cenv} \in \eff_R(q)}\enspace.\]
Clearly, we do not lose precision when changing the representation of the elements $\eff_C \in \effdom_C$ to elements of $\effdom_R$. In fact, $\effgamma_R$ is a lattice isomorphism. Its inverse $\effalpha_R = {\effgamma_R}^{-1}$ is the lower adjoint of a Galois connection between $\effdom_C$ and $\effdom_R$.

As before, we obtain the abstract domain operations on $\effdom_R$ by defining them as the best abstractions of their counterparts on $\effdom_C$. In particular, we define 
\begin{align*}
\eff_R \odot_R \bval =  \effalpha_R (\effgamma_R(\eff_R) \odot_C \bval)
= \lambda q'.\, \pset{\tuple{\vacc',\cenv}}{\exists q', \tuple{\cval,\cenv} \in \bgamma(\bval), \tuple{\vacc,\cenv} \in \eff_R(q).\, \tuple{q,\vacc} \xrightarrow{\cval} \tuple{q',\vacc'}}\enspace.
\end{align*}

Now consider again our abstract domain of base refinement types $\langle \bdom, \bord, \bbot, \btop, \bjoin, \bmeet \rangle$ that we have assumed as a parameter of the type and effects system of \cref{sec:types-and-effects}. Recall that each element $\bval \in \bdom$ abstracts a relation between values and value environments: $\bgamma(\bval) \subseteq \powerset(\cvalues \times \cenvs)$. We can thus reuse this domain to abstract the relations $\eff_R(q)\subseteq \powerset(\cvalues \times \cenvs)$ between the reachable accumulator values at location $q$ of the automaton and the environments. This leads to the following definition of our final automata-based effect domain: $\effdom = Q \to \bdom$. The accompanying concretization function $\effgamma_\bdom \in \effdom \to \effdom_R$ is naturally obtained by pointwise lifting of $\bgamma$: $\effgamma_\bdom(\eff) = \bgamma \circ \eff$. The overall concretization function $\effgamma:\effdom \to \powerset(\cvalues^* \times \cenvs)$ is defined by composition of the intermediate concretization functions: $\effgamma = \effgamma_C \circ \effgamma_R \circ \effgamma_\bdom$.

We then define the operations on $\effdom$ in terms of the operations on $\bdom$. Again, we focus on the operator $\odot$. The remaining operations are defined similarly.

Our goal is to ensure that the overall soundness condition on $\odot $ is satisfied. We achieve this by defining $\eff \odot \bval$ such that
\begin{equation}
  \label{eq:ext-op-sound}
  \effgamma_\bdom(\eff \odot \bval) \; \supseteq \; \effgamma_\bdom(\eff) \odot_R \bval  \enspace.
\end{equation}
Assuming (\ref{eq:ext-op-sound}) the overall soundness of $\odot$ then follows by construction:
\begin{lemma}
  For all $\eff \in \effdom$ and $\bval \in \bdom$,
  $\effgamma(\eff \odot \bval) \supseteq \{\pair{\ceff \cdot \cval}{\cenv} \mid \tuple{\cval,\cenv} \in \tgamma(\bval) \land \tuple{\ceff,\cenv} \in \effgamma(\eff)\}$.
\end{lemma}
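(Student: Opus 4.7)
\medskip
\noindent\textbf{Proof proposal.} The plan is a direct diagram chase: unfold the composite concretization $\effgamma = \effgamma_C \circ \effgamma_R \circ \effgamma_\bdom$ and then chain together the three soundness conditions that have already been established for $\odot_\bdom$, $\odot_R$, and $\odot_C$, respectively. Since each intermediate concretization function is monotone (the outer ones as upper adjoints of Galois connections, $\effgamma_\bdom$ by pointwise lifting of the monotone $\bgamma$), the inclusions compose cleanly.

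First I would apply monotonicity of $\effgamma_C \circ \effgamma_R$ to the assumed inclusion (\ref{eq:ext-op-sound}), yielding
\[
\effgamma(\eff \odot \bval) \;=\; \effgamma_C(\effgamma_R(\effgamma_\bdom(\eff \odot \bval))) \;\supseteq\; \effgamma_C\bigl(\effgamma_R\bigl(\effgamma_\bdom(\eff) \odot_R \bval\bigr)\bigr).
\]
Next I would eliminate the middle layer using the key fact that $\effgamma_R$ is a lattice isomorphism with inverse $\effalpha_R$, so $\effgamma_R \circ \effalpha_R = \mathrm{id}$. Unfolding the calculational definition $\eff_R \odot_R \bval = \effalpha_R(\effgamma_R(\eff_R) \odot_C \bval)$ with $\eff_R := \effgamma_\bdom(\eff)$, this collapses to
\[
\effgamma_R\bigl(\effgamma_\bdom(\eff) \odot_R \bval\bigr) \;=\; \effgamma_R(\effgamma_\bdom(\eff)) \odot_C \bval.
\]

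Finally I would apply the soundness of $\odot_C$, which was obtained by construction as the best abstraction of the concrete prefix-extension operation and satisfies
\[
\effgamma_C(\eff_C \odot_C \bval) \;\supseteq\; \{\pair{\ceff \cdot \cval}{\cenv} \mid \tuple{\cval,\cenv} \in \bgamma(\bval) \land \tuple{\ceff,\cenv} \in \effgamma_C(\eff_C)\}.
\]
Instantiating with $\eff_C := \effgamma_R(\effgamma_\bdom(\eff))$ and using that $\effgamma_C(\effgamma_R(\effgamma_\bdom(\eff))) = \effgamma(\eff)$ by definition, together with $\tgamma(\bval) = \bgamma(\bval)$ on base refinement types, gives exactly the claimed inclusion.

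The main obstacle I anticipate is minor and purely bookkeeping: ensuring that monotonicity of $\effgamma_\bdom$ (as the pointwise lifting of $\bgamma$ over the finite index set $Q$) is justified, and that the equality $\effgamma_R \circ \effalpha_R = \mathrm{id}$ is legitimate at the exact element we need (which it is, since $\effgamma_R$ is a bijection on the whole domain, not merely an upper adjoint). Beyond this, the lemma is essentially a one-line consequence of composing three previously established soundness properties along the concretization chain.
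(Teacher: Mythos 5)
Your proposal is correct and follows exactly the route the paper intends: the paper dismisses this lemma with ``follows by construction,'' and your diagram chase --- monotonicity of $\effgamma_C \circ \effgamma_R$ applied to condition (\ref{eq:ext-op-sound}), collapsing the middle layer via $\effgamma_R \circ \effalpha_R = \mathrm{id}$, and then the built-in soundness of $\odot_C$ from the Galois connection --- is precisely the omitted argument. No gaps.
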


Let us thus define an appropriate $\odot$ that satisfies (\ref{eq:ext-op-sound}). To this end, we first expand $\effgamma_\bdom(\eff) \odot_R \bval$:
\begin{align*}
& \effgamma_\bdom(\eff) \odot_R \bval\\
  = {} & \lambda q'.\, \pset{\tuple{\vacc',\cenv}}{\exists q', \tuple{\cval,\cenv} \in \bgamma(\bval), \tuple{\vacc,\cenv} \in \bgamma(\eff(q)).\, \tuple{q,\vacc} \xrightarrow{\cval} \tuple{q',\vacc'}}\\
= {} & \lambda q'.\, \!\!\! \bigcup_{q \xrightarrow{\{\guard\}\upd} q' \in \Delta} \!\!\! \pset{\tuple{\vacc',\cenv}}{\exists \tuple{\cval,\cenv} \in \bgamma(\bval), \tuple{\vacc,\cenv} \in \bgamma(\eff(q)).\, \den{\guard}(\cval,\vacc)=\mathtt{true} \land \den{\upd}(\cval,\vacc) = \vacc'}\enspace.
\end{align*}
The last equation suggests that we can compute $\eff \odot \beta$ by abstracting for each $q'$, each symbolic transition $q \xrightarrow{\{\guard\}\upd} q' \in \Delta$ of the automaton separately, and then take the join of the results. In order to abstract a symbolic transition, we need appropriate abstractions of the semantics of guards and updates with respect to base refinement types. For the sake of our formalization, we therefore assume an abstract interpreter $\den{\cdot}^\#: (\guards \cup \updates) \to \bdom \times \bdom \to \bdom$ such that for all $t \in \guards \cup \updates$ and $\bval,\bval' \in \bdom$
\begin{align*}
\bgamma(\den{t}^\#(\bval,\bval')) \supseteq \pset{\tuple{\cval',\rho}}{\exists \cval,\vacc.\, \tuple{\cval,\rho} \in \bgamma(\bval) \land \tuple{\vacc,\rho
} \in \bgamma(\bval') \land \cval' = \den{t}(\cval,\vacc)}\enspace.
\end{align*}
We then derive $\eff \odot \beta$ from the above equation as follows:
\[\eff \odot \beta = \lambda q'.\, \!\!\! \bigsqcup_{q \xrightarrow{\{\guard\}\upd} q' \in \Delta} \!\!\! \pset{\den{\upd}^\#(\beta \sqcap \beta_\guard, \eff(q) \sqcap \beta_\guard)}{\beta_\guard = (\exists \nu.\, \den{\guard}^\#(\beta,\eff(q)) \sqcap \{\nu = \mathtt{true}\})}\enspace.\]
Here, $\beta_\guard$ captures the environments $\rho$ shared by $\bval$ and $\eff(q)$ for which the guard $\guard$ evaluates to $\mathtt{true}$. It is used to strengthen $\bval$ and $\eff(q)$ when evaluating the update expression $\upd$. We here assume that $\bdom$ provides an operator $\exists v. \bval$ that projects out the value component of the pairs represented by some $\bval \in \bdom$. That is, we must have:
\[\bgamma(\exists v.\, \bval) = \pset{\tuple{\cval,\cenv}}{\exists \cval'.\, \tuple{\cval',\cenv} \in \bgamma(\bval)}\enspace.
\]
The fact that $\odot$ indeed satisfies (\ref{eq:ext-op-sound}) then follows from the assumption on the abstract interpreter for guards and expressions as well as the soundness of the abstract domain operations of $\bdom$.
\begin{lemma}
  For all $\eff \in \effdom$ and $\bval \in \bdom$,
  $\effgamma_\bdom(\eff \odot \bval) \supseteq \effgamma_\bdom(\eff) \odot_R \bval$.
\end{lemma}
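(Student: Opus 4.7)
The plan is to prove the inclusion pointwise at each control location $q' \in Q$, by picking an arbitrary witness $\tuple{\vacc',\cenv}$ in $(\effgamma_\bdom(\eff) \odot_R \bval)(q')$ and tracing it through the definition of $\eff \odot \bval$. By the displayed characterisation of $\odot_R$, such a witness yields a symbolic transition $q \xrightarrow{\{\guard\}\upd} q' \in \Delta$ together with $\cval, \vacc$ satisfying $\tuple{\cval,\cenv} \in \bgamma(\bval)$, $\tuple{\vacc,\cenv} \in \bgamma(\eff(q))$, $\den{\guard}(\cval,\vacc)=\mathtt{true}$, and $\den{\upd}(\cval,\vacc)=\vacc'$. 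Since $(\eff \odot \bval)(q')$ is the join over all transitions into $q'$ and $\bgamma$ is monotone, it suffices to place $\tuple{\vacc',\cenv}$ in the single summand corresponding to this particular transition, namely $\bgamma(\den{\upd}^\#(\bval \sqcap \beta_\guard, \eff(q) \sqcap \beta_\guard))$.

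The first key step is to show $\tuple{\cval,\cenv} \in \bgamma(\beta_\guard)$, where $\beta_\guard = \exists \nu.\, \den{\guard}^\#(\bval,\eff(q)) \sqcap \{\nu = \mathtt{true}\}$. Applying the soundness assumption for $\den{\cdot}^\#$ to the guard $\guard$ gives $\tuple{\mathtt{true},\cenv} \in \bgamma(\den{\guard}^\#(\bval,\eff(q)))$. Since $\bgamma$ is the upper adjoint of a Galois connection it preserves meets, so $\bgamma(\bval_1 \sqcap \bval_2) = \bgamma(\bval_1) \cap \bgamma(\bval_2)$; combining with the assumed specification of the primitive $\{\nu=\mathtt{true}\}$, we get $\tuple{\mathtt{true},\cenv}$ inside the meet. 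Finally, the specification of the $\exists \nu$ operator removes the value component, yielding $\tuple{\cval'',\cenv} \in \bgamma(\beta_\guard)$ for every $\cval''$, and in particular for our $\cval$ (and equally for $\vacc$, which is what we need to pair with $\eff(q)$).

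Next, I use meet preservation again to assemble the inputs to $\den{\upd}^\#$: from $\tuple{\cval,\cenv}\in\bgamma(\bval)$ and $\tuple{\cval,\cenv}\in\bgamma(\beta_\guard)$ we obtain $\tuple{\cval,\cenv}\in\bgamma(\bval \sqcap \beta_\guard)$, and symmetrically $\tuple{\vacc,\cenv}\in\bgamma(\eff(q) \sqcap \beta_\guard)$. The soundness assumption on $\den{\cdot}^\#$ applied to $\upd$ then gives $\tuple{\den{\upd}(\cval,\vacc),\cenv} = \tuple{\vacc',\cenv} \in \bgamma(\den{\upd}^\#(\bval \sqcap \beta_\guard, \eff(q) \sqcap \beta_\guard))$. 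Monotonicity of $\bgamma$ with respect to the join at $q'$ concludes the pointwise inclusion, and this is the content of~(\ref{eq:ext-op-sound}).

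The main obstacle is the handling of $\beta_\guard$: one must justify not only that the abstract guard evaluation covers $(\cval,\vacc,\cenv)$, but that intersecting with $\{\nu = \mathtt{true}\}$ and projecting away $\nu$ retains the environment dependency needed later when strengthening $\bval$ and $\eff(q)$. Everything else reduces to bookkeeping: monotonicity of $\bgamma$, preservation of meets by the upper adjoint, soundness of $\den{\cdot}^\#$, and the fact that the join in the definition of $\odot$ ranges over exactly the set of transitions quantified over in the characterisation of $\odot_R$.
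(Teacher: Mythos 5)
Your argument is correct and is exactly the elaboration the paper has in mind: the paper does not spell out a proof of this lemma, stating only that it ``follows from the assumption on the abstract interpreter for guards and expressions as well as the soundness of the abstract domain operations of $\bdom$,'' and your pointwise chase through a single symbolic transition $q \xrightarrow{\{\guard\}\upd} q'$ -- establishing $\tuple{\mathtt{true},\cenv} \in \bgamma(\den{\guard}^\#(\bval,\eff(q)))$, pushing $\cenv$ through the meet with $\{\nu=\mathtt{true}\}$ and the $\exists\nu$ projection to land in $\bgamma(\beta_\guard)$, and then applying the soundness of $\den{\upd}^\#$ to the strengthened arguments -- is the intended derivation.

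One small point to tighten: you justify $\bgamma(\bval_1)\cap\bgamma(\bval_2)\subseteq\bgamma(\bval_1\sqcap\bval_2)$ by saying $\bgamma$ is the upper adjoint of a Galois connection, but the paper only assumes $\bgamma$ on $\bdom$ to be monotone and top-strict (the Galois connections it constructs are for $\effdom_C$ and $\effdom_R$, not for $\bdom$). Monotonicity alone gives the \emph{opposite} inclusion, so the direction you need is an extra soundness requirement on $\sqcap$. The paper implicitly sweeps this under ``soundness of the abstract domain operations of $\bdom$,'' so your proof is faithful to its intent, but you should state the meet-soundness condition $\bgamma(\bval_1\sqcap\bval_2)\supseteq\bgamma(\bval_1)\cap\bgamma(\bval_2)$ as an explicit hypothesis rather than derive it from a Galois connection that is not assumed. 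The join step at the end needs no such care, since $\bgamma(\bval_1)\subseteq\bgamma(\bval_1\sqcup\bval_2)$ does follow from monotonicity alone.
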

Similarly, monotonicity of $\odot$ follows immediately from the monotonicity of the operations on $\bdom$.

The remaining operators on $\effdom$ assumed in \cref{sec:types-and-effects} (i.e., $\tstrengthen{\eff}{\tenv}$, $\eff[y/x]$, and $\texists{x}{\tval}{\eff}$) are obtained directly by a pointwise lifting of the corresponding operators on $\bdom$.
}

\section{Automated Inference and Verification}
\label{sec:inference}
\label{sec:inference-algo}
\label{apx:inference}

We now describe how to verify temporal safety properties of higher order programs, through automatic inference of accumulative types and effects. 
To facilitate the calculation of precise effects we build upon the existing data flow refinement type inference algorithm~\cite{DBLP:journals/pacmpl/PavlinovicSW21} based on abstract interpretation. We provide an abridged description of the original algorithm and explain how we adapt it for our purposes. We then briefly discuss the soundness of the resulting algorithm and how it can be used to automatically verify temporal safety properties.

\subsection{Type and Effect Inference by Abstract Interpretation}


\newcommand\steptf[1]{\text{$\mathsf{step^\#}$[\![$#1$]\!]}}
\newcommand\cveff[2]{\pair{#1}{#2}}
\newcommand\cvef{\mathit{tef}}


The data flow type inference, as described by \cite{DBLP:journals/pacmpl/PavlinovicSW21}, employs a calculational approach in an abstract interpretation style to iteratively compute a dependent refinement type for every subexpression of a program. The corresponding inference algorithm is implemented in the \drift{} tool.
%
%
The algorithm is parametric in the choice of an abstract domain of basic types $\bdom$ (which coincides with our parametrization of the types and effect system) as well as the supported primitive operations on values represented by these basic types (e.g., arithmetic operations, etc.).



The \drift{} algorithm is a whole program analysis. It assumes that every subexpression $\term$ of the program is labeled with a unique program location $\ell$, written $\term_\ell$. The abstract domain consists of \emph{execution maps}
$\cmap^\# \in \cmaps^\#$. Roughly speaking, an execution map assigns a type to every program location $\ell$. The type inference works by iteratively computing a fixpoint of an abstract transformer $\steptf{-}$. The abstract transformer takes an expression $\term_\ell$ and an execution map $\cmap^\#$, and computes an updated execution map reflecting the data flow in $\term_\ell$ based on what values may occur at each program location as specified by $\cmap^\#$. The abstract transformer is defined by structural recursion over $\term_\ell$.

At a conceptual level, we simply instantiate the \drift{} algorithm by treating event sequences as values that can be manipulated directly by the program, akin to the \removed{na\"{i}ve }translation approach. The only primitive operator defined on event sequences is $e_1 \eop e_2$ where $e_1$ is expected to evaluate to an event sequence $\ceff_1$ and $e_2$ to a value $\cval_2$. The result of the operation is the concatenated event sequence $\ceff_1 \cdot \cval_2$. We additionally have the constant expression $\epsilon$ denoting the empty event sequence. We also have a built-in pair constructor $\pair{e_1}{e_2}$ and projection operators $\#_1(e)$ and $\#_2(e)$ on pairs. Event sequences are then abstracted by treating an abstract effect $\eff$ as yet another kind of base type.

However, instead of just applying the instantiated \drift{} algorithm on translated programs that manipulate pairs of values and event sequences, we specialize the abstract transformer to take advantage of the knowledge that every expression $\term_\ell$ evaluates to such a pair. As such it can fuse together what would otherwise be costly joins and projections needed for analysis of the \removed{na\"{i}ve }product construction. Moreover, the specialized abstract transformer interprets the sequence concatenation operator using the abstract effect domain. This is in contrast to a \removed{na\"{i}ve }translation approach where, say, for the SAA effect domain, we would embed the automatons transfer function into the analyzed program and abstract it using the ordinary base types in $\bdom$. This specialization is key to improving both the efficiency and precision of the resulting analysis.

To build more intuition, we describe the specialized abstract transformer in some more detail. Its precise signature is
\[\steptf{-} : \cterms \to (\cenvs^\# \times \effdom) \to \cmaps^\# \to ((\tvalues \times \effdom)\times \cmaps^\#)\enspace.\]
Intuitively, for each well-formed expression $\term_\ell$ in a given environment $\Gamma \in \cenvs^\#$ and effect context $\eff \in \effdom$, and for a given execution map $\cmap \in \cmaps$, the transformer $\steptf{\term_\ell}(\Gamma, \eff)(\cmap)$ returns the updated abstract value and effect at $\ell$, along with an updated execution map.


%

At the core of the definition of $\steptf{-}$ lies the monotonic data flow propagation function $\tval_1 \ltimes \tval_2$ on refinement types shown in \cref{fig:transf1}. Intuitively, it ensures that an argument type $\tval$ at the call site of a function $f$ is propagated back to $f$'s definition site. After inferring the result type $\tval'$ of $f$ for $\tval$ from $f$'s body, $\tval'$ is in turn propagated forward to $f$'s call site.
For example, if $\tval_1$ is the current inferred type of some variable $x$ bound at location $\ell_1$, and $\tval_2$ is the current type inferred for some usage of $x$ at location $\ell_2$, then ${\tval_1} \ltimes {\tval_2}$ returns a new pair of types $\pair{\tval_1'}{\tval_2'}$ for locations $\ell_1$ and $\ell_2$ that reflects the forward data flow from $\ell_1$ to $\ell_2$ and backward data flow from $\ell_2$ to $\ell_1$.

In most cases, the abstract transformer behaves according to the original definition in the \drift{} algorithm and, additionally, simply carries along the effect.
The most interesting case
\begin{wrapfigure}[9]{r}[0pt]{7.3cm}
\vspace{-7pt}
\begin{minipage}{\linewidth}
\[\begin{array}{l}
  \steptf{(\evkw ~ e_1)_\ell}(\tenv, \eff) \;\triangleq\; \Mdo \\
  \quad \pair{\tau_\ell}{\eff_\ell} \;\leftarrow\; \Mget ~ (\tenv, \ell)\\
  \quad \cveff{\tval_1}{\eff_1} \;\leftarrow\; \steptf{e_1}(\tenv, \eff) \\
  \quad \massert{\tval_1 \neq \bot}\\
  \quad \Mlet \_, \pair{\tau'_\ell}{\eff'_\ell} \;=\; \cveff{\{\nu = \unitval\}[\tenv]}{\eff \odot \tval_1} \ltimes \pair{\tau_\ell}{\eff_\ell}\\
  \quad \Mupdate (\ell, \pair{\tau'_\ell}{\eff'_\ell})\\
  \quad \Mreturn \pair{\tau'_\ell}{\eff'_\ell}
\end{array}
\]
\vspace{7pt}
\end{minipage}
\end{wrapfigure}
is for event emission $\evkw~e_1$, shown on the right, which we discuss in more detail.
The definition uses a similar monadic style as in \cite{DBLP:journals/pacmpl/PavlinovicSW21} that treats $\steptf{-}$ as a state monad over execution maps. We use $\Mdo$ notation for the monadic composition, allowing $\Mlet$ to introduce new bindings, 
and we assume two operations, $\Mget$ and $\Mupdate$, that read from or write to the execution map encapsulated by the monad.
The transformer starts by extracting the type and effect currently associated with $\ell$ from the execution map and takes a recursive step on the expression $e_i$ that computes the value to be emitted. The $\massert$ construct aborts with the current execution map if the type inferred from $e_i$ is still $\bbot$ (indicating that $e_i$ has not yet produced an abstract value in the current iteration of the abstract interpretation).
Otherwise, it continues by computing the abstract result  of the event emission using the effect extension operator $\odot$ of the abstract effect domain. It then uses data flow propagation with the old type and effect at $\ell$ to compute the new $\pair{\tau'_\ell}{\eff'_\ell}$. This pair is then written back to the execution map at $\ell$ and returned.

\newcommand{\ttv}{\tval}
\newcommand{\ttef}[1]{\ttv_{#1}}

\begin{figure}
\[
  \arraycolsep=7pt\def\arraystretch{1.3}
  \begin{array}[t]{l|l}
    \begin{array}[t]{l}
      (x:\ttef{1i} \to \ttef{1o}) \ltimes (x:\ttef{2i} \to \ttef{2o}) \Def=\\
      \qquad\Mlet \pair{\ttef{2i}'}{\ttef{1i}'} = \ttef{2i} \ltimes \ttef{1i} \;\Min \\
      \qquad\Mlet \pair{\ttef{1o}'}{\ttef{2o}'} = \ttef{1o}[x: \tval_{2i}] \ltimes \ttef{2o}[x: \tval_{2i}] \;\Min\\
      \qquad\pair{x:\ttef{1i}' \to \ttef{1o}'\;}{\;x:\ttef{2i}' \to \ttef{2o}'}\\
      (x:\ttef{1} \to \ttef{2}) \ltimes \bbot \Def= \pair{x:\ttef{1} \to \ttef{2}\;}{\;x:\bbot \to \bbot}
    \end{array}
    &
      \begin{array}[t]{l}
        \ttv_{1} \ltimes \ttv_{2} \Def= \pair{\ttv_{1}}{\ttv_{1} \sqcup \ttv_{2}} \\
        \pair{\ttef{1a}}{\ttef{1b}} \ltimes \pair{\ttef{2a}}{\ttef{2b}} \Def= \\
        \qquad\pair{\pair{\ttef{1a}}{\ttef{1b}}}{\pair{\ttv_{1a} \sqcup \ttv_{2a}}{\ttv_{1b} \sqcup \ttv_{2b}}} \\
        \ttef{1} \ltimes \btop \Def= \pair{\btop}{\btop}
      \end{array}
  \end{array}
\]
\caption{\label{fig:transf1}
Data flow propagation}
\end{figure}

\subsection{Soundness of Type and Effect Inference}

A challenge in connecting the type inference result with our type and effects system is that the inference algorithm has been proven sound with respect to a bespoke dataflow semantics of functional program rather than a standard operational semantics like the one underlying our system. However, \cite{DBLP:journals/pacmpl/PavlinovicSW21} shows that the inference result yields a valid typing derivation in a bespoke data flow refinement type system. To bridge the gap in the soundness argument, we relate the \drift{} type system with our type and effect system at the abstract level by showing that, from the typing derivation for a translated program produced by the soundness proof of \cite{DBLP:journals/pacmpl/PavlinovicSW21}, one can reconstruct a typing derivation in the types and effects system for the original effectful program. Further details can be found in \apxref{\cref{sec:inference-soundness-top}}{the extended version~\cite{Nicola2025} of this paper}.


\subsection{Automated Verification}

As discussed in Sec.~\ref{sec:overview}, 
our abstract effect domain seeks to improve over a \added{direct}\removed{na\"{i}ve} approach of translating \removed{(via tuples or CPS)} an input program/property of effects into an effect-free product program that carries its effect trace and employs existing assertion checking techniques~\cite{DBLP:conf/pldi/KobayashiSU11,DBLP:journals/pacmpl/PavlinovicSW21, DBLP:journals/pacmpl/KawamataUST24,DBLP:conf/esop/YamadaKSS25}. 
This algorithm places a substantial burden on the type system (or other verification strategy) to track effect sequences as program values that flow from each (translated) event expression to the next. In this strategy, where an $\termkw{ev}\ e$ expression occurred in the original input program, the translated program has an event prefix variable (and accumulator variable) and constructs an extended event sequence. Unfortunately, today's higher-order program verifiers do not have good methods for summarizing program value sequences, nor do they exploit the automaton structure to organize possible sequence values. Thus, those tools struggle to validate the later $\termkw{assert}$ions.

The inference discussed above offers an alternative verification algorithm.
Once effects are inferred through the instantiation of our effect abstract domain (over the translated event sequences), it is straightforward to construct a verification algorithm. One merely has to ensure that at every program location  \circled{i}, the computed summary associates $\bot$ with every accepting state $q^\text{\circled{i}}_{err}$. 
Our organization of event prefixes around concrete automaton states allows us to better summarize those prefixes into categories, and can be thought of as a control-state-wise disjunctive partitioning. Thus, at each $\termkw{ev}\ e$ expression, the (dataflow) type system directly updates each $q$'s summary with the next event.
Sec.~\ref{sec:evaluation} experimentally evaluates both of these algorithms and compares them.

\section{Implementation, Trace Partitioning, and Benchmarks}
\label{sec:implementation}

{\bf Implementation.}
We implemented both the tuple/CPS translation (Sec.~\ref{sec:overview}) and the type and effect inference (Sec.~\ref{sec:inference}) verification algorithms in a prototype tool called \evdrift{}, as an extension to the \drift{}~\cite{DBLP:journals/pacmpl/PavlinovicSW21} type inference tool,
which
builds on top of the \textsc{Apron} library~\cite{DBLP:conf/cav/JeannetM09} to support various numerical abstract domains of type refinements.

\evdrift{} takes programs written in a subset of OCaml along with an automaton property specification file as input. 
\evdrift{} supports higher-order recursive functions, operations on primitive types such as integers and booleans, as well as
a non-deterministic if-then-else branching operator.
The property specification lists the set of automaton states, a deterministic transition function and an initial state. The specification also includes 
two kinds of effect-related assertions: those that must hold after every transition, and those that must hold after the final transition.
Assertions related to program variables (as in \drift) may be specified in the program itself. 
Whereas assertions related to effects may be specified in the property specification file.

We also implemented \added{three} improvements to the dataflow abstract interpretation.
First, \added{we integrate \textsc{Apron}'s} grid-polyhedra abstract domain~\cite{DBLP:phd/ethos/Dobson08}---a reduced product of the polyhedra~\cite{DBLP:conf/popl/CousotH78} and the grid~\cite {DBLP:conf/lopstr/BagnaraDHMZ06} abstract domains---to interpret type refinements of the form of $x \equiv y\mod 2$.
Second, we implemented 
trace partitioning~\cite{DBLP:journals/toplas/RivalM07} for increased disjunctive precision.
Although these benefits are somewhat orthogonal to our contributions, our evaluation (Sec.~\ref{sec:evaluation}) also experimentally quantifies the disjunctive benefit of trace partitioning in our setting vis-a-vis the benefit of our abstract effect domain.
\added{Third, we optimize the propagation step of the analysis by implementing the suggestion in \cite{DBLP:journals/pacmpl/PavlinovicSW21}.}
(Note that these three improvements also benefit the prior \drift{} tool.)
%


\medskip
\noindent{\bf Trace Partitioning.}
To improve the precision in our analysis, we implemented a type inference
\begin{wrapfigure}[7]{r}[7pt]{4.2cm}
\vspace{-12pt}
\begin{lstlisting}
let f x y = 
  let z =
    if y >= 0 then 1
    else -1
  in 
  assert z != 0; x/z
\end{lstlisting}
\end{wrapfigure}
algorithm with trace partitioning~\cite{DBLP:journals/toplas/RivalM07}, but instantiating it here in a higher-order setting. 
Consider
the example to the right adapted from~\cite{DBLP:journals/toplas/RivalM07}.
It is easy to see that this program does not raise an assertion error as \lstinline|z| is either equal to 1 or -1.
However, when using convex abstract domains like polyhedra and octagons, \lstinline|z| will have an abstract representation that includes the integer 0 because the abstract domain elements of the two branches of the conditional are joined together.
Consequently, an analysis would raise an undesirable assertion error.

Roughly speaking, with trace-partitioning, every if-then-else expression is analyzed twice: once in a context where the condition is true and once for false.
This directly alleviates the problem described in the above example as \lstinline|z$\ne$0| in the abstract representation of either of the branches.
\begin{wrapfigure}[5]{r}[7pt]{4.2cm}
\vspace{-12pt}
\begin{lstlisting}
let g x1 y1 = 
  assert y1 != 0; x1/y1
let f x2 y2 = 
  g x2 y2 + g (-x2) y2
\end{lstlisting}
\end{wrapfigure}
Beyond if-then-else expressions, trace-partitioning is also useful to increase precision in the analysis of functions with multiple callsites.
As a demonstration, consider the example on the right.
Using convex abstract domains, the argument \lstinline|y1| to the function \lstinline|g| would have an abstract representation that includes the integer 0.
In the same spirit as for if-then-else expressions, we analyze the function \lstinline|g| in separate contexts for each call site.

The original \drift{} type system~\cite{DBLP:journals/pacmpl/PavlinovicSW21} extends the typing judgement by including call stacks to infer distinct refinement types for program nodes under distinct call stacks.
The extension for if-then-else partitioning, which involves program traces composed of locations where the conditional branching takes different paths, largely works in a similar way.
The only difference is that program traces, along with the respective context of the chosen condition, are \emph{emitted} by program nodes which have if-then-else expressions and are implemented as lists, whereas call stacks are naturally implemented as stacks.
Importantly, for this work, these extensions to the \drift{} type system can easily carry over to the \evdrift{} type system.
We give more details about this extended type system in \apxref{Apx.~\ref{sec:tp-details}}{the extended version~\cite{Nicola2025} of this paper}.

\begin{figure}[t]
\lstset{numbers=none,xleftmargin=0.0in,xrightmargin=.25in,basicstyle=\footnotesize\sffamily}
	\begin{tabular}{|l|l|l|l|}
		\hline
		\begin{minipage}[t]{1.1in}

\vspace{-8pt}
\begin{lstlisting}
let rec order d c = 
  if d > 0 then
    if d mod 2 = 0 
    then ev c 
    else ev (-c);
    order (d - 2) c
  else 0
let _ (dd cc:int) =
  order dd cc
\end{lstlisting}
		\end{minipage} &
		\begin{minipage}[t]{1.2in}
\vspace{-8pt}
\begin{lstlisting}
let rec spend n =
  ev (-1);
  if n <= 0 then 0 
  else spend (n-1)
let _ (gas n:int) = 
  if gas >= n and 
    n >= 0
  then
   (ev gas; spend n) 
  else 0
\end{lstlisting}
	\end{minipage} & 
\begin{minipage}[t]{1.3in}
\vspace{-8pt}
\begin{lstlisting}
let rec reent d =
  ev 1; (* Acq *)
  if d > 0 then
    if nondet() then
      reent (d-1);
      ev -1 (* Rel *)
    else skip
let _ d = 
  reent d;
  ev -1 (* Rel *)

\end{lstlisting}
\end{minipage} &
\begin{minipage}[t]{1.9in}
\vspace{-8pt}
\begin{lstlisting}
let rec compute vv bound inc = 
  ev vv;
  if vv = bound then 0 else
    compute (inc vv) bound inc
let min_max v = 
  let f = (fun t -> 
    if v>=0 then t-1 else t+1) in
  if v>=0
  then compute v (-1 * v) f
  else compute v (-1 * v) f
\end{lstlisting}
	\end{minipage} 
	\\
	\small{Only "c" or "-c" events}  & 
	\small{$\sum_{i}^{N} \leq \texttt{gas}$} &
	\small{\# Rel $\le$ \# Acq} & 
	\small{$\forall i>0. -v < \pi[i] < v$}\\
	\hline
\end{tabular}
\caption{\label{fig:more-examples} Further examples of our benchmarks. (See the supplement for sources and automata specifications.)}
\end{figure}

\medskip
\noindent{\bf Benchmarks}.
To our knowledge there are no existing benchmarks for higher-order programs with the general class of SAA properties described, although there are related examples in some fragments of SAA. We thus created such a suite from the literature, extending them, and creating new ones. We plan to contribute these \expBenchCount{} benchmarks to SV-COMP \cite{SVCOMP24}.
\Cref{fig:more-examples} lists some of them.
These benchmarks test our tool to verify a variety of SAA properties like (left-to-right in \cref{fig:more-examples}) tracking disjoint branches of a program, resource analysis, verifying a reentrant lock, and tracking the minimum/maximum of a program variable.
Other examples use the accumulator for
summation, maximum/minimum, monotonicity, etc. similar to those found in automata literature~\cite{DBLP:journals/tcs/KaminskiF94,DBLP:conf/cp/BeldiceanuCP04,DBLP:conf/cav/DAntoniFS019}.
We also include an auction smart contract~\cite{DBLP:conf/sp/StephensFMLD21} and adapt some example programs proposed in \cite{DBLP:conf/lics/Nanjo0KT18}.
These benchmarks involve verification of 
amortized analysis~\cite{DBLP:journals/toplas/IgarashiK05,DBLP:conf/popl/HoffmannAH11,DBLP:conf/popl/HoffmannDW17} for a pair of queues, and the verification of liveness and fairness for a non-terminating web-server. 
Finally, for several benchmarks, we created corresponding \emph{unsafe} variants by tweaking the program or property.
All benchmarks are provided in the supplement, and publicly available (\emph{URL omitted for reviewing}).

\section{Evaluation}
\label{sec:evaluation}


\newcommand*\Chk{\ding{52}\xspace}
\newcommand*\TO{\textbf{T}\xspace}
\newcommand*\ERR{\Radioactivity\xspace}
\newcommand*\MO{\textbf{M}\xspace}
\newcommand*\NOK{\ding{55}\xspace}
\newcommand*\FAIL{\Radioactivity\xspace}
\newcommand*\Unk{{\textbf{?}\xspace}}

\newcommand{\unsound}[1]{\Lightning #1}

\newcommand\humanCfgtrans[6]{$\langle tl\!\!:\!#2, tp\!\!:\!#3, io\!\!:\!#6, #5 \rangle$}
\newcommand\humanCfgdirect[6]{$\langle tl\!\!:\!#2, tp\!\!:\!#3, io\!\!:\!#6, #5 \rangle$} 

\newcommand*\benchtablerowstretch{0.7} 
\newcommand*\benchtabletabcolsep{1pt} 

We sought to answer two research questions:
\begin{enumerate}[\bf(1)]
	\item How does \evdrift{} compare with other state-of-the-art automated verification tools for higher-order programs?
	\item What is the effect of trace-partitioning on efficiency and accuracy?
\end{enumerate}

\paragraph{Comparing our approach with other methods.}
We aim to compare against the somewhat mature prior tools \drift{}, \rcaml{}, \mochi{}~\cite{github:mochi} and \rethfl{}~\cite{github:rethfl} which support higher-order programs and can validate assertions and even some temporal properties. \citet{DBLP:conf/popl/MuraseT0SU16} focus on liveness properties and reduce the problem to termination (also see Secs.~\ref{sec:intro} and~\ref{sec:conclusion} for discussions of other works and tools).
\drift{}, \rcaml{}, \mochi{}, and \rethfl{} do not directly operate on programs with effects, so we used our reduction to verifying assertions of higher-order (effect-free) programs, which also enables those tools to technically now be applied to SAA properties.
\added{We first apply a selective store passing transformation to effectful programs (based on an algorithm adapted from \citet{DBLP:journals/entcs/Nielsen01}), producing an optimized product program that preserves parts of the original effectful program.
The translation is guided by whether an expression observes an event generated during its evaluation.
Consequently, the configuration is passed selectively to only those expressions that may observe such events.}
%
%
%
\drift{} is discussed in Sec.~\ref{sec:inference}.
\rcaml{}\footnote{We use \rcaml{} at commit \texttt{299e979bfce7d9b0532586bfc42b449fd0451531} with the \coar{} config \texttt{config/solver/rcaml\_wopp\_spacer.json}.} is based on extensions of Constrained Horn Clauses, is part of \coar{}~\cite{github:coar}, and is built on top of several prior works~\cite{hiroshiICFP2024,DBLP:journals/pacmpl/KawamataUST24,DBLP:journals/pacmpl/SekiyamaU23}.
\mochi{}~\cite{DBLP:conf/pldi/KobayashiSU11} is a CEGAR-style software model checker based on higher-order recursion schemes and relies on either interpolating theorem provers or an ICE-based solver of Constrained Horn Clauses (CHC) for predicate discovery.
\added{\rethfl{} is a type-based validity checker for a fragment $\nu$HFL(Z) of HFL(Z), a higher-order fixed point logic extended with integers, to which the verification of higher-order functional programs is known to be reducible, and it leverages CHC solvers to infer predicates within a bespoke refinement type system for $\nu$HFL(Z).}
%
%
We also considered LiquidHaskell~\cite{DBLP:conf/haskell/VazouSJ14}, which includes an implementation~\cite{github:liquidhaskell}.
However, LiquidHaskell is somewhat incomparable because (i) it requires user interaction whereas our aim is full automation and (ii) the eager-versus-lazy evaluation order difference impacts the language semantics and possible event traces, so it is difficult to perform a meaningful comparison.
%
For each tool, we use the latest version available at the time of experiments and corresponded with the respective developers to ensure proper usage.
All our experiments were conducted on an x86\_64 AMD EPYC 7452 32-Core Machine with \SI{125}{Gi} memory.
We used \textsc{BenchExec} 3.29 \cite{github:benchexec} to ensure precise measurement for each run.

In our \evdrift{} experiments, we run all our benchmarks using several configurations: various combinations of context sensitivity, trace partitioning, numerical abstract domains, and added precision for inference of effect sequences. 
Context sensitivity---denoted ``$cs$''---is either set to "none" (0) or else to a call-site depth of 1.
So for example, a configuration with context-sensitivity 1 and trace partitioning enabled remembers the last call site and also the last if-else branch location.
For research question \#1, we evaluate the end-to-end improvement of all of our work on \evdrift{} over existing tools, so we use \drift{} (plus the tuple translation) \emph{without} trace partitioning---denoted ``$tp\!\!:\!\!F$''---and \evdrift{} \emph{with} trace partitioning ``$tp\!\!:\!\!T$''. Further below in research question \#2 we evaluate the degree to which \evdrift{} improves the state of the art due to the use of the abstract effect domain, versus through the use of trace partitioning (as well as the performance overhead of trace partitioning).
Regarding abstract domains, we use the loose version of the polyhedra domain~\cite{DBLP:journals/pacmpl/PavlinovicSW21} for all our benchmarks except for those that involve \textsf{mod} operations where we use the grid-polyhedra domain.
For polyhedra, we further consider two different widening configurations: standard widening and widening with thresholds.
For widening with thresholds \cite{DBLP:journals/corr/abs-cs-0701193}---denoted ``$th$''---we use a simple heuristic that chooses the conditional expressions in the analyzed programs as well as pairwise inequalities between the variables in scope as constraints.
The grid-polyhedra domain does not properly support threshold widening, so we only use standard widening here.
Finally, we add an option---denoted ``$io$''---to increase precision for effect sequences through relationships between accumulator variables before and after a function body is evaluated. $io$
helps capture the consequence every function has over accumulator variables more precisely, albeit at the cost of some performance due to increased number of variables in the abstract environment.
In the discussion below, we report only the result for the configuration that verified the respective benchmark in the least amount of time (as identified in {\bf Config} columns in the tables).
For instances where all versions fail to verify a benchmark, we report results for the fastest configuration for brevity.
We also include results for other configurations in \apxref{Apx.~\ref{apx:extended} found in the supplement}{the extended version~\cite{Nicola2025} of this paper}.

\newcommand*{\hlbad}{\cellcolor{red!20}}
\newcommand*{\hlgood}{\cellcolor{green!20}}

\newcommand\bestCfgdrift{{\red{cfg}}}
\newcommand\bestCfgevdrift{{\red{cfg}}}

\renewcommand\humanCfgdirect[6]{$\langle cs\!\!:\!\!#2, tp\!\!:\!\!#3, io\!\!:\!\!#6, #5 \rangle$}
\renewcommand\humanCfgtrans[6]{$\langle cs\!\!:\!\!#2, #5 \rangle$}
\begin{table}
{\renewcommand{\arraystretch}{\benchtablerowstretch}\setlength{\tabcolsep}{\benchtabletabcolsep}
\caption{\label{table:results} Comparison of \evdrift{} against assertion verifiers for effect-free programs: 
\drift{}, \rcaml{}, \mochi{} \added{and \rethfl{}} via our tuple reduction.
\rvremoved{Benchmarks marked $^+$ with involve the mod operation.}For each verifier, we show the result ("\Chk{}" for successful verification; "\Unk{}" for unknown; "\ERR{}" for some failure other than unknown; "\TO{}" for timeout - over 900 seconds; "\MO{}" for out of memory - over 2GB), CPU time in seconds, maximum memory used in megabytes and the chosen configuration for \drift{} and \evdrift{}.
For \drift{} and \evdrift{}, the configuration tuples show what context-sensitivity ($cs$) was used, if trace-partitioning was used ($tp$), if added precision for effects was used ($io$), and which abstract domain was used ($ls$ for loose-polyhedra, and $pg$ for grid-polyhedra).
\evdrift{} {\bf verified additional \expNewOverTrans{}}, {\bf \expNewOverCoarMochi}, {\bf \expNewOverRealMochi{}} and {\bf \expNewOverRethfl} that \drift{}, \rcaml{}, \mochi{} and \rethfl{}, respectively, could not, and it is {\bf \expSpeedupEVoverDrift{}$\times$ faster} than \drift{} on \drift{}-verifiable examples, {\bf \expSpeedupEVoverRcaml{}$\times$ faster} than \rcaml{} on \rcaml{}-verifiable examples, {\bf \expSpeedupEVoverRealMochi{}$\times$ faster} than \mochi{} on \mochi{}-verifiable examples, \added{and {\bf \expSpeedupEVoverRethfl{}$\times$ faster} than \rethfl{} on \rethfl{}-verifiable examples}.
}
  \begin{tabular}{l|c@{\hspace{0pt}}rc|cr|cr|cr|c@{\hspace{-2pt}}rc}
			\toprule
			& \multicolumn{9}{c|}{Prior Tools (via Tuple Reduction)} & \\
			& \multicolumn{3}{c|}{\drift{}} & \multicolumn{2}{c|}{\rcamlONLY{} } & \multicolumn{2}{c|}{\mochi{}}& \multicolumn{2}{c|}{\rethfl{}}  & \multicolumn{3}{c}{\evdrift{}}   \\ 
			{\bf Bench}  & {\bf Res} & {\bf CPU} & {\bf Config.} & {\bf Res} & {\bf CPU} & {\bf Res} & {\bf CPU} & {\bf Res} & {\bf CPU} & {\bf Res} & {\bf CPU} & {\bf Config.} \\ 
			\midrule
1. \texttt{\scriptsize all-ev-pos} & \Chk  & 0.6 & \humanCfgtrans{oopsla25july22}{0}{F}{T}{ls}{F} & \Chk  & 0.8 & \Chk  & 0.8 & \Chk  & 2.5 & \Chk  & 0.2 & \humanCfgdirect{oopsla25july22}{0}{F}{T}{ls}{F} \\ 
2. \texttt{\scriptsize alt-inev} & \Unk  & 60.8 & \humanCfgtrans{oopsla25july22}{0}{F}{F}{pg}{F} & \TO   & 901.1 & \Unk  & 69.2 & \Chk  & 4.7 & \Chk  & 2.4 & \humanCfgdirect{oopsla25july22}{0}{F}{T}{ls}{T} \\ 
3. \texttt{\scriptsize auction} & \Unk  & 55.1 & \humanCfgtrans{oopsla25july22}{1}{F}{T}{ls}{F} & \TO   & 901.1 & \Unk  & 90.7 & \Chk  & 18.6 & \Chk  & 2.7 & \humanCfgdirect{oopsla25july22}{0}{F}{T}{ls}{F} \\ 
4. \texttt{\scriptsize binomial\_heap} & \Unk  & 544.1 & \humanCfgtrans{oopsla25july22}{1}{F}{F}{pg}{F} & \TO   & 901.1 & \MO   & 581.2 & \Chk  & 4.3 & \Chk  & 2.2 & \humanCfgdirect{oopsla25july22}{0}{F}{T}{ls}{T} \\ 
5. \texttt{\scriptsize concurrent\_sum} & \Chk  & 1.7 & \humanCfgtrans{oopsla25july22}{0}{F}{T}{ls}{F} & \MO   & 12.4 & \Unk  & 190.7 & \Chk  & 4.2 & \Chk  & 0.2 & \humanCfgdirect{oopsla25july22}{0}{F}{T}{ls}{F} \\ 
6. \texttt{\scriptsize depend} & \Chk  & 0.1 & \humanCfgtrans{oopsla25july22}{0}{F}{T}{ls}{F} & \Chk  & 0.1 & \Chk  & 0.7 & \Chk  & 1.9 & \Chk  & 0.0 & \humanCfgdirect{oopsla25july22}{1}{T}{T}{ls}{F} \\ 
7. \texttt{\scriptsize disj-gte} & \Unk  & 35.8 & \humanCfgtrans{oopsla25july22}{0}{F}{F}{pg}{F} & \MO   & 505.6 & \Chk  & 198.6 & \Chk  & 4.9 & \Chk  & 2.2 & \humanCfgdirect{oopsla25july22}{0}{F}{T}{ls}{F} \\ 
8. \texttt{\scriptsize disj-nondet} & \Unk  & 10.1 & \humanCfgtrans{oopsla25july22}{0}{F}{T}{ls}{F} & \MO   & 528.5 & \Chk  & 45.0 & \Chk  & 3.8 & \Chk  & 2.3 & \humanCfgdirect{oopsla25july22}{0}{F}{T}{ls}{F} \\ 
9. \texttt{\scriptsize higher-order} & \Chk  & 1.5 & \humanCfgtrans{oopsla25july22}{0}{F}{T}{ls}{F} & \Chk  & 13.2 & \Chk  & 16.4 & \Chk  & 3.4 & \Chk  & 0.6 & \humanCfgdirect{oopsla25july22}{0}{F}{T}{ls}{T} \\ 
10. \texttt{\scriptsize intro-ord3} & \Chk  & 24.0 & \humanCfgtrans{oopsla25july22}{1}{F}{T}{ls}{F} & \MO   & 164.6 & \Unk  & 64.1 & \Unk  & 8.1 & \Chk  & 3.8 & \humanCfgdirect{oopsla25july22}{0}{F}{T}{ls}{T} \\ 
11. \texttt{\scriptsize lics18-amortized} & \Unk  & 273.6 & \humanCfgtrans{oopsla25july22}{0}{F}{F}{pg}{F} & \MO   & 307.6 & \MO   & 228.8 & \TO   & 901.1 & \Chk  & 6.4 & \humanCfgdirect{oopsla25july22}{0}{F}{T}{ls}{F} \\ 
12. \texttt{\scriptsize lics18-hoshrink} & \Unk  & 9.9 & \humanCfgtrans{oopsla25july22}{0}{F}{F}{pg}{F} & \Unk  & 0.1 & \Unk  & 1.3 & \Unk  & 3.7 & \Unk  & 7.0 & \humanCfgdirect{oopsla25july22}{1}{F}{F}{pg}{F} \\ 
13. \texttt{\scriptsize lics18-web} & \Unk  & 55.0 & \humanCfgtrans{oopsla25july22}{0}{F}{T}{ls}{F} & \ERR  & 3.7 & \TO   & 901.2 & \Chk  & 11.2 & \Chk  & 7.1 & \humanCfgdirect{oopsla25july22}{0}{F}{T}{ls}{F} \\ 
14. \texttt{\scriptsize market} & \Unk  & 276.9 & \humanCfgtrans{oopsla25july22}{1}{F}{T}{ls}{F} & \MO   & 481.9 & \TO   & 901.1 & \MO   & 5.5 & \Unk  & 56.2 & \humanCfgdirect{oopsla25july22}{0}{F}{F}{pg}{T} \\ 
15. \texttt{\scriptsize max-min} & \Unk  & 143.3 & \humanCfgtrans{oopsla25july22}{1}{F}{F}{pg}{F} & \MO   & 28.3 & \TO   & 900.9 & \Chk  & 78.5 & \Chk  & 43.1 & \humanCfgdirect{oopsla25july22}{1}{T}{T}{ls}{T} \\ 
16. \texttt{\scriptsize monotonic} & \Chk  & 2.3 & \humanCfgtrans{oopsla25july22}{0}{F}{T}{ls}{F} & \TO   & 901.1 & \Chk  & 2.8 & \Chk  & 4.1 & \Chk  & 0.5 & \humanCfgdirect{oopsla25july22}{0}{F}{T}{ls}{T} \\ 
17. \texttt{\scriptsize nondet\_max} & \Chk  & 2.3 & \humanCfgtrans{oopsla25july22}{0}{F}{T}{ls}{F} & \Chk  & 1.8 & \TO   & 901.1 & \TO   & 901.0 & \Chk  & 0.6 & \humanCfgdirect{oopsla25july22}{0}{F}{T}{ls}{F} \\ 
18. \texttt{\scriptsize num\_evens} & \Chk  & 9.2 & \humanCfgtrans{oopsla25july22}{0}{F}{T}{ls}{F} & \TO   & 900.7 & \Chk  & 17.2 & \Chk  & 4.3 & \Chk  & 4.7 & \humanCfgdirect{oopsla25july22}{1}{F}{T}{ls}{T} \\ 
19. \texttt{\scriptsize order-irrel-nondet} & \Unk  & 26.2 & \humanCfgtrans{oopsla25july22}{1}{F}{F}{pg}{F} & \Chk  & 2.9 & \Chk  & 61.9 & \Chk  & 8.6 & \Chk  & 2.7 & \humanCfgdirect{oopsla25july22}{1}{T}{T}{ls}{T} \\ 
20. \texttt{\scriptsize overview1} & \Chk  & 1.8 & \humanCfgtrans{oopsla25july22}{1}{F}{T}{ls}{F} & \Chk  & 1.7 & \Chk  & 3.5 & \Chk  & 2.5 & \Chk  & 0.3 & \humanCfgdirect{oopsla25july22}{0}{F}{T}{ls}{T} \\ 
21. \texttt{\scriptsize reentr} & \Chk  & 3.4 & \humanCfgtrans{oopsla25july22}{0}{F}{T}{ls}{F} & \TO   & 900.1 & \Unk  & 590.2 & \Chk  & 6.7 & \Chk  & 0.2 & \humanCfgdirect{oopsla25july22}{0}{F}{T}{ls}{T} \\ 
22. \texttt{\scriptsize resource-analysis} & \Chk  & 2.9 & \humanCfgtrans{oopsla25july22}{0}{F}{T}{ls}{F} & \MO   & 65.6 & \Chk  & 1.3 & \Chk  & 2.6 & \Chk  & 0.2 & \humanCfgdirect{oopsla25july22}{0}{F}{T}{ls}{F} \\ 
23. \texttt{\scriptsize sum-appendix} & \Chk  & 1.3 & \humanCfgtrans{oopsla25july22}{0}{F}{T}{ls}{F} & \ERR  & 0.1 & \Chk  & 1.2 & \Chk  & 1.8 & \Chk  & 0.0 & \humanCfgdirect{oopsla25july22}{0}{F}{T}{ls}{T} \\ 

\emph{geomean for \Chk's:} & & {\bf \expGMevtrans{}} & & & {\bf \expGMcoarmochi{} } & & {\bf \expGMrealmochi{}} & & {\bf \expGMrethfl{}} & & {\bf \expGMdirect{}} \\
\bottomrule
\end{tabular}
\begin{center}
\emph{In addition to the above, we also provide \expBenchCount{} \emph{unsafe} benchmarks.\\
\evdrift{} analyzed all of them (using the Config in the last column above) in 
273s.}
\end{center}
}
\vspace{-1.3em}
\end{table}

Table~\ref{table:results} summarizes the results of our comparison. 
\evdrift{} significantly outperforms the other three tools in terms of number of benchmarks verified and efficiency.
\drift{} via tuple reduction was only able to verify \expDriftVerified{} of the \expBenchCount{} benchmarks, while \evdrift{} could verify \expEDriftVerified{}.
\added{Moreover, \evdrift{} could also verify the \texttt{market} benchmark with configuration \humanCfgdirect{oopsla25july22}{2}{F}{T}{ls}{T} in under 30 seconds.}
Across all benchmarks that \drift{} could solve, it had a geomean of \SI{\expGMevtrans}{s}. Across all benchmarks that \evdrift{} could solve, it had a geomean of \SI{\expGMdirect}{s}.
For those benchmarks that \emph{both} \drift{} and \evdrift{} could verify, \evdrift{} was \expSpeedupEVoverDrift{}$\times$ faster.
\rcaml{} successfully verified 6 out of \expBenchCount{} benchmarks, with a geomean of \SI{\expGMcoarmochi}{s} across these, and was unable to verify the others due to either imprecision, timeout, or memory blowup.
%
%
%
\rethfl{} verified \expRethflVerified{} benchmarks for which it reported a geomean of \SI{\expGMrethfl}{s}.
\rvadded{Using the web interface for \rethfl{}~\cite{web:rethfl}, we found that an additional benchmark (\lstinline{nondet_max}) is verified in under \SI{5}{s}, suggesting that configuration issues in the running environments caused the difference. Moreover, the verification of \lstinline{market} and \lstinline{lics18-amortized} using the web interface was not conclusive, with the printed output remaining in a seemingly frozen state.
}
  Finally, \mochi{} verified \expRealMochiVerified{} benchmarks for which it reported a geomean of \SI{\expGMrealmochi}{s}.
\rvremoved{However, 5 of our benchmarks (indicated in Table~1)  use the mod operation that \mochi{} and \rethfl{} soundly approximate by treating it as an interpreted function, and that leads to imprecision.}
As anticipated, the cross-product transformation of the original program and property significantly increases the program size, thus requiring high context-sensitivity, which no existing tool provides with high precision.
%
In addition, we also ran \evdrift{} on the unsafe variants of our benchmarks. 
We used the same configurations as for their respective safe versions in Table~\ref{table:results}. 
The individual results are omitted for lack of space, but \evdrift{} analyzed all unsafe benchmarks in 273 seconds, returning unknown on each of them. \added{(The tool can only prove the absence of errors, not their presence.)}

We deduced at least three major factors behind \evdrift{}'s superior performance. 
(1) \evdrift{} evaluates the $\termkw{ev}$ expressions inline which reduces program size significantly. 
This leads to significantly faster runtimes and smaller memory footprint for \evdrift{} for all benchmarks. 
This also reduces a function call and the need to remember another call site for \evdrift{} in some cases like \texttt{overview1} and \texttt{sum-appendix} where the $\termkw{ev}$ expression might have different arguments at different locations.
Moreover, some benchmarks require the inference of non-convex input-output relations for functions, which the used numerical abstract domains cannot express.
This is why \evdrift{} can verify several benchmarks like \removed{\texttt{disj},} 
\texttt{lics18-web}, \texttt{higher-order}, etc.\ that \drift{} cannot.
(2) \evdrift{} establishes concrete relationships between program variables and accumulator variables leading to increased precision especially in resource-analysis-like benchmarks. 
(3) \evdrift{}'s abstract domain adds some inbuilt disjunctivity reasoning that learns different relationships for different final states.
This adds efficiency and precision to \evdrift{}'s analysis as it is able to verify some benchmarks, like \texttt{disj-gte} \removed{and \texttt{disj}} that \removed{have}\added{has} if-then-else statements, without using trace partitioning.
\added{Furthermore, this disjunctivity enables evDrift to effectively track how individual states transition in every function, which is especially useful for higher-order programs like \texttt{intro-ord3} and \texttt{market}.}

Due to their similarities, \evdrift{} also inherits a few limitations from \drift{}.
\evdrift{} fails to verify \texttt{lics18-hoshrink}, which involves non-linear invariants presently not expressible by any of the abstract domains in the \textsc{Apron} library.
\added{\drift{} evaluates all nodes repeatedly until convergence of the whole program.
We think it is possible to avoid such repetitions by selecting the order in which nodes are evaluated based on the data-flow dependency graph.}
We expect this to bring significant improvements in practice.

We also evaluated the impact of trace partitioning on the precision and performance of both \drift{} and \evdrift{}. 
For lack of space, the details can be found in \apxref{Apx.~\ref{apx:tp-impact}}{the extended version~\cite{Nicola2025} of this paper}.
In summary, \evdrift{} is able to verify two more benchmarks with trace partitioning, yet it slows down by \expTPSpeedupevDrift{}$\times$.
\rvadded{Finally, since \evdrift{} is parametric on abstract domains, it can apply specialized domains when program features necessitate it. For example, the first program in Fig.~\ref{fig:more-examples} involves the \lstinline|mod| operator. \evdrift{} is able to verify this example and four others examples (see benchmarks \texttt{last-ev-even}, \texttt{order-irrel}, \texttt{sum-of-ev-even} and \texttt{temperature} included with the release of \evdrift{}).}

%
%
%
%
%


\section{Conclusion}
\label{sec:conclusion}

We have introduced the first abstract interpretation for inferring types and effects of higher-order programs. The effect abstract domain disjunctively organizes summaries (abstractions) of partitions of possible event trace prefixes around the concrete automaton states they reach. Our effects are captured in a refinement type-and-effect system and we described how to automate their inference through abstract interpretation. We then showed that our implementation \evdrift{} enables numerous new benchmarks to be verified (or enables faster verification by $\expSpeedupEVoverDrift\times$ on \drift{}-verifiable programs), as compared with prior effect-less tools (\drift{}, \rcaml{}, \mochi{}, and \rethfl{}) which require translations to encode effects.

\paragraph{Related work}
We discussed some related works in Sec.~\ref{sec:intro} and as relevant throughout the paper. We now remark in some more detail and mention further related works.
The work of \citet{DBLP:journals/pacmpl/PavlinovicSW21} is the most related, but their type system does not include effects or automata, nor do they support any of our new benchmarks. However, we have been inspired by their work and build on aspects of their type system, abstract interpretation and implementation. 

\added{Several prior works have explored systems for reasoning about sequential effects such as thread synchronization~\cite{DBLP:conf/pldi/FlanaganQ03}, heap mutation~\cite{DBLP:conf/pldi/SwamyWSCL13}, producer effects~\cite{DBLP:conf/popl/Tate13}, and temporal properties~\cite{DBLP:conf/csl/KoskinenT14,DBLP:conf/csl/0001C14,DBLP:conf/lics/Nanjo0KT18} as well as unified frameworks for such systems~\cite{DBLP:conf/popl/Katsumata14,DBLP:conf/ecoop/Gordon17,DBLP:journals/toplas/Gordon21}. Notably, our accumulative type and effect system bears similarity with instances of Gordon's polymorphic type and effect system~\cite{DBLP:conf/ecoop/Gordon17,DBLP:journals/toplas/Gordon21}. Similar to our work, his framework is parametric in an algebraic structure of effects, a so-called \emph{effect quantale}, that abstracts from how effects are accumulated along and across program traces. However, the focus of \cite{DBLP:conf/ecoop/Gordon17,DBLP:journals/toplas/Gordon21} is on developing the meta-theory of such type systems rather than the problem of practical type and effect inference. So there are some key technical differences that stem from our focus on the latter problem.
Notably, sequential composition of effects in an effect quantale must distribute over joins in both arguments. In contrast, the effect extension operator in our work must only be a monotone upper-approximation of trace extension (\cref{eq:effect-extension-soundness}), a weaker requirement that gives more flexibility when designing abstract domains for effect inference. We exploit this flexibility in our implementation to trade precision for efficiency.} \added{\citet{DBLP:conf/sas/GordonY23} explore constraint-based type inference and error localization algorithms for effect quantales, but they do not consider any instances of the general type system that feature dependent effects and type refinements comparable to ours.}

\added{
\citet{DBLP:journals/pacmpl/ZhouYDJ24} propose Hoare Automata Types (HAT), which augment a refinement type system with an automata-based representation of pre- and post conditions for tracking sequential effects. Unlike our work, which focuses on effect inference, their work provides an algorithm for checking user-provided type and effect annotations.
The user expresses temporal effects in linear temporal logic (LTL) on finite traces (respectively, symbolic regular languages)~\cite{DBLP:conf/ijcai/GiacomoV13}. These formulas are then compiled to a variant of symbolic automata~\cite{DBLP:conf/cav/DAntoniFS019}, enabling SMT-based type checking. The compilation to automata requires that the underlying symbolic automata class is closed under effective complementation. This limits expressivity. Notably, it rules out accumulator registers like those supported by SAA. In contrast, to enable automatic effect inference, our approach approximates the rich SAA semantics by abstract interpretation using a broad class of abstract domains (such as polyhedra, which are not closed under complementation). That said, HAT provide support for rich properties of algebraic data types, which our current implementation does not yet handle. Though, we see the extension of the analysis with support for algebraic data types as mostly orthogonal to the handling of effects.}

\added{\citet{DBLP:journals/pacmpl/NguyenGTH18} present a method for verifying contracts of stateful untyped higher-order programs. The analysis uses symbolic execution and relies on a form of predicate abstraction to obtain refinement predicates for over-approximating the program semantics. Unlike our whole program analysis, this approach is compositional, enabling the verification of program components. However, the analysis requires user-provided contracts to enable this compositional analysis.}

%
\citet{DBLP:conf/csl/0001C14} discuss abstractions of B\"{u}chi automata, building their abstractions by using equivalence classes and subsequences of traces to separately summarize the finite and the infinite traces. They then discuss a B\"{u}chi type \& effect system, but it is not accumulative in nature, nor do they provide an implementation.
\citet{DBLP:conf/popl/MuraseT0SU16} described a method of verifying temporal properties of higher-order programs through the \citet{DBLP:conf/lics/Vardi87} reduction to fair termination. We considered using some of their benchmarks, however none were suitable because the overlap between their work and ours is limited for two reasons: (i) they focus on verifying \added{both liveness and safety} while we only verify safety properties and (ii) we support expressive SAA-based safety properties, which they do not support. 
\rcamlONLY{} is a verifier for OCaml-like programs with refinement types, is based on extensions of Constrained Horn Clauses and is part of \coar{}. \rcamlONLY{} was developed as part of several works~\cite{hiroshiICFP2024,DBLP:journals/pacmpl/KawamataUST24,DBLP:journals/pacmpl/SekiyamaU23}.
Kobayashi's \cite{DBLP:conf/popl/Kobayashi09,DBLP:conf/pldi/KobayashiSU11} higher-order model checking, notably the approaches based on counterexample-guided abstraction refinement, CEGAR, is orthogonal to our proposed analysis. We make different trade-offs both in terms of algorithmic techniques as well as theoretical guarantees. In particular, unlike CEGAR-based approaches, our analysis is guaranteed to always terminate (although, CEGAR-based approaches can also show that the program is unsafe).
\added{Recently, \citet{DBLP:conf/esop/YamadaKSS25} explored the relationship between Dijkstra Monads~\cite{DBLP:conf/pldi/SwamyWSCL13} and the higher-order fixpoint logic HFL(Z)~\cite{DBLP:journals/pacmpl/KobayashiTST23} for  automated verification of higher-order programs~\cite{DBLP:conf/esop/0001TW18}.}
\rvremoved{The work has resulted in the new tool \rethfl{}, which we have included in our
experimental comparison.}\rvadded{Their proposed verification approach has been implemented in the tool \rethfl{}~\cite{conf/aplas/KatsuraIKT20}, which we have included in our experimental comparison.}

We have focused on events/effects that simply emit a value ($\termkw{ev}\; v$) that is unobservable to the program, and merely appears in the resulting event trace. By contrast, numerous recent works are focused on higher-order programming languages with \emph{algebraic effects and their handlers}. Such features allow programmers to define effects in the language, and create exception-like control structures for how to handle the effects. 
\citet{journals/pacmpl/LagoG24} detail semantics and model checking problems for higher-order programs that have effects such as references, effect handlers, etc. Although this work is quite general, it focuses on semantics and decidability, does not specifically target symbolic accumulator properties, and does not include an implementation.
~\citet{DBLP:journals/pacmpl/KawamataUST24} discuss a refinement type system for algebraic effects and handlers that supports changes to the so-called ``answer type.''



\paragraph{Future work} A natural next direction is to automate verification of properties extend beyond safety to liveness specified by, say, B\"{u}chi automata or other infinite word automata, perhaps with an accumulator. Such an extension would require infinite trace semantics for the programming language and type \& effect system (e.g.~\cite{DBLP:conf/csl/KoskinenT14}), as well as a combination of both least and greatest fixpoint reasoning for abstract interpretations.


\section*{Data-Availability Statement}
\ifdefined\extdversion{
We include the extended version of the paper that includes a few appendices in the supplement zip file along with the submission.
It includes a completely functioning codebase for evDrift and our benchmark set.
Since evDrift is built on top of Drift, one can also run Drift using the codebase we provide.
We include a readme file for running the tool and running the scripts to generate the results for Drift and evDrift in the tables in this paper.
A user can also give their own programs along with an automaton property specification file for analysis.
}
\else{
The software developed for this work and the full set of benchmarks used in our evaluation are available at Zenodo~\cite{zenodo:evdrift}
}
\fi

\section*{Acknowledgments}
We thank the OOPSLA 2025 anonymous reviewers for their constructive comments on this draft. We are also grateful to Hiroshi Unno for generously providing his guidance on the use of RCaml which we employ in our evaluation.
The work is supported by National Science Foundation under the grant agreements 2315363, 2107169, 2008633, and 2304758.

\bibliography{dblp,others}

\ifdefined\extdversion

\appendix
\vfill
\pagebreak

\section{Soundness of Accumulative Dependent Type and Effect System}
\label{sec:soundness}

In this section we prove \cref{thm:soundness}. We start from an instantiation of our type system that satisfies the assumption specified in \cref{sec:types-and-effects}. That is, we assume a lattice of base refinement types
$\langle \bdom, \bord, \bbot, \btop, \bjoin, \bmeet \rangle$ and a lattice of effects $\langle \effdom, \efford, \effjoin, \effmeet, \effbot, \efftop \rangle $ with their concretization functions and abstract domain operations.

As outlined in \cref{sec:types-and-effects}, we fist establish a connection between abstract typing derivations and derivations in a \emph{concretized} version of the type system.

We obtain this concretized version by instantiating the abstract domain of base refinement types with the powerset lattice $\powerset(\cvalues \times \cenvs)$ and the concretization function as identity. We likewise instantiate the abstract domain of dependent effects with the powerset lattice $\powerset(\ceffs \times \cenvs)$. All abstract domain operations are defined by the most precise operator that satisfies the respective soundness condition. For example, we have for all $\bval \subseteq \cvalues \times \cenvs$ and concrete typing environments $\tenv$:
\[\tstrengthen{\bval}{\tenv} \;\Def=\; \bval \cap (\cvalues \times \tgamma(\tenv))\enspace.\]
To distinguish the operators $\{\nu = c\}$ for constructing base refinement types from constant values $c$ for the two versions of the type system, we annotate it with their respective domain, writing $\{\nu = c\}_\bdom$ for the abstract version and $\{\nu = c\}_{\powerset(\cvalues \times \cenvs)}$ for the concrete one.

For a type $\tval$ in the abstract type system, we define its concretization $\tval^\gamma$ recursively as follows:
\begin{align*}
\bval^\gamma =\; & \bgamma(\bval)\\
(\tfun{x}{\tval_1}{\eff_1}{\tval_2}{\eff_2})^\gamma =\; &
\tfun{x}{\tval_1^\gamma}{\effgamma(\eff_1)}{\tval_2^\gamma}{\effgamma(\eff_2)}\\
(\texists{x}{\tval_1}{\tval_2})^\gamma =\; & \texists{x}{\tval_1^\gamma}{\tval_2^\gamma}\enspace.
\end{align*}
For an abstract typing environment $\tenv$ we denote by $\tenv^\gamma$ the concrete typing environment obtained from $\tenv$ by applying the above concretization pointwise to each binding in $\tenv$. For consistency of notation, we use the short-hand $\eff^\gamma$ to denote the concrete effect $\effgamma(\eff)$ associated with an abstract effect $\eff \in \effdom$.

The following three lemmas establish that (sub)typing derivations in the abstract instantiation of the type system can be replayed in the concrete one.

\begin{lemma}
  If $\wf(\tenv)$, then $\wf(\tenv^\gamma)$.
\end{lemma}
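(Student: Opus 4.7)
The plan is to proceed by induction on the derivation of $\wf(\tenv)$, or equivalently by structural induction on $\tenv$, with one short preliminary lemma about how the $\dom(\cdot)$ operator interacts with the concretization $(\cdot)^\gamma$.

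First I would establish the auxiliary fact that $\dom(\tval^\gamma) \subseteq \dom(\tval)$ for every abstract type $\tval$, and correspondingly $\dom(\tenv^\gamma) = \dom(\tenv)$ for every typing environment $\tenv$. For the environment, this is immediate from the recursive definition of $(\cdot)^\gamma$, since concretization is defined pointwise on bindings and preserves the list of bound variables. For types, I would proceed by induction on $\tval$: in the base case $\bval$, we have $\bval^\gamma = \bgamma(\bval)$ and the concrete $\dom$ is defined using the concretization, so $\dom(\bval^\gamma)$ in the concretized system unfolds to exactly the set $\dom(\bval)$ used in the abstract system; the function and existential cases follow by unfolding the lifting of $\dom$ to compound types and applying the induction hypothesis to the constituent types and effects (the effect case uses the analogous fact $\dom(\effgamma(\eff)) \subseteq \dom(\eff)$, which likewise follows from the definition of $\dom$ on effects).

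With that in hand, the main induction on the derivation of $\wf(\tenv)$ is straightforward. The base case \ruleref{wf-emp} gives $\tenv = \varnothing$, and $\varnothing^\gamma = \varnothing$ is well-formed by the same rule. In the inductive case \ruleref{wf-bind}, we have $\tenv = \tenv', x{:}\tval$ with premises $\wf(\tenv')$, $\dom(\tval) \subseteq \dom(\tenv')$, and $x \notin \dom(\tenv')$. By the induction hypothesis $\wf(\tenv'^\gamma)$. By the auxiliary fact, $\dom(\tval^\gamma) \subseteq \dom(\tval) \subseteq \dom(\tenv') = \dom(\tenv'^\gamma)$, and $x \notin \dom(\tenv') = \dom(\tenv'^\gamma)$. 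Applying \ruleref{wf-bind} in the concretized system then yields $\wf(\tenv'^\gamma, x{:}\tval^\gamma) = \wf(\tenv^\gamma)$, as required.

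The only place where any subtlety arises is the auxiliary lemma about $\dom$; the main obstacle, such as it is, is simply being careful that the definition of $\dom$ in the concretized type system agrees with $\dom$ applied in the abstract system after concretization. Since in the concretized instantiation the abstract domains are powersets and $\bgamma$ is the identity, the two uses of $\dom$ coincide definitionally on base types, and the recursive lifting to compound types is identical in both systems, so no real work is required.
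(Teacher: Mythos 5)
Your proof is correct and follows the same route the paper indicates: induction on the length of $\tenv$ (equivalently, on the derivation of $\wf(\tenv)$) combined with a structural induction on the bound types to show that $\dom$ does not grow under concretization. The paper leaves these details as "straightforward"; your write-up simply makes explicit the auxiliary fact $\dom(\tval^\gamma) \subseteq \dom(\tval)$ that the paper's one-line proof implicitly relies on.
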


\begin{proof}
  Straightforward by induction on the length of $\tenv$ and the structure of the types bound in $\tenv$.
\end{proof}

\begin{lemma}
  \label{lem:abstract-subtyping-gives-concrete-subtyping}
  If $\subtjudge{\tenv}{\tval}{\tval'}$, then $\subtjudge{\tenv^\gamma}{\tval^\gamma}{\tval'^\gamma}$.
\end{lemma}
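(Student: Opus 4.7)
The plan is to proceed by structural induction on the derivation of $\subtjudge{\tenv}{\tval}{\tval'}$, replaying each rule application in the concretized type system. Before the case split, I would record two auxiliary facts used throughout. First, $\tgamma(\tenv) = \tgamma(\tenv^\gamma)$, by induction on $\tenv$: each binding contributes the same set of admissible environments, since $\tgamma$ on function types is the coarse $\cvalues \times \cenvs$ in both the abstract and concretized setting, and on a base binding $x:\bval$ we have $\tgamma(\bval) = \bgamma(\bval) = \bval^\gamma = \tgamma(\bval^\gamma)$. Second, the abstract operators used in the subtyping rules are sound over-approximations of the concretized operators: from the specifications assumed in Sec.~\ref{sec:types-and-effects} one gets $\tstrengthen{\bval^\gamma}{\tenv^\gamma} \subseteq (\tstrengthen{\bval}{\tenv})^\gamma$ and $\bval^\gamma[y/x] \subseteq (\bval[y/x])^\gamma$, and the analogous inequalities for effects.

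With these in hand, the s-base case is immediate: the premise $\tstrengthen{\bval_1}{\tenv} \bord \bval_2$ together with monotonicity of $\bgamma$ yields $(\tstrengthen{\bval_1}{\tenv})^\gamma \subseteq \bval_2^\gamma$, and composing with the strengthening inequality gives $\tstrengthen{\bval_1^\gamma}{\tenv^\gamma} \subseteq \bval_2^\gamma$, exactly the premise of concrete s-base. The s-fun case applies the induction hypothesis to each of the four premises, using the effect-strengthening inequality for the two effect-ordering premises, and closes with concrete s-fun, exploiting that $(\cdot)^\gamma$ commutes with the function-type constructor. The s-exists case is likewise immediate from the induction hypothesis and concrete s-exists, since $(\cdot)^\gamma$ distributes over the existential constructor by definition.

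The main obstacle is s-wit: the induction hypothesis supplies $\subtjudge{\tenv^\gamma, y:\tval'^\gamma}{\tval_1^\gamma}{(\tval_2[y/x])^\gamma}$, but concrete s-wit demands $\tval_2^\gamma[y/x]$ on the right, and the soundness inequality goes the ``wrong'' way ($\tval_2^\gamma[y/x] \subseteq (\tval_2[y/x])^\gamma$). I would resolve this by arranging the concretized type system so that its substitution operator is itself defined to coincide with (or subsume) the concretization of the abstract substitution on types in the image of $(\cdot)^\gamma$; this is consistent with the paper's stipulation that concretized operators be defined precisely, because the two inequalities obtained from the soundness specification collapse to equality when $\bgamma$ is the identity. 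Once this identification is in place, concrete s-wit closes the case from the two inductive hypotheses. The resulting concretized derivation is then consumed, together with the companion lemma for typing derivations, by the eventual progress/preservation argument that proves \cref{thm:soundness}.
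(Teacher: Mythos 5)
Your proposal takes essentially the same route as the paper's (very terse) proof: induction on the subtyping derivation with case analysis on the last rule applied, discharging \textsc{s-base} via the soundness condition on strengthening to get $\tstrengthen{\bval^\gamma}{\tenv^\gamma} \subseteq \bgamma(\tstrengthen{\bval}{\tenv})$ --- the paper's argument says no more than this, so your write-up is if anything more careful. The one place you go beyond it is \textsc{s-wit}, where the directional mismatch you identify between $(\tval_2[y/x])^\gamma$ and $\tval_2^\gamma[y/x]$ is genuine and simply not discussed in the paper; your repair (making the concretized substitution agree with the concretization of the abstract one) is plausible, but note that it departs from the paper's stipulation that the concrete operators be the \emph{most precise} ones satisfying the soundness conditions, so if you adopt it you should re-check that the downstream semantic lemmas about the concretized system (e.g., that concrete subtyping implies semantic inclusion) still hold with the modified operator.
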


\begin{proof}
  Straightforward by induction on the derivation of $\subtjudge{\tenv}{\tval}{\tval'}$ and case analysis on the last rule applied in the derivation. For the base case of rule \ruleref{s-base} we use the soundness condition on $\tstrengthen{\bval}{\tenv}$ to establish $\tstrengthen{\bval^\gamma}{\tenv^\gamma} \subseteq \bgamma(\tstrengthen{\bval}{\tenv})$.
\end{proof}

\begin{lemma}
  \label{lem:abstract-typing-gives-concrete-typing}
  If $\tjudge{\tenv;\eff}{\term}{\tval}{\eff'}$, then $\tjudge{\tenv^\gamma;\eff^\gamma}{\term}{\tval^\gamma}{\eff'^\gamma}$.    
\end{lemma}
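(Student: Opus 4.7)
The plan is to proceed by induction on the derivation of $\tjudge{\tenv;\eff}{\term}{\tval}{\eff'}$, with case analysis on the last typing rule applied. In each case we will replay the corresponding rule of the concretized type system on the concretized judgement, possibly followed by one application of \ruleref{t-weaken} to absorb the gap between an abstract operator and its precise concrete counterpart. The soundness conditions on the abstract operators stated in \cref{sec:types-and-effects} (for $\odot$, $\tstrengthen{\cdot}{\cdot}$, $(\cdot)[y/x]$, and $\texists{x}{\tval}{\cdot}$) are exactly what is needed: each says that the $\gamma$-image of the abstract result contains the corresponding concrete result, i.e., the abstract result is a (possibly loose) superset after concretization, so the concrete derivation is permitted to use the smaller, precise operator and then weaken up to the concretized abstract type/effect.

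First I would dispatch the easy base cases. For \ruleref{t-const}, by the assumption $\bgamma(\{\nu=c\}_\bdom) \supseteq \{c\} \times \cenvs$, the concrete type $\{\nu=c\}_{\powerset(\cvalues \times \cenvs)}$ is a subtype of $\{\nu=c\}_\bdom^\gamma$, so apply the concrete \ruleref{t-const} and then \ruleref{t-weaken}. For \ruleref{t-var}, since $(\tenv^\gamma)(x) = \tenv(x)^\gamma$, the concrete \ruleref{t-var} applies directly. For \ruleref{t-ev}, the inductive hypothesis supplies a concrete derivation for $\term$ and the soundness of $\odot$ gives $(\eff' \odot \tval)^\gamma \supseteq \eff'^\gamma \odot_{\text{concrete}} \tval^\gamma$, so apply the concrete \ruleref{t-ev} and weaken. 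The cases \ruleref{t-abs} and \ruleref{t-app} are structural: the concretization commutes with the function-type and existential-type constructors by definition of $(\cdot)^\gamma$, so the inductive hypotheses directly supply the premises (using that $(\tenv, x:\tval)^\gamma = \tenv^\gamma, x:\tval^\gamma$); for \ruleref{t-app} the existential projection appearing in the conclusion concretizes to the concrete existential projection plus a weakening.

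For \ruleref{t-weaken}, I would combine the inductive hypothesis with \cref{lem:abstract-subtyping-gives-concrete-subtyping} to get $\subtjudge{\tenv^\gamma}{\tval'^\gamma}{\tval^\gamma}$, and use the soundness of abstract strengthening on effects to obtain $\tstrengthen{\eff^\gamma}{\tenv^\gamma} \subseteq \effgamma(\tstrengthen{\eff}{\tenv}) \subseteq \effp^\gamma$ (and symmetrically on the output side), then apply concrete \ruleref{t-weaken}. The case \ruleref{t-cut} uses the inductive hypotheses on the two premises together with the fact that $(\texists{x}{\tval}{(\tandeff{\tval'}{\eff'})})^\gamma$ matches, up to a subtyping step, the existential formed in the concrete system.

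The main obstacle will be the book-keeping around existentials and strengthening: rule \ruleref{t-cut} and the existential used in \ruleref{t-app} interact with the soundness conditions on $\tstrengthen{\cdot}{\cdot}$ and $\texists{x}{\tval}{\eff}$ only through inclusions, not equalities, so I will need to insert \ruleref{t-weaken} steps and be careful about variable scoping in the extended typing environments to keep well-formedness. Once these inclusions are threaded through, every other case is essentially a direct translation of the abstract rule to its concrete counterpart.
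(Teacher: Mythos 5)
Your proposal matches the paper's proof: the same induction on the typing derivation with case analysis on the last rule, replaying each rule in the concretized system and inserting a final \ruleref{t-weaken} justified by the soundness inclusions of the abstract operators ($\odot$, strengthening, existential projection), and invoking \cref{lem:abstract-subtyping-gives-concrete-subtyping} in the \ruleref{t-weaken} case. The approach and the key ingredients in each case are essentially identical to the paper's argument.
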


\begin{proof}
  The proof goes by induction on the derivation of $\tjudge{\tenv;\eff}{\term}{\tval}{\eff'}$. We do case analysis on the last typing rule that has been applied in the derivation.

  \begin{description}
  \item[Case \ruleref{t-var}] We have $\term = x$ for some $x$ such that $\tenv(x)=\tval$. We must also have $\eff=\eff'$. From the definition of $\tenv^\gamma$ it follows that $\tenv^\gamma(x)=\tval^\gamma$ and, hence, $\tjudge{\tenv^\gamma;\eff^\gamma}{x}{\tval^\gamma}{\eff^\gamma}$ using rule \ruleref{t-var}.
  \item[Case \ruleref{t-const}] We have $\term = c$ for some $c$ such that $\tval=\{\nu = c\}_\bdom$. Moreover, we must have $\eff'=\eff$. Using rule \ruleref{t-const} we infer $\tjudge{\tenv^\gamma;\eff^\gamma}{x}{\{\nu = x\}_{\powerset(\cvalues \times \cenvs)}}{\eff^\gamma}$. By assumption on $\{\nu = x\}_{\bdom}$ we have $\tval^\gamma = \bgamma(\{\nu = x\}_{\bdom}) \supseteq \{c\} \times \cenvs = \{\nu = x\}_{\powerset(\cvalues \times \cenvs)}$. It follows that $\tstrengthen{\{\nu = x\}_{\powerset(\cvalues \times \cenvs)}}{\tenv^\gamma} \subseteq \tval^\gamma$. As $\tstrengthen{\eff^\gamma}{\tenv} \subseteq \eff^\gamma$ holds trivially,  we can use rule \ruleref{t-weaken} to derive $\tjudge{\tenv^\gamma;\eff^\gamma}{x}{\tval^\gamma}{\eff^\gamma}$.
  \item[Case \ruleref{t-ev}] We have $e = \termkw{ev}~\term_1$ for some $\term_1$ and $\tval=\{\nu = \unitval\}_\bdom$. Moreover, there exists $\eff_1$ and $\bval_1$ such that $\tjudge{\tenv;\eff}{\term_1}{\bval_1}{\eff_1}$ and $\eff' = \eff_1 \odot \bval_1$. By induction hypothesis we have $\tjudge{\tenv^\gamma;\eff^\gamma}{\term_1}{\bval_1^\gamma}{\eff_1^\gamma}$. Thus, using rule \ruleref{t-ev} we can derive $\tjudge{\tenv^\gamma;\eff^\gamma}{\termkw{ev}~\term_1}{\{\nu = \unitval\}_{\powerset(\cvalues \times \cenvs)}}{(\eff^\gamma \odot \bval_1^\gamma)}$.
    We have by assumption that $\tval^\gamma = \bgamma(\{\nu = \unitval\}_\bdom) \supseteq \{\unitval\} \times \cenvs = \{\nu = \unitval\}_{\powerset(\cvalues \times \cenvs)}$. Moreover, we trivially have $\tstrengthen{\eff^\gamma}{\tenv^\gamma} \subseteq \eff^\gamma$. Finally, by assumption on $\odot$ we have $\eff'^\gamma = \bgamma(\eff_1 \odot \bval_1) \supseteq \effgamma(\eff_1) \odot \bgamma(\bval_1) = \eff_1^\gamma \odot \bval_1^\gamma$. It follows that $\tstrengthen{(\eff_1^\gamma \odot \bval_1^\gamma)}{\tenv^\gamma} \subseteq \eff'^\gamma$. Using rule \ruleref{t-weaken} we can thus derive $\tjudge{\tenv^\gamma;\eff^\gamma}{\term}{\tval^\gamma}{\eff'^\gamma}$.
  \item[Case \ruleref{t-app}]
    We have $\term = \term_1~\term_2$ for some $\term_1$ and $\term_2$.
    Moreover, there exist $\effp$, $\eff_1$, and $\eff_2$ as well as $\tval'$, $\tval_1$ and $\tval_2$ such that $(\tandeff{\tval}{\eff'}) = \texists{x}{\tval_2}{(\tandeff{\tval'}{\effp})}$,
    $\tjudge{\tenv;\eff}{e_1}{\tval_1}{\eff_1}$, $\tjudge{\tenv;\eff_1}{e_2}{\tval_2}{\eff_2}$, and $\tval_1 = \tfun{x}{\tval_2}{\eff_2}{\tval'}{\effp}$. By induction hypothesis we have $\tjudge{\tenv^\gamma;\eff^\gamma}{e_1}{\tval_1^\gamma}{\eff_1^\gamma}$ and $\tjudge{\tenv^\gamma;\eff_1^\gamma}{e_2}{\tval_2^\gamma}{\eff_2^\gamma}$. Using rule \ruleref{t-app} we conclude $\tjudgesimp{\tenv^\gamma;\eff^\gamma}{\term}{\tandeff{\tval^\gamma}{(\texists{x}{\tval_2^\gamma}{\effp^\gamma})}}$. We trivially have $\subtjudge{\tenv^\gamma}{\tval^\gamma}{\tval^\gamma}$ and $\tstrengthen{\eff^\gamma}{\tenv^\gamma} \subseteq \eff^\gamma$. Moreover, by soundness of strengthening we have
      $(\texists{x}{\tval_2^\gamma}{\effp^\gamma}) \subseteq \effgamma(\texists{x}{\tval_2}{\effp}) = \eff'^\gamma$. It follows that $\tstrengthen{(\texists{x}{\tval_2^\gamma}{\effp^\gamma})}{\tenv^\gamma} \subseteq \eff'^\gamma$. Hence, using rule \ruleref{t-weaken} we derive $\tjudge{\tenv^\gamma;\eff^\gamma}{\term}{\tval^\gamma}{\eff'^\gamma}$.      
  \item[Case \ruleref{t-abs}]
    We have $\eff=\eff'$, $\term = (\lambda x.\, \term_1)$ for some $x$ and $\term_1$, and $\tau = \tfun{x}{\tval_2}{\eff_2}{\tval_1}{\eff_1}$ such that $\tjudge{\tenv,x: \tval_2; \eff_2}{\term_1}{\tval_1}{\eff_1}$. By induction hypothesis, we obtain $\tjudge{\tenv^\gamma,x: \tval_2^\gamma; \eff_2^\gamma}{\term_1}{\tval_1^\gamma}{\eff_1^\gamma}$. Using rule \ruleref{t-abs} we can immediately conclude $\tjudge{\tenv^\gamma;\eff^\gamma}{\term}{\tval^\gamma}{\eff'^\gamma}$.
  \item[Case \ruleref{t-cut}]
      There must exist $\cval$, $\tval'$, $\tval''$, $\effp$, and $x$ such that $\tjudge{\tenv;\eff}{\cval}{\tval'}{\eff}$, $x \notin \fv(\cval)$, $\tjudge{\tenv,x:\tval' ; \eff}{\term}{\tval''}{\effp}$, $\tval=\texists{x}{\tval'}{\tval''}$, and $\eff'=\texists{x}{\tval'}{\effp}$.
      By induction hypothesis, we have $\tjudge{\tenv^\gamma;\eff^\gamma}{\cval}{\tval'^\gamma}{\eff^\gamma}$ and $\tjudge{\tenv^\gamma,x:\tval'^\gamma ; \eff^\gamma}{\term}{\tval''^\gamma}{\effp^\gamma}$. Using rule \ruleref{t-cut} we can thus derive $\tjudgesimp{\tenv^\gamma;\eff^\gamma}{\term}{\texists{x}{\tval'^\gamma}{(\tandeff{\tval^\gamma}{\effp^\gamma})}}$. We trivially have $\subtjudge{\tenv^\gamma}{\tval^\gamma}{\tval^\gamma}$ and $\tstrengthen{\eff^\gamma}{\tenv^\gamma} \subseteq \eff^\gamma$. Moreover, by soundness of strengthening we have
      $(\texists{x}{\tval'^\gamma}{\effp^\gamma}) \subseteq \effgamma(\texists{x}{\tval'}{\effp}) = \eff'^\gamma$. It follows that $\tstrengthen{(\texists{x}{\tval'^\gamma}{\effp^\gamma})}{\tenv^\gamma} \subseteq \eff'^\gamma$. Hence, using rule \ruleref{t-weaken} we derive $\tjudge{\tenv^\gamma;\eff^\gamma}{\term}{\tval^\gamma}{\eff'^\gamma}$.      
    \item[Case \ruleref{t-weaken}]
      There must exist $\effp$, $\effp'$, and $\tval'$ such that $\tstrengthen{\eff}{\tenv} \efford \effp$, $\tjudge{\tenv ; \effp}{\term}{\tval'}{\effp'}$, $\subtjudge{\tenv}{\tval'}{\tval}$, and $\tstrengthen{\effp'}{\tenv} \efford \eff'$. By induction hypothesis, we have $\tjudge{\tenv^\gamma ; \effp^\gamma}{\term}{\tval'^\gamma}{\effp'^\gamma}$. Because of monotonicity of $\effgamma$, we have $\effgamma(\tstrengthen{\eff}{\tenv}) \subseteq \effgamma(\effp)=\effp^\gamma$.
      Because of the soundness assumption on effect strengthening, we also have $\tstrengthen{\eff^\gamma}{\tenv^\gamma} = \effgamma(\effp) \cap \tgamma(\tenv) \subseteq \effgamma(\tstrengthen{\eff}{\tenv})$. Thus, we obtain $\tstrengthen{\eff^\gamma}{\tenv^\gamma} \subseteq \effp^\gamma$. Using similar reasoning, we infer $\tstrengthen{\effp'^\gamma}{\tenv^\gamma} \subseteq \eff'^\gamma$. Finally, we apply \cref{lem:abstract-subtyping-gives-concrete-subtyping} to obtain $\subtjudge{\tenv^\gamma}{\tval'^\gamma}{\tval^\gamma}$. Then we can use rule \ruleref{t-weaken} to derive $\tjudge{\tenv^\gamma;\eff^\gamma}{\term}{\tval^\gamma}{\eff^\gamma}$. \qedhere
  \end{description}    
\end{proof}

We next show that the concrete instantiation of the type system satisfies progress and preservation. To ease notation, from here on meta variables like $\tval$, $\eff$, and $\tenv$ refer to types and effects of the concrete instantiation of the type system (unless specified otherwise).

We start with two technical lemmas that are needed to prove progress.

\begin{lemma}[Subtyping monotone]
  \label{lem:subtyping-monotone}
  If $\subtjudge{\tenv}{\tval}{\tval'}$, $\pair{\cval}{\cenv} \in \tval$, and $\cenv \in \tgamma(\tenv)$, then $\pair{\cval}{\cenv} \in \tval'$.
\end{lemma}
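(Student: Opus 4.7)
The plan is to prove this by structural induction on the derivation of $\subtjudge{\tenv}{\tval}{\tval'}$, with a case split on the last rule applied. Since we are now in the concretized instantiation of the type system, all abstract-domain operations collapse to their most precise semantic counterparts: strengthening becomes literal intersection $\bval \cap (\cvalues \times \tgamma(\tenv))$, $\bord$ is subset inclusion, and substitution satisfies $\pair{\cval}{\cenv} \in \tval_2[y/x]$ iff $\pair{\cval}{\cenv[x \mapsto \cenv(y)]} \in \tval_2$. As a result, each case reduces to unfolding a definition and invoking the inductive hypothesis.

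The simple cases dispatch quickly. For \ruleref{s-base}, the premise reads $\bval_1 \cap (\cvalues \times \tgamma(\tenv)) \subseteq \bval_2$, so from $\pair{\cval}{\cenv} \in \bval_1$ and $\cenv \in \tgamma(\tenv)$ we directly obtain $\pair{\cval}{\cenv} \in \bval_2$. The \ruleref{s-fun} case is vacuous: by definition $\tgamma$ maps every function type to $\cvalues \times \cenvs$, so $\pair{\cval}{\cenv} \in \tval'$ holds automatically regardless of the premises.

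The interesting cases are the two dealing with existentials. For \ruleref{s-exists} I would unpack $\pair{\cval}{\cenv} \in \texists{x}{\tval_1}{\tval_2}$ to obtain a witness $\cval'$ with $\pair{\cval'}{\cenv} \in \tval_1$ and $\pair{\cval}{\cenv[x \mapsto \cval']} \in \tval_2$, observe that $\cenv[x \mapsto \cval'] \in \tgamma(\tenv, x:\tval_1)$, and apply the inductive hypothesis to the premise to conclude $\pair{\cval}{\cenv[x \mapsto \cval']} \in \tval_2'$; a final appeal to well-formedness of the conclusion (which prevents $\tval_2'$ from depending on $x$) drops the update and yields $\pair{\cval}{\cenv} \in \tval_2'$. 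For \ruleref{s-wit} I would use $\cenv(y)$ as the existential witness: the inductive hypothesis on the first premise, together with $\pair{\cenv(y)}{\cenv} \in \tval'$ supplied by $\cenv \in \tgamma(\tenv, y:\tval')$, gives $\pair{\cenv(y)}{\cenv} \in \tval$; the inductive hypothesis on the second premise, combined with the precise semantics of substitution, gives $\pair{\cval}{\cenv[x \mapsto \cenv(y)]} \in \tval_2$. These two pieces together witness $\pair{\cval}{\cenv} \in \texists{x}{\tval}{\tval_2}$.

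The only real subtlety is the \ruleref{s-exists} step, which leans on the implicit convention that subtyping judgments involve types well-formed with respect to the ambient context. To keep the argument rigorous, I would factor out a short irrelevance lemma stating that if $x \notin \dom(\tval)$ then $\pair{\cval}{\cenv[x \mapsto d]} \in \tval$ iff $\pair{\cval}{\cenv} \in \tval$, proved by a routine structural induction on $\tval$, and invoke it to forget the binding $x \mapsto \cval'$ at the very end of that case.
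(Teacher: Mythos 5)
First, note that the paper does not actually prove this lemma: it is stated without proof in Apx.~\ref{sec:soundness} (the proof that immediately follows it belongs to \cref{lem:values}), so there is no official argument to compare yours against. Your overall strategy---induction on the subtyping derivation with a case per rule, exploiting that the concretized system instantiates strengthening as literal intersection---is the natural one, and your treatment of \ruleref{s-base} and \ruleref{s-fun} is exactly right. Your handling of \ruleref{s-exists} is also essentially correct, and you are right to isolate the irrelevance lemma (membership in $\tgamma(\tval)$ is insensitive to updates of variables outside $\dom(\tval)$) together with the implicit well-formedness of the conclusion type as the ingredients that make the final step go through.

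The one place where your argument does not close under the paper's literal definitions is \ruleref{s-wit}. You take $\cenv(y)$ as the existential witness and justify $\pair{\cenv(y)}{\cenv} \in \tgamma(\tval')$ by $\cenv \in \tgamma(\tenv, y:\tval')$. But the paper defines $\tgamma(\tenv, y:\tval')$ as $\tgamma(\tenv) \cap \pset{\cenv}{\exists \cval.\, \pair{\cval}{\cenv} \in \tgamma(\tval')}$: it guarantees only that \emph{some} value inhabits $\tval'$ at $\cenv$, not that $\cenv(y)$ does. Likewise, the most precise substitution operator satisfying the paper's specification is the image $\pset{\pair{v}{\rho[x \mapsto \rho(y)]}}{\pair{v}{\rho} \in \tval_2}$, so membership of $\pair{\cval}{\cenv}$ in $\tval_2[y/x]$ only yields an anonymous preimage $\rho$ whose value at $x$ need not equal $\cenv(y)$ nor the anonymous inhabitant of $\tval'$. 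The two conjuncts of $\tgamma(\texists{x}{\tval}{\tval_2})$ therefore end up with potentially different witnesses and cannot both be discharged by the single $\cval'$ you propose. This arguably traces back to looseness in the paper's definitions rather than a conceptual error on your part---with an environment concretization that ties the binding for $y$ to $\cenv(y)$ and a substitution that genuinely renames the dependency, your argument is the intended one---but as written the \ruleref{s-wit} case needs either those strengthened definitions or a more careful threading of the existential witnesses.
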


\begin{lemma}[Value typing]
  \label{lem:values}
  If $\tjudge{\tenv;\eff}{\cval}{\tval}{\eff'}$, then for all $\eff''$ with $\dom(\eff'') \subseteq \dom(\tenv)$, we have $\eff''$, $\tjudge{\tenv;\eff''}{\cval}{\tval}{\eff''}$. Moreover, for all $\cenv \in \tgamma(\tenv)$, $\pair{\cval}{\cenv} \in \tgamma(\tval)$.
\end{lemma}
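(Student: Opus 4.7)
The plan is to prove both claims simultaneously by induction on the derivation of $\tjudge{\tenv;\eff}{\cval}{\tval}{\eff'}$, doing case analysis on the last rule applied. Because $\cval$ is a value (a constant or a lambda abstraction), rules \ruleref{t-var}, \ruleref{t-ev}, and \ruleref{t-app} cannot appear as the last step, leaving only \ruleref{t-const}, \ruleref{t-abs}, \ruleref{t-weaken}, and \ruleref{t-cut}. Recall that we are working inside the concretized instantiation of the type system, so $\bgamma$ and $\effgamma$ are the identity and effect strengthening reduces to intersection.

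For the two value-introduction rules the argument is immediate. In the \ruleref{t-const} case $\cval = c$, $\tval = \{\nu = c\}$, and $\eff' = \eff$, so reapplying \ruleref{t-const} with any chosen $\eff''$ suffices; the semantic claim follows from $\{c\}\times\cenvs \subseteq \tval$. In the \ruleref{t-abs} case the premise on the function body does not mention $\eff$ at all, so \ruleref{t-abs} reapplies for any $\eff''$, and $\tgamma$ on function types is the whole of $\cvalues\times\cenvs$, which settles the semantic part.

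In the \ruleref{t-weaken} case the premise has shape $\tjudge{\tenv;\effp}{\cval}{\tval'}{\effp'}$ together with $\subtjudge{\tenv}{\tval'}{\tval}$. The induction hypothesis yields $\tjudge{\tenv;\eff''}{\cval}{\tval'}{\eff''}$ for every admissible $\eff''$, and reapplying \ruleref{t-weaken} with $\effp = \effp' = \eff''$ and the same subtyping derivation delivers $\tjudge{\tenv;\eff''}{\cval}{\tval}{\eff''}$; the two effect side conditions reduce to $\tstrengthen{\eff''}{\tenv}\efford\eff''$, which is trivial because concrete strengthening is intersection. The semantic claim follows by composing the inductive conclusion with \cref{lem:subtyping-monotone}.

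The only genuine obstacle is \ruleref{t-cut}, where the last step introduces an existential via a witness $\cval'$ with $\tjudge{\tenv;\eff}{\cval'}{\tval''}{\eff}$, $x \notin \fv(\cval)$, and $\tjudge{\tenv,x:\tval'';\eff}{\cval}{\tval'}{\effp}$, producing $\tval = \texists{x}{\tval''}{\tval'}$ and $\eff' = \texists{x}{\tval''}{\effp}$. Well-formedness of $\tenv,x:\tval''$ forces $x \notin \dom(\tenv)$, hence $x \notin \dom(\eff'')$. Applying the induction hypothesis to both premises with the same $\eff''$ gives $\tjudge{\tenv;\eff''}{\cval'}{\tval''}{\eff''}$ and $\tjudge{\tenv,x:\tval'';\eff''}{\cval}{\tval'}{\eff''}$, and a second use of \ruleref{t-cut} yields $\tjudge{\tenv;\eff''}{\cval}{\texists{x}{\tval''}{\tval'}}{\texists{x}{\tval''}{\eff''}}$. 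A final \ruleref{t-weaken} collapses the existential effect back to $\eff''$: because $x$ does not appear in $\eff''$, the concrete definition of $\texists{x}{\tval''}{\eff''}$ is included in $\eff''$, so the effect strengthening side condition is met. For the semantic claim we pick $\cval'$ itself as the existential witness: the first IH places $\pair{\cval'}{\cenv}\in\tgamma(\tval'')$, so $\cenv[x\mapsto\cval']\in\tgamma(\tenv,x:\tval'')$, and the second IH then yields $\pair{\cval}{\cenv[x\mapsto\cval']}\in\tgamma(\tval')$, which is exactly what $\tgamma(\texists{x}{\tval''}{\tval'})$ requires.
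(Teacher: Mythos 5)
Your proof is correct and follows essentially the same route as the paper's: induction on the derivation, discharging \ruleref{t-const} and \ruleref{t-abs} directly, handling \ruleref{t-weaken} via the induction hypothesis plus \cref{lem:subtyping-monotone}, and handling \ruleref{t-cut} by applying the IH to both premises and reassembling with \ruleref{t-cut}. In fact you are slightly more careful than the paper in the \ruleref{t-cut} case, where you explicitly add the final \ruleref{t-weaken} to collapse the output effect $\texists{x}{\tval''}{\eff''}$ back to $\eff''$ using $x \notin \dom(\eff'')$ --- a step the paper's write-up glosses over.
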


\begin{proof}
  The proof goes by induction on the derivation of $\tjudge{\tenv;\eff}{\cval}{\tval}{\eff'}$. We do case analysis on the last typing rule that has been applied in the derivation.

\begin{description}
\item[Case \ruleref{t-var}, \ruleref{t-app}, and \ruleref{t-ev}] These rules cannot be the last rules that have been applied in the derivation since they do not apply to values.
\item[Case \ruleref{t-const}]
  We must have $\tval = \{\nu = \cval\} \in \bdom$ and $\eff'=\eff$. Let $\eff''$ be such that $\dom(\eff'') \subseteq \dom(\tenv)$. Using rule $\ruleref{t-const}$ we can immediately derive $\tjudge{\tenv;\eff''}{\cval}{\tval}{\eff''}$. Moreover, let $\cenv \in \tgamma(\tenv)$. By assumption on the operator $\{\nu = \cval\}$ we have $\{\nu = \cval\} \supseteq \set{\cval} \times \cenvs$. It follows that we have $\pair{\cval}{\cenv} \in \{\nu = \cval\} = \tval = \tgamma(\tval)$.
\item[Case \ruleref{t-abs}] We must have $\cval = \lambda x.\,\term$ for some $x$ and $\term$. Moreover, there must be $\tval_1$ and $\tval_2$, $\eff_1$ and $\eff_2$ such that $\tjudge{\tenv, x:\tval_2;\eff_2}{\term}{\tval_1}{\eff_1}$ and $\tval = (\tfun{x}{\tval_2}{\eff_2}{\tval_1}{\eff_1})$. Let $\eff''$ be such that $\dom(\eff'') \subseteq \dom(\tenv)$. Using rule \ruleref{t-abs} we can immediately derive $\tjudge{\tenv;\eff''}{\cval}{\tval}{\eff''}$. Moreover, let $\cenv \in \tgamma(\tenv)$. We have $\pair{\cval}{\cenv} \in \cvalues \times \cenvs = \tgamma(\tval)$.
\item[Case \ruleref{t-weaken}]
  There must exist $\effp$, $\effp'$, and $\tval'$ such that $\tstrengthen{\eff}{\tenv} \efford \effp$, $\tjudge{\tenv ; \effp}{\cval}{\tval'}{\effp'}$, $\subtjudge{\tenv}{\tval'}{\tval}$, and $\tstrengthen{\effp'}{\tenv} \efford \eff'$. Let $\eff''$ be such that $\dom(\eff'') \subseteq \dom(\tenv)$. By induction hypothesis, we conclude $\tjudge{\tenv ; \eff''}{\cval}{\tval'}{\eff''}$. By monotonicity of strengthening, we have $\tstrengthen{\eff''}{\tenv} \efford \eff''$. Using rule \ruleref{t-weaken} we can thus conclude $\tjudge{\tenv ; \eff''}{\cval}{\tval}{\eff''}$.
  Moreover, let $\cenv \in \tgamma(\tenv)$. By induction hypothesis, we have $\pair{\cval}{\cenv} \in \tgamma(\tval')$. Then by \cref{lem:subtyping-monotone} we have $\pair{\cval}{\cenv} \in \tgamma(\tval)$.  
\item[Case \ruleref{t-cut}] There must exist $\cval'$, $\tval'$, $\tval''$, $\effp$, and $x$ such that $\tjudge{\tenv;\eff}{\cval'}{\tval'}{\eff}$, $x \notin \fv(\cval)$, $\tjudge{\tenv,x:\tval' ; \eff}{\cval}{\tval''}{\effp}$, $\tval=\texists{x}{\tval'}{\tval''}$, and $\eff'=\texists{x}{\tval'}{\effp}$. Let $\eff''$ be such that $\dom(\eff'') \subseteq \dom(\tenv)$. Then also $\dom(\eff'') \subseteq \dom(\tenv,x:\tval')$. It follows by induction hypothesis that $\tjudge{\tenv,x:\tval' ; \eff''}{\cval}{\tval''}{\eff''}$. Using rule \ruleref{t-cut} we can thus derive $\tjudge{\tenv; \eff''}{\cval}{\tval}{\eff''}$. Moreover, let $\cenv \in \tgamma(\tenv)$. By induction hypothesis, we have $\pair{\cval'}{\cenv} \in \tgamma(\tval')$. Since $ \notin \dom(\tenv)$, it follows that $\cenv[x \mapsto \cval'] \in \tgamma(\tenv,x:\tval')$. Then again by induction hypothesis, $\pair{\cval}{\cenv[x \mapsto \cval']} \in \tgamma(\tval'')$. From the definition of $\tgamma$, we conclude $\pair{\cval}{\cenv} \in \tgamma(\texists{x}{\tval'}{\tval''}) = \tgamma(\tval)$. \qedhere
\end{description}
\end{proof}

We are now ready to show that the concrete instantiation of the type system satisfies progress. 

\begin{theorem}[Progress]
  \label{thm:progress}
  Let $\term$ be a closed term.
  If $\tjudge{\tenv;\eff}{\term}{\tval}{\eff'}$, then for all $\pair{\ceff}{\cenv} \in \tstrengthen{\eff}{\tenv}$, 
  $\term$ is a value and $\pair{\ceff}{\cenv} \in \eff'$ or there exist $\ceff'$ and $\term'$ such that $\step{\term,\ceff}{\term',\ceff'}$.
\end{theorem}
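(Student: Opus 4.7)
The plan is to prove \cref{thm:progress} by structural induction on the derivation of $\tjudge{\tenv;\eff}{\term}{\tval}{\eff'}$, with case analysis on the last rule applied. Because $\term$ is closed, the \ruleref{t-var} case does not arise. The value rules \ruleref{t-const} and \ruleref{t-abs} force $\eff'=\eff$, and because in the concrete instantiation $\tstrengthen{\cdot}{\tenv}$ is intersection with $\ceffs\times\tgamma(\tenv)$, the hypothesis $\pair{\ceff}{\cenv}\in\tstrengthen{\eff}{\tenv}$ immediately yields $\pair{\ceff}{\cenv}\in\eff=\eff'$.

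For the elimination rules \ruleref{t-app} and \ruleref{t-ev}, the term is never a value, so I must exhibit a reduction. In each case I apply the induction hypothesis to the typed subterms in evaluation order and lift internal steps via \ruleref{e-context}; when all subterms are already values I fire \ruleref{e-app} or \ruleref{e-ev}. The \ruleref{t-app} case additionally needs a canonical forms lemma: if $\tjudge{\tenv;\eff}{\cval}{\tfun{x}{\tval_2}{\eff_2}{\tval_1}{\eff_1}}{\eff'}$ then $\cval=\lambda y.\,e$. I prove this by first characterizing, by induction on the subtyping derivation, the types $\tval'$ from which a syntactic function type is reachable by subtyping (namely function types themselves and existentials whose body is function-reachable, since \ruleref{s-base} stays in base types, \ruleref{s-fun} requires function sources, \ruleref{s-wit} produces existentials, and \ruleref{s-exists} recurses into the existential body). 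It then follows by a further induction on the typing derivation that a constant $c$, whose originating \ruleref{t-const} type lies in $\bdom$ and which is preserved as ``not function-reachable'' by \ruleref{t-weaken} and \ruleref{t-cut}, cannot ultimately carry a function type.

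The structural rules \ruleref{t-weaken} and \ruleref{t-cut} amount to careful bookkeeping. For \ruleref{t-weaken}, I push $\pair{\ceff}{\cenv}\in\tstrengthen{\eff}{\tenv}$ through $\tstrengthen{\eff}{\tenv}\efford\effp$ into the premise, apply the induction hypothesis, and if the resulting subterm is a value push $\pair{\ceff}{\cenv}\in\effp'$ back out through $\tstrengthen{\effp'}{\tenv}\efford\eff'$, using that the concrete strengthening already enforces $\cenv\in\tgamma(\tenv)$. For \ruleref{t-cut}, I use \cref{lem:values} on the witness subderivation to produce $\cval'$ with $\pair{\cval'}{\cenv}\in\tgamma(\tval')$, so that $\cenv[x\mapsto\cval']\in\tgamma(\tenv,x:\tval')$; then I invoke the induction hypothesis on the body judgment under this extended environment, and either propagate the produced reduction or repackage the output effect membership as an existential projection of the form $\texists{x}{\tval'}{\effp'}$ matching the conclusion.

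I expect the main obstacle to be the canonical forms lemma. It is conceptually standard, but the combination of \ruleref{t-weaken}, \ruleref{t-cut}, and the \ruleref{s-exists}/\ruleref{s-wit} rules admits multiple ways of re-shaping a type that must be systematically ruled out, and requires the auxiliary subtyping characterization above. Once canonical forms is in place, progress reduces to the routine syntactic argument together with the concrete effect-tracking manipulations described.
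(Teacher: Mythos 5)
Your proposal is correct and follows essentially the same route as the paper's proof: induction on the typing derivation, immediate discharge of the value rules, evaluation-order case analysis with \ruleref{e-context} lifting for \ruleref{t-app} and \ruleref{t-ev}, pushing the effect membership through strengthening for \ruleref{t-weaken}, and using \cref{lem:values} to extend the environment in the \ruleref{t-cut} case. The only difference is that you explicitly prove the canonical forms lemma for function types (which cannot be extracted from $\tgamma$ since function types concretize to $\cvalues\times\cenvs$), whereas the paper simply asserts that a value of function type is a lambda; your syntactic characterization of subtyping-reachability of function types is the right way to fill that gap.
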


\begin{proof}
  The proof goes by induction on the derivation of $\tjudge{\tenv;\eff}{e}{\tval}{\eff'}$. We do case analysis on the last typing rule that has been applied in the derivation.
  
\begin{description}
\item[Case \ruleref{t-weaken}]
  We have $\tjudge{\tenv;\effp}{\term}{\tval'}{\effp'}$ for some $\effp$, $\tval'$, and $\effp'$ such that $\tstrengthen{\eff}{\tenv} \subseteq \effp$, $\tstrengthen{\effp'}{\tenv} \subseteq \eff'$, and $\subtjudge{\tenv}{\tval'}{\tval}$.
  Let $\pair{\ceff}{\cenv} \in \tstrengthen{\eff}{\tenv}$. Then also $\cenv \in \tgamma(\tenv)$.
  From $\tstrengthen{\eff}{\tenv} \subseteq \effp$ we then conclude $\pair{\ceff}{\cenv} \in \tstrengthen{\effp}{\tenv}$.
  It follows from the induction hypothesis that $e$ is a value and $\pair{\ceff}{\cenv} \in \effp'$ or there exist $\ceff'$ and $\term'$ such that $\step{\term,\ceff}{\term',\ceff'}$. In the second case we are done. In the first case, we use $\tstrengthen{\effp'}{\tenv} \subseteq \eff'$ and $\cenv \in \tgamma(\tenv)$ to conclude that $\pair{\ceff}{\cenv} \in \eff'$.

\item[Case \ruleref{t-cut}] We have $\tval=\texists{x}{\tval'}{\tval''}$ and $\eff=\texists{x}{\tval'}{\eff''}$ such that  $\tjudge{\tenv;\eff}{\cval}{\tval'}{\eff}$, and
  $\tjudge{\tenv,x:\tval';\eff}{\term}{\tval''}{\eff''}$. Therefore, we must also have $\wf(\tenv,x:\tval')$.
  Let $\pair{\ceff}{\cenv} \in \tstrengthen{\eff}{\tenv}$. Then also $\cenv \in \tgamma(\tenv)$. 
  By \cref{lem:values}, we have $\pair{\cval}{\cenv} \in \tgamma(\tval')$.
  It follows from the definition of $\tgamma$ and $\wf(\tenv,x:\tval')$ that $\cenv[x \mapsto \cval] \in \tgamma(\tenv,x:\tval')$. From the monotonicity of strengthening, it follows that $\pair{\cval}{\cenv} \in \tstrengthen{\eff}{\varnothing} = \eff$. Moreover, $\dom(\eff) \subseteq \dom(\tenv)$ and $x \notin \dom(\tenv)$ implies $x \notin \dom(\eff)$. Therefore, $\pair{\ceff}{\cenv[x\mapsto\cval]} \in \eff$. By the definition of strengthening we now conclude $\pair{\ceff}{\cenv[x\mapsto\cval]} \in \tstrengthen{\eff}{\tenv,x:\tval'})$.
  From the induction hypothesis, it then follows that either $\term$ is a value and $\pair{\ceff}{\cenv[x\mapsto\cval]} \in \eff''$ or there exists $\term'$ and $\ceff'$, such that $\step{\term,\ceff}{\term',\ceff'}$. In the second case we are done. In the first case, we note that $\pair{\ceff}{\cenv[x\mapsto\cval]} \in \eff''$ and $\pair{\cval}{\cenv} \in \tgamma(\tval')$ implies $\pair{\ceff}{\cenv} \in (\texists{x}{\tval'}{\eff''}) = \eff'$.

\item[Case \ruleref{t-const} and \ruleref{t-abs}] In both cases $\term$ is a value and $\eff=\eff'$. So the claim follows immediately.




\item[Case \ruleref{t-var}] By assumption, $\term$ is closed. So this rule cannot be the last rule used in the typing derivation.
  
\item[Case \ruleref{t-ev}]
  We have $e = \termkw{ev}~\term_1$ for some $\term_1$. Moreover, there exists $\eff_1$ and $\bval_1$ such that $\tjudge{\tenv;\eff}{\term_1}{\bval_1}{\eff_1}$ and $\eff' = \eff_1 \odot \bval_1$.

  Suppose $\term_1$ is not a value. Let $\pair{\ceff}{\cenv} \in \tstrengthen{\eff}{\tenv}$ and $\cenv \in \tgamma(\tenv)$. By induction hypothesis, there must exist $\ceff'$ and $\term_1'$ such that $\step{\term_1,\ceff}{\term_1',\ceff'}$. Thus, we also have $\step{\term,\ceff}{\term',\ceff'}$ for $\term' = \termkw{ev}~\term_1'$ by rule \ruleref{e-context}.

  Suppose on the other hand that $\term_1$ is a value. Let $\pair{\ceff}{\cenv} \in \tstrengthen{\eff}{\tenv}$. We have $\step{\term,\ceff}{\term',\ceff'}$ for $\term' = \unitval$ and $\ceff' = \ceff \cdot \term_1$ by rule \ruleref{e-ev}.

\item[Case \ruleref{t-app}]
  We have $\term = \term_1~\term_2$ for some $\term_1$ and $\term_2$.
  Moreover, there exist $\effp$, $\eff_1$ and $\eff_2$ as well as $\tval'$, $\tval_1$ and $\tval_2$ such that $\tjudge{\tenv;\eff}{e_1}{\tval_1}{\eff_1}$, $\tjudge{\tenv;\eff_1}{e_2}{\tval_2}{\eff_2}$, and $\tau_1 = \tfun{x}{\tval_2}{\eff_2}{\tval'}{\effp}$.

  If $\term_1$ is not a value, then by induction hypothesis, for all $\pair{\ceff}{\cenv} \in \tstrengthen{\eff}{\tenv}$
 there exist $\ceff'$ and $\term_1'$ such that $\step{\term_1,\ceff}{\term_1',\ceff'}$.
  It follows that $\step{\term,\ceff}{\term',\ceff'}$ for $\term' = \term_1' \; \term_2$ by rule \ruleref{e-context}.

  The case where $\term_1$ is a value but $\term_2$ is not is similar to the previous case.

  Thus, let us assume that both $\term_1$ and $\term_2$ are values.
  Since $\tval_1 = \tfun{x}{\tval_2}{\eff_2}{\tval'}{\effp}$ and $e_1$ is a value, we must have $e_1 = \lambda x.\,\term$ for some $x$ and $\term$.
  It follows for all $\pair{\ceff}{\cenv} \in \tstrengthen{\eff}{\tenv}$ that $\step{\term,\ceff}{\term',\ceff'}$ for $\term' = \term[\term_2/x]$ and $\ceff'=\ceff$ by rule \ruleref{e-app}.\qedhere
\end{description}

\end{proof}

We next turn to proving preservation. Again, we start with some technical lemmas.

\begin{lemma}
  \label{lem:swap}
  If $\tjudge{\tenv,x:\tval_x,y:\tval_y;\eff}{e}{\tval}{\eff'}$ and $\wf(\tenv,y:\tval_y)$, then $\tjudge{\tenv,y:\tval_y,x : \tval_x;\eff}{e}{\tval}{\eff'}$.
\end{lemma}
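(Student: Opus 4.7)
The plan is to prove this by induction on the derivation of $\tjudge{\tenv,x:\tval_x,y:\tval_y;\eff}{e}{\tval}{\eff'}$, with case analysis on the last rule applied. Before starting the induction, I would record two preliminary observations. First, the hypothesis $\wf(\tenv,y:\tval_y)$ means $\dom(\tval_y) \subseteq \dom(\tenv)$, so $\tval_y$ does not reference $x$; combined with the original well-formedness, which gives $\dom(\tval_x) \subseteq \dom(\tenv)$ and $x \notin \dom(\tenv)$, this shows that $\tenv,y:\tval_y,x:\tval_x$ is also well-formed. Second, since the two environments have identical domains and bindings, commutativity of intersection in the definition of $\tgamma$ yields $\tgamma(\tenv,x:\tval_x,y:\tval_y) = \tgamma(\tenv,y:\tval_y,x:\tval_x)$.

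Next I would prove two auxiliary swap lemmas. The first is a swap property for subtyping, proved by an analogous induction on the subtyping derivation: if $\subtjudge{\tenv,x:\tval_x,y:\tval_y}{\tval_1}{\tval_2}$ holds then so does $\subtjudge{\tenv,y:\tval_y,x:\tval_x}{\tval_1}{\tval_2}$. The second observes that, under the preorder $\leq$ on typing environments defined in \cref{sec:types-and-effects}, the swapped and unswapped environments are mutually $\leq$-related (they have the same bindings), so by the assumed monotonicity of $\tstrengthen{\bval}{\tenv}$ and $\tstrengthen{\eff}{\tenv}$ in the typing-environment argument, strengthening yields equivalent results on the two environments. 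These are precisely what is needed to replay the use of strengthening inside \ruleref{t-weaken}.

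To make the main induction go through the binder-introducing rules, I would prove a slight strengthening that permits an arbitrary well-formed suffix context: $\tjudge{\tenv,x:\tval_x,y:\tval_y,\tenv';\eff}{e}{\tval}{\eff'}$ implies $\tjudge{\tenv,y:\tval_y,x:\tval_x,\tenv';\eff}{e}{\tval}{\eff'}$ whenever $\wf(\tenv,y:\tval_y,\tenv')$. This is essential because \ruleref{t-abs} and \ruleref{t-cut} extend the environment with a fresh binding that appears \emph{after} $x:\tval_x,y:\tval_y$ in the premise, so the induction hypothesis must be applied to a longer environment with the swap target no longer at the tail. Given this generalization, the cases for \ruleref{t-const}, \ruleref{t-var}, \ruleref{t-ev}, and \ruleref{t-app} all go through by direct appeal to the induction hypothesis, using $\tenv(x) = (\tenv,y:\tval_y,x:\tval_x \setminus y)(x)$ etc. to preserve the type lookups in \ruleref{t-var}.

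The main obstacle I expect is the \ruleref{t-weaken} case, since it blends subtyping with the two strengthening constraints $\tstrengthen{\eff}{\tenv} \efford \effp$ and $\tstrengthen{\effp'}{\tenv} \efford \eff'$. Here I would invoke the subtyping swap lemma to re-derive $\subtjudge{\tenv,y:\tval_y,x:\tval_x,\tenv'}{\tval'}{\tval}$, use the strengthening-equivalence to transfer the two effect inequalities to the swapped environment, and combine these with the induction hypothesis applied to the premise $\tjudge{\tenv,x:\tval_x,y:\tval_y,\tenv';\effp}{e}{\tval'}{\effp'}$ to conclude. The only subtlety is making sure that all intermediate environments arising in the subderivations remain well-formed after swapping, which is immediate from the preliminary observations since no rule shrinks the domain.
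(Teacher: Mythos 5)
Your proof follows essentially the same route as the paper's: induction on the typing derivation, using the facts that the swapped environment is well-formed, has the same concretization, and agrees with the original on all variable lookups. You are in fact more careful than the paper's one-sentence sketch --- the generalization to an arbitrary well-formed suffix $\tenv'$ is genuinely needed to push the induction through the binder-extending rules \ruleref{t-abs} and \ruleref{t-cut} (the ungeneralized induction hypothesis only permits swapping the last two bindings, which is the wrong pair once the premise extends the environment), and your auxiliary subtyping and strengthening swap lemmas are exactly what the \ruleref{t-weaken} case requires.
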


\begin{proof}
  The proof goes by induction on the derivation of $\tjudge{\tenv,x:\tval_x,y:\tval_y;\eff}{e}{\tval}{\eff'}$ using the fact that $\wf(\tenv,x:\tval_x,y:\tval_y)$ and $\wf(\tenv,y:\tval_y)$ implies (1) $\wf(\tenv,y:\tval_y,x:\tval_x)$, (2) $\tgamma(\tenv,x:\tval_x,y:\tval_y)=\tgamma(\tenv,y:\tval_y,x:\tval_x)$, and (3) for all $z \in \dom(\tenv,x:\tval_x,y:\tval_y) = \dom(\tenv,y:\tval_y,x:\tval_x)$, $(\tenv,y:\tval_y,x:\tval_x)(z) = (\tenv,x:\tval_x,y:\tval_y)(z)$.
\end{proof}

\begin{lemma}
  \label{lem:strengthen-env-sub}
  If $\subtjudge{\tenv}{\tval}{\tval'}$ and $\wf(\tenv,x:\tval_x)$, then $\subtjudge{\tenv,x : \tval_x}{\tval}{\tval'}$.
\end{lemma}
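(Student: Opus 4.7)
The plan is to proceed by induction on the derivation of $\subtjudge{\tenv}{\tval}{\tval'}$, doing a case analysis on the last subtyping rule applied. The well-formedness hypothesis $\wf(\tenv,x:\tval_x)$ guarantees in particular that $x \notin \dom(\tenv)$ and $\dom(\tval_x) \subseteq \dom(\tenv)$, which I will use throughout to keep derivations well-formed when extending environments and to freshen bound dependent variables via alpha-renaming.

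For the base case \ruleref{s-base}, the goal reduces to showing $\tstrengthen{\bval_1}{\tenv,x:\tval_x} \bord \bval_2$ from $\tstrengthen{\bval_1}{\tenv} \bord \bval_2$. This will follow directly from the assumed monotonicity of the strengthening operator in its environment argument: under the ordering $\tenv' \leq \tenv$ iff $\tenv'$ agrees with $\tenv$ on $\dom(\tenv)$, the extended environment $\tenv,x:\tval_x$ is below $\tenv$, so $\tstrengthen{\bval_1}{\tenv,x:\tval_x} \bord \tstrengthen{\bval_1}{\tenv}$, and transitivity of $\bord$ closes the case.

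For the inductive cases \ruleref{s-fun}, \ruleref{s-exists}, and \ruleref{s-wit}, I will apply the induction hypothesis to each premise after possibly alpha-renaming the locally bound dependent variable (e.g., the parameter in \ruleref{s-fun}, the existential witness $y$ in \ruleref{s-wit}, or the bound variable in \ruleref{s-exists}) so that it is distinct from $x$. The IH then yields the desired premise in an environment like $\tenv, y:\tval', x:\tval_x$, whereas the rule wants $\tenv, x:\tval_x, y:\tval'$. For the effect-strengthening side conditions in \ruleref{s-fun}, monotonicity of $\tstrengthen{\eff}{\tenv}$ in the environment argument gives the analogous weakening as in the base case. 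The \ruleref{s-wit} case additionally needs the weakening applied to the subtyping premise $\subtjudge{\tenv}{\tval'}{\tval}$ appearing outside the binding of $y$, which is a direct IH application.

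The main obstacle is the environment-order mismatch just described: after invoking the IH we obtain an environment with $x:\tval_x$ on the right of the binder, but the concluding subtyping rule demands it on the left. I expect to resolve this by establishing a small auxiliary swap lemma for subtyping (analogous to \cref{lem:swap} for typing), stating that if $\subtjudge{\tenv, x:\tval_x, y:\tval_y}{\tval}{\tval'}$ with $\wf(\tenv,y:\tval_y)$ (so $\tval_y$ does not mention $x$), then $\subtjudge{\tenv, y:\tval_y, x:\tval_x}{\tval}{\tval'}$. The swap lemma follows by an easy induction on the subtyping derivation using that $\tgamma(\tenv,x:\tval_x,y:\tval_y) = \tgamma(\tenv,y:\tval_y,x:\tval_x)$ and that the strengthening operators depend only on the underlying typing environment's concretization. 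With the swap lemma in hand, each inductive case of the main proof concludes by applying the appropriate subtyping rule.
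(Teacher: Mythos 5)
Your proposal is correct and follows essentially the same route as the paper: induction on the subtyping derivation, with the base case discharged by the fact that extending the environment with $x:\tval_x$ can only shrink the result of strengthening (the paper derives $\tstrengthen{\bval}{\tenv,x:\tval_x} \subseteq \tstrengthen{\bval}{\tenv}$ by direct computation in the concrete instantiation, which is exactly the monotonicity instance you invoke). The paper's proof is a one-line sketch that leaves the binder cases implicit, so your explicit treatment of the environment-reordering issue via a subtyping analogue of \cref{lem:swap} is a reasonable and slightly more careful elaboration of the same argument rather than a different approach.
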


\begin{proof}
  The proof goes by induction on the derivation of $\subtjudge{\tenv}{\tval}{\tval'}$, using the fact that for all $\bval \subseteq \cvalues \times \cenvs$, we have
  \begin{align*}
   & \tstrengthen{\bval}{\tenv,x:\tval_x}\\
   = \; & \bval \cap \tgamma(\tenv,x:\tval_x)\\
   = \; & \bval \cap \tgamma(\tenv) \cap \pset{\cenv}{\exists \cval.\, \pair{\cval}{\cenv} \in \tgamma(\tval_x)}\\
   \subseteq \; & \bval \cap \tgamma(\tenv)\\
 = \; & \tstrengthen{\bval}{\tenv}
  \end{align*}
  and, similarly, that for all $\eff \subseteq \ceffs \times \cenvs$, we have $\tstrengthen{\eff}{\tenv,x:\tval_x} \subseteq \tstrengthen{\eff}{\tenv}$.
\end{proof}

\begin{lemma}
  \label{lem:strengthen-env}
  If $\tjudge{\tenv;\eff}{e}{\tval}{\eff'}$ and $\wf(\tenv,x:\tval_x)$, then $\tjudge{\tenv,x : \tval_x;\eff}{e}{\tval}{\eff'}$.
\end{lemma}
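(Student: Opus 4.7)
The plan is to proceed by structural induction on the derivation of $\tjudge{\tenv;\eff}{e}{\tval}{\eff'}$, performing case analysis on the last rule applied. Throughout, I rely on the hypothesis $\wf(\tenv,x:\tval_x)$, which via \ruleref{wf-bind} provides $x \notin \dom(\tenv)$ and $\dom(\tval_x) \subseteq \dom(\tenv)$.

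The easy cases are \ruleref{t-const} (the environment is not consulted), and \ruleref{t-var} (where $(\tenv,x:\tval_x)(y) = \tenv(y)$ follows from $y \in \dom(\tenv)$ together with $x \notin \dom(\tenv)$, so the existing binding is preserved). For the congruence-like rules \ruleref{t-ev} and \ruleref{t-app}, we simply apply the induction hypothesis to each premise; since these rules introduce no new bindings in the subderivations, $\wf(\tenv,x:\tval_x)$ is carried through unchanged. For \ruleref{t-weaken}, we invoke the induction hypothesis on the inner typing premise (with the same extended environment), extend the subtyping premise using Lemma~\ref{lem:strengthen-env-sub}, and reuse the side conditions on effect strengthening by combining them with the monotonicity observation $\tstrengthen{\eff}{\tenv,x:\tval_x} \efford \tstrengthen{\eff}{\tenv}$ established in the proof of Lemma~\ref{lem:strengthen-env-sub}.

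The most delicate cases are \ruleref{t-abs} and \ruleref{t-cut}, each of which introduces a fresh binding $y:\tval_y$ in a subderivation typed under $\tenv,y:\tval_y$. Appealing to alpha-equivalence of terms, we may assume $y \neq x$ and $y \notin \dom(\tenv,x:\tval_x)$. We first verify that $\tenv,y:\tval_y,x:\tval_x$ is well-formed: $\wf(\tenv,y:\tval_y)$ is inherited from the original derivation, $\dom(\tval_x) \subseteq \dom(\tenv) \subseteq \dom(\tenv,y:\tval_y)$, and $x \notin \dom(\tenv,y:\tval_y)$ by choice of $y$. The induction hypothesis then yields a derivation in $\tenv,y:\tval_y,x:\tval_x$, after which Lemma~\ref{lem:swap} reorders the bindings to $\tenv,x:\tval_x,y:\tval_y$. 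Reapplying the original rule (\ruleref{t-abs} or \ruleref{t-cut}) to this strengthened premise produces the desired conclusion.

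The main obstacle is purely bookkeeping: ensuring well-formedness survives each extension, and that bindings can be swapped into the correct order required by the rule being reapplied. Lemma~\ref{lem:swap} was established precisely to discharge this, so no new technical machinery is needed and the induction closes cleanly.
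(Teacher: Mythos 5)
Your proposal is correct and follows essentially the same route as the paper's proof: induction on the typing derivation, with the identical treatment of \ruleref{t-var} (via $x \notin \dom(\tenv)$), of \ruleref{t-weaken} (via Lemma~\ref{lem:strengthen-env-sub} and the anti-monotonicity of strengthening in the environment), and of the binder-introducing cases via well-formedness of the reordered environment followed by Lemma~\ref{lem:swap}. The paper only writes out \ruleref{t-var}, \ruleref{t-abs}, and \ruleref{t-weaken}, so your explicit handling of \ruleref{t-cut} by the same swap argument fills in an omitted-but-analogous case rather than diverging from the paper's method.
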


\begin{proof}
  We prove the claim by induction on the derivation of $\tjudge{\tenv; \eff}{\term}{\tval}{\eff'}$. We proceed by case analysis on the last rule that is applied in the derivation. We only show some of the more interesting cases.

\begin{description}
\item[Case \ruleref{t-var}] We have $\eff=\eff'$ and $\term = y$ for some $y \in \vars$ such that $\tenv(y)=\tval$. Since $\wf(\tenv,x:\tval_x)$, we must have $x \neq y$. Hence, $(\tenv,x:\tval_x)(y) = \tval$. Using rule \ruleref{t-var}, we then derive $\tjudge{\tenv,x : \tval_x;\eff}{\term}{\tval}{\eff'}$.
  
\item[Case \ruleref{t-abs}] We have $\eff=\eff'$, $\term = (\lambda y.\, \term_1)$ for some $y$ and $\term_1$, and $\tau = \tfun{y}{\tval_2}{\eff_2}{\tval_1}{\eff_1}$ such that $\tjudge{\tenv,y: \tval_2; \eff_2}{\term_1}{\tval_1}{\eff_1}$. Without loss of generality we have $x \neq y$. Since $\wf(\tenv,x:\tval_x)$ and $\wf(\tenv,y:\tval_2)$, we also have $\wf(\tenv,y:\tval_2,x:\tval_x)$. By induction hypothesis, we obtain $\tjudge{\tenv,y: \tval_2,x:\tval_x; \eff_2}{\term_1}{\tval_1}{\eff_1}$. Using \cref{lem:swap} we infer $\tjudge{\tenv,x:\tval_xy: \tval_2; \eff_2}{\term_1}{\tval_1}{\eff_1}$. Thus, we derive $\tjudge{\tenv,x : \tval_x;\eff}{e}{\tval}{\eff'}$ using rule \ruleref{t-abs}.

\item[Case \ruleref{t-weaken}]
  We have $\tjudge{\tenv;\effp}{\term}{\tval'}{\effp'}$ for some $\effp$, $\tval'$, and $\effp'$ such that $\tstrengthen{\eff}{\tenv} \subseteq \effp$, $\tstrengthen{\effp'}{\tenv} \subseteq \eff'$, and $\subtjudge{\tenv}{\tval'}{\tval}$. By induction hypothesis, we have $\tjudge{\tenv,x:\tval_x;\effp}{\term}{\tval'}{\effp'}$. Moreover, using \cref{lem:strengthen-env-sub}, we infer $\subtjudge{\tenv,x:\tval_x}{\tval'}{\tval}$. Next, we use $\tstrengthen{\eff}{\tenv} \subseteq \effp$ and $\tstrengthen{\eff}{\tenv,x:\tval_x} \subseteq \tstrengthen{\eff}{\tenv}$ to infer $\tstrengthen{\eff}{\tenv,x:\tval_x} \subseteq \effp$. Using similar reasoning, we infer $\tstrengthen{\effp'}{\tenv,x:\tval_x} \subseteq \eff'$. Thus, we can apply rule \ruleref{t-weaken} to derive the desired $\tjudge{\tenv,x : \tval_x;\eff}{\term}{\tval}{\eff'}$.\qedhere
\end{description}

\end{proof}

\begin{lemma}[Substitution]
  \label{lem:substitution}
  If $\tjudge{\tenv;\eff_v}{v}{\tval_v}{\eff'_v}$ and $\tjudge{\tenv,x : \tval_v,\tenv';\eff}{\term}{\tval}{\eff'}$, then $\tjudge{\tenv,x:\tval_v,\tenv';\eff}{\term[v/x]}{\tval}{\eff'}$.
\end{lemma}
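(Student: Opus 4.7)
The plan is to proceed by induction on the derivation of $\tjudge{\tenv,x : \tval_v,\tenv';\eff}{\term}{\tval}{\eff'}$, with case analysis on the last typing rule applied. The environment and effect on the right-hand side are preserved as-is; only the occurrences of $x$ in the term itself get substituted. This is sound because $v$ has type $\tval_v$ (just like $x$), so any typing or effect constraint that refers to $x$ (in $\tenv'$, $\eff$, $\tval$, or $\eff'$) remains semantically consistent once we plug in $v$.

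The crucial case is \ruleref{t-var}. If the term is a variable $y \neq x$, then $y[v/x] = y$ and the judgment is unchanged. The interesting subcase is $y = x$: we have $\term = x$, $\tval = (\tenv,x:\tval_v,\tenv')(x) = \tval_v$, and $\term[v/x] = v$. To derive $\tjudge{\tenv,x:\tval_v,\tenv';\eff}{v}{\tval_v}{\eff}$, I start from the assumption $\tjudge{\tenv;\eff_v}{v}{\tval_v}{\eff'_v}$. Since $v$ is a value, Lemma~\ref{lem:values} gives $\tjudge{\tenv;\eff}{v}{\tval_v}{\eff}$ for the effect $\eff$ we need (provided $\dom(\eff) \subseteq \dom(\tenv)$, which follows from well-formedness of $\tenv,x:\tval_v,\tenv'$ since effects that mention $x$ or variables in $\tenv'$ would violate the assumption $\dom(\eff) \subseteq \dom(\tenv)$; the only effects visible at the judgement level are those definable over $\tenv$ plus whatever gets re-entered later via \ruleref{t-weaken}). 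Then Lemma~\ref{lem:strengthen-env} (applied repeatedly along the bindings in $x:\tval_v,\tenv'$) extends the environment to the desired $\tenv,x:\tval_v,\tenv'$.

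The remaining cases are largely mechanical. For \ruleref{t-const}, the term is unchanged by substitution. For \ruleref{t-abs} with $\term = \lambda y.\,\term_1$, we rename so that $y \neq x$ and $y$ is fresh for $v$, apply the inductive hypothesis with $\tenv'$ extended to $\tenv',y:\tval_2$, and rebuild with \ruleref{t-abs}. For \ruleref{t-app} and \ruleref{t-ev}, we invoke the inductive hypothesis on each subterm and reassemble using the same rule, noting that the effect threading in the conclusion is inherited unchanged because we do not touch $\eff, \eff'$. For \ruleref{t-cut}, we again rename the cut-variable to avoid clashes with $x$ and apply the inductive hypothesis to both the value and the body; here we must also verify $x \notin \fv(v)$ is unnecessary since $v$'s typing does not depend on this binding, and $\texists{y}{\tval'}{\tval''}$ and $\texists{y}{\tval'}{\effp}$ on the right-hand side are unaffected by the substitution in the body. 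For \ruleref{t-weaken}, we apply the inductive hypothesis to the premise and reapply \ruleref{t-weaken}, reusing the same subtyping and strengthening conditions verbatim since the environment does not change.

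The main obstacle is the variable case's careful handling of the effect component: we need Lemma~\ref{lem:values} to allow reusing the typing of $v$ under the current ambient effect $\eff$ rather than the effect $\eff_v$ in which it was originally typed. The other technical care point is variable freshness in \ruleref{t-abs} and \ruleref{t-cut}, which is handled by the standard Barendregt convention (alpha-renaming bound variables so that they are distinct from $x$ and from $\fv(v)$), combined with Lemma~\ref{lem:swap} to reorder bindings in the environment when necessary so that the inductive hypothesis can be applied with $\tenv'$ playing its intended role.
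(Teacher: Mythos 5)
Your proposal is correct and follows essentially the same route as the paper's proof: induction on the typing derivation, with the variable case ($y = x$) handled by first re-typing $v$ under the ambient effect via Lemma~\ref{lem:values} and then extending the environment to $\tenv,x:\tval_v,\tenv'$ by repeated application of Lemma~\ref{lem:strengthen-env}, and the binder cases handled by applying the inductive hypothesis with $\tenv'$ extended. The only difference is that you spell out the freshness/renaming bookkeeping and the $\dom(\eff)$ side condition a bit more explicitly than the paper does.
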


\begin{proof}
  We prove the claim by induction on the derivation of $\tjudge{\tenv,x : \tval_v,\tenv'; \eff}{\term}{\tval}{\eff'}$. We proceed by case analysis on the last rule that is applied in the derivation. We only show some of the cases. The omitted cases are similar to the case for \ruleref{t-abs} but follow simpler reasoning.

\begin{description}
\item[Case \ruleref{t-const}]
  We have $\eff=\eff'$ and $\term = c$ for some constant value $c$. It follows $\term[v/x]=c$. Hence, $\tjudge{\tenv,x:\tval_v,\tenv';\eff}{\term[v/x]}{\tval}{\eff'}$.

\item[Case \ruleref{t-var}] We have $\eff=\eff'$ and $\term = y$ for some $y \in \vars$ such that $(\tenv,x \mapsto \tval_v,\tenv')(y)=\tval$. If $x \neq y$, then $\term[v/x]=y$. Hence, we immediately obtain $\tjudge{\tenv,x:\tval_v,\tenv';\eff}{\term[v/x]}{\tval}{\eff'}$.

  On the other hand, if $x=y$, then $\tval = \tval_v$ and $e[v/x]=v$. Using $\tjudge{\tenv;\eff_v}{v}{\tval_v}{\eff'_v}$ and \cref{lem:values}, we first infer $\tjudge{\tenv;\eff}{v}{\tval_v}{\eff}$. By repeatedly applying \cref{lem:strengthen-env}, we can then derive the desired $\tjudge{\tenv,x:\tval_v,\tenv';\eff}{v}{\tval_v}{\eff}$.

\item[Case \ruleref{t-abs}] We have $\eff=\eff'$, $\term = (\lambda y.\, \term_1)$ for some $y$ and $\term_1$, and $\tau = \tfun{y}{\tval_2}{\eff_2}{\tval_1}{\eff_1}$ such that $\tjudge{\tenv,x: \tval_v,\tenv',y: \tval_2; \eff_2}{\term_1}{\tval_1}{\eff_1}$. By induction hypothesis, we obtain $\tjudgesimp{\tenv,x:\tval_v,\tenv',y : \tval_2; \eff_2}{\term_1[v/x]}{\tandeff{\tval_1}{\eff_1}}$. Thus, using rule~\ruleref{t-abs} and $x \neq y$, we derive $\tjudgesimp{\tenv,x:\tval_v,\tenv'; \eff_2}{(\lambda y.\, \term_1)[v/x]}{\tandeff{\tval}{\eff}}$. \qedhere
  
\end{description}
\end{proof}

We now prove that the concrete instantiation of the type system satisfies progress.

\begin{theorem}[Preservation]
  \label{thm:preservation}
  If $\step{e,\ceff}{e',\ceff'}$ and $\tjudge{\eff}{e}{\tval}{\eff'}$ where
$\pair{\ceff}{\cenv} \in \eff$ for some $\cenv$. Then there exists $\eff''$ such that $\tjudge{\eff''}{e'}{\tval}{\eff'}$ and $\pair{\ceff'}{\cenv} \in \eff''$.
\end{theorem}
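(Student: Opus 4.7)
The plan is to prove preservation by induction on the typing derivation $\tjudge{\varnothing;\eff}{e}{\tval}{\eff'}$, doing case analysis on the last typing rule applied. I choose induction on the typing derivation rather than on the reduction because the typing rules \ruleref{t-weaken} and \ruleref{t-cut} are not syntax-directed and must be peeled off one at a time. The reduction cases are then driven by the shape of $e$ revealed by the syntax-directed typing rules.

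For \ruleref{t-weaken}, I invoke the induction hypothesis on the sub-derivation $\tjudge{\varnothing;\effp}{e}{\tval'}{\effp'}$, noting that $\pair{\ceff}{\cenv}\in\eff\subseteq\tstrengthen{\eff}{\varnothing}\subseteq\effp$, to obtain $\eff''_0$ with $\tjudge{\varnothing;\eff''_0}{e'}{\tval'}{\effp'}$ and $\pair{\ceff'}{\cenv}\in\eff''_0$. Then I reassemble the weakening using the same $\subtjudge{\varnothing}{\tval'}{\tval}$ and the transitive inclusions on the effects. For \ruleref{t-cut}, the derivation of the effect judgement does not change (both $\eff$ appear unchanged in premise and conclusion), so I apply the IH to the inner judgement and rewrap with \ruleref{t-cut} using the same existential witness; the membership $\pair{\ceff'}{\cenv[x\mapsto v]}\in\eff''_0$ entails $\pair{\ceff'}{\cenv}\in \texists{x}{\tval'}{\eff''_0}$ by the concrete definition of existential projection. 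The value rules \ruleref{t-const} and \ruleref{t-abs} are vacuous because no reduction applies to values, and \ruleref{t-var} is impossible as $e$ is closed.

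The interesting structural cases are \ruleref{t-ev} and \ruleref{t-app}. For $e=\termkw{ev}~\term_1$ typed via \ruleref{t-ev} with $\tjudge{\varnothing;\eff}{\term_1}{\bval_1}{\eff_1}$ and $\eff'=\eff_1\odot \bval_1$: if the reduction occurs inside $\term_1$ by \ruleref{e-context}, the IH produces the required $\eff''_1$ and I reapply \ruleref{t-ev}. If $\term_1$ is a value and \ruleref{e-ev} fires, then $e'=\unitval$ and $\ceff'=\ceff\cdot \term_1$; I take $\eff''=\eff'$. By \cref{lem:values} applied to the typing of $\term_1$, we have $\eff_1=\eff$ (up to the point of the value judgement) and $\pair{\term_1}{\cenv}\in \bval_1$. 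With $\pair{\ceff}{\cenv}\in\eff_1$ in hand, the concrete definition of $\odot$ immediately gives $\pair{\ceff\cdot\term_1}{\cenv}\in\eff_1\odot\bval_1=\eff'$, and $\tjudge{\varnothing;\eff'}{\unitval}{\{\nu=\unitval\}}{\eff'}$ via \ruleref{t-const} plus \ruleref{t-weaken} to retype as $\tval=\{\nu=\unitval\}$. For \ruleref{t-app}, the non-beta cases reduce inside $\term_1$ or $\term_2$ and use the IH together with \ruleref{t-app}. The beta case $(\lambda x.\term)~v\to \term[v/x]$ with $\ceff'=\ceff$ uses the substitution \cref{lem:substitution}: from $\tjudge{\varnothing,x:\tval_2;\eff_2}{\term}{\tval'}{\effp}$ and $\tjudge{\varnothing;\eff_1}{v}{\tval_2}{\eff_2}$ (extracted by inversion through the abstraction typing), substitution yields $\tjudge{\varnothing;\eff_2}{\term[v/x]}{\tval'}{\effp}$. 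Taking $\eff''=\eff_2$ works because the application rule's conclusion sets the output effect to $\texists{x}{\tval_2}{\effp}$ with $\pair{v}{\cenv}\in\tval_2$, so any witness we need for $\pair{\ceff'}{\cenv}\in\eff'$ is provided by $v$ itself.

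The main obstacle will be the interaction between \ruleref{t-abs} and beta reduction: the typing rule does not directly record how the input effect $\eff_2$ relates to the ambient effect $\eff$ at the call site, so I will need an inversion lemma peeling off leading \ruleref{t-weaken} and \ruleref{t-cut} layers to extract a canonical \ruleref{t-abs} derivation for $\lambda x.\term$ whose declared input type/effect are compatible (via subtyping) with the actual argument typing. Once that inversion is established in the concrete type system, the substitution lemma carries the rest; the trace-side reasoning in the \ruleref{t-ev} case is immediate from the definitional form of $\odot$ at the concrete level.
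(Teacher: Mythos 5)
Your proposal follows essentially the same route as the paper's proof: induction on the typing derivation with case analysis on the last rule, using \cref{lem:values} to normalize the effects of value subterms in the \ruleref{t-ev} and \ruleref{t-app} cases, \cref{lem:substitution} plus \ruleref{t-cut} for the beta case, and the same choices of $\eff''$ ($\eff'$ for the event step, $\eff=\eff_1=\eff_2$ for beta). The inversion concern you flag for extracting the body typing of the lambda through \ruleref{t-weaken}/\ruleref{t-cut} layers is real but is also elided in the paper's own argument, so it does not constitute a divergence in approach.
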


\begin{proof}
We prove the claim by induction on the derivation of $\tjudge{\eff}{\term}{\tval}{\eff'}$. We only discuss the cases for rules~\ruleref{t-ev} and \ruleref{t-app} as they are the most involved.

\begin{description}
\item[Case \ruleref{t-ev}] We have $\term = \termkw{ev}\; \term_1$ for some $\term_1$ and $\tval=\{\nu = \unitval\}$. Moreover, there exists $\bval$ and $\eff_1$ such that $\tjudge{\eff}{e_1}{\bval}{\eff_1}$ and $\eff' = \eff_1 \odot \bval_1$.

  The case where $\term_1$ is not a value follows from the induction hypothesis.

  If $\term_1$ is a value, then $e'=\unitval$ and $\ceff' = \ceff \cdot \term_1$.
  Moreover, by \cref{lem:values} we can assume that $\eff_1=\eff$ and $\pair{e_1}{\cenv} \in \bval$.
  From $\pair{\term_1}{\cenv} \in \tgamma(\tval_1)$, $\pair{\ceff}{\cenv} \in \eff_1$, and the requirement on $\odot$, we conclude that $\pair{\ceff'}{\cenv} \in \eff'$. The claim then follows for $\eff'' = \eff'$.

\item[Case \ruleref{t-app}.] We have $\term = \term_1\;\term_2$ for some $\term_1$ and $\term_2$. Moreover, there exist $\eff_1$, $\eff_2$, and $\effp$ as well as $\tval_1$, $\tval_2$, and $\tval'$ such that $\tjudge{\eff}{\term_1}{\tval_1}{\eff_1}$, $\tjudge{\eff_1}{\term_2}{\tval_2}{\eff_2}$, $\tval_1 = \tfun{x}{\tval_2}{\eff_2}{\tval'}{\effp}$, and $\tandeff{\tval}{\eff'}=\texists{x}{\tau_2}{(\tandeff{\tval'}{\effp})}$.
  
  If $\term_1$ is not a value, then we must have $\term' = \term_1' \; \term_2$ for some $\term_1'$ such that $\step{\term_1,\ceff_1}{\term_1',\ceff_1'}$ for some $\ceff_1'$. It follows by the induction hypothesis that there exists $\eff''$ such that $\tjudge{\eff''}{\term_1'}{\tval_1}{\eff_1}$ and $\pair{\ceff_1'}{\cenv} \in \eff''$. Thus, using rule \ruleref{t-app} we conclude $\tjudge{\eff''}{\term'}{\tval}{\eff'}$.

  The case where $\term_1$ is a value but $\term_2$ is not is similar to the previous case.

  Thus, let us assume that both $\term_1$ and $\term_2$ are values.
  Since $\tval_1 = \tfun{x}{\tval_2}{\eff_2}{\tval'}{\effp}$ and $\term_1$ is a value, we must have $\term_1 = \lambda x.\,e$ for some $x$ and $e$.
  It follows that $\term' = \term[\term_2/x]$ and $\ceff'=\ceff$.
  Moreover, from \cref{lem:values} it follows that we may assume $\eff = \eff_1 = \eff_2$.
  By rule \ruleref{t-abs}, we must have $\tjudge{x : \tval_2; \eff_2}{\term}{\tval'}{\effp}$.
  By \cref{lem:substitution} we have $\tjudge{x : \tval_2; \eff_2}{\term[e_2/x]}{\tval'}{\effp}$. Thus, using rule~\ruleref{t-cut}, we infer $\tjudge{\eff_2}{\term'}{\tval}{\eff'}$.
  Since $\ceff' = \ceff$, $\pair{\ceff}{\cenv} \in \eff$ and $\eff = \eff_1 = \eff_2$,
  the claim then follows by choosing $\eff'' = \eff$. \qedhere

\end{description}

\end{proof}

Finally, we are ready to prove \cref{thm:soundness}.

\begin{proof}[Proof of \cref{thm:soundness}]
  Assume that $\tjudge{\eff}{\term}{\tval}{\eff'}$ holds in the abstract type system and let $\pair{\ceff}{\cenv} \in \bgamma(\eff)$ and $\pair{\term'}{\ceff'}$ such that $\pair{\term}{\ceff} \leadsto \pair{\term'}{\ceff'}$.

  Let $\pair{\term}{\ceff} = \pair{\term_0}{\ceff_0} \rightarrow \pair{\term_1}{\ceff_1} \rightarrow \dots \rightarrow \pair{\term_n}{\ceff_n} = \pair{\term'}{\ceff'}$ be a sequence of reduction steps used to obtain $\pair{\term'}{\ceff'}$ from $\pair{\term}{\ceff}$ for $n \geq 0$.

  We show by induction on $i$, $0 \leq i \leq n$ that there exist $\eff_i$ such that $\tjudge{\eff_i}{\term}{\tval^\gamma}{\eff'^\gamma}$ and $\pair{\ceff_i}{\cenv} \in \eff_i$.

  If $i = 0$, we can use \cref{lem:abstract-typing-gives-concrete-typing} to conclude that $\tjudge{\eff^\gamma}{\term_0}{\tval^\gamma}{\eff'^\gamma}$. The claim then follows for $\eff_0=\eff^\gamma$.

  If $0 < i \leq n$, then by induction hypothesis we have $\tjudge{\eff_{i-1}}{\term_{i-1}}{\tval^\gamma}{\eff'^\gamma}$ for some $\eff_{i-1}$ such that $\pair{\ceff_{i-1}}{\cenv} \in \eff_{i-1}$. Using \cref{thm:preservation}, we conclude that there exists $\eff_i$ such that $\tjudge{\eff_{i}}{\term_i}{\tval^\gamma}{\eff'^\gamma}$ and $\pair{\ceff_{i}}{\cenv} \in \eff_{i}$.

  We thus have $\tjudge{\eff_n}{\term'}{\tval^\gamma}{\eff'^\gamma}$ and $\pair{\ceff'}{\cenv} \in \eff_n$. By assumption there is no $\pair{\term''}{\cenv''}$ such that $\step{\term',\ceff'}{\term'',\ceff''}$. Using \cref{thm:progress} we thus conclude that $\term'$ must be a value and $\pair{\ceff'}{\cenv} \in \eff'^\gamma$.
  
\end{proof}




\begin{figure}
  \begin{mathpar}
    \inferHtop{s-bot[d]}
    {\tvald \neq \top^d}
    {\bot^d <: \tvald}
    \and
    \inferHtop{s-base[d]}
    {\beta_1 \sqsubseteq^b \beta_2}
    {\beta_1 <: \beta_2}
    \and
    \inferHtop{s-fun[d]}
    {\tvald_{x}' <: \tvald_{x} \\ \tvald[x : \tvald_{x}'] <: \tvald'[x : \tvald_{x}']}
    {x:\tvald_x \to \tvald \;<:\; x:\tvald_x' \to \tvald'}
    \and
    \inferHtop{s-pair[d]}
    {\tvald_1 <: \tvald_1' \\ \tvald_2 <: \tvald_2'}
    {(\tvald_1 \times \tvald_2) <: (\tvald_1' \times \tvald_2')}
    \and
    \inferH{t-const[d]}
    {[\nu = c]^d[\tenvd] <: \tvald}
    {\tenvd \vdash c : \tvald}
    \and
    \inferH{t-var[d]}
    {\tenvd(x)[\nu=x][\tenvd] <: \tvald[v=x][\tenvd]}
    {\tenvd \vdash x : \tvald}
    \and
    \inferHtop{t-abs[d]}
    {\tenvd, x:\tvald_x \vdash e_i : \tvald_i \\ (x:\tvald_x \to \tvald_i) <: \ttabled}
    {\tenvd \vdash \lambda x.e_i : \ttabled}
    \and
    \inferHtop{t-app[d]}
    {\tenvd \vdash e_i: \tvald_i \\ \tenvd \vdash e_j:\tvald_j \\ \tvald_i <: (x:\tvald_j \to \tvald)}
    {\tenvd \vdash e_i~e_j: \tvald}
    \and
    \inferHtop{t-pair[d]}
    {\tenvd \vdash e_1:\tvald_1 \\ \tenvd \vdash e_2:\tvald_2}
    {\tenvd \vdash \langle e_1, e_2 \rangle:\tvald_1 \times \tvald_2}
    \and
    \inferHtop{t-proj[d]}
    {\tenvd \vdash e : \tvald_1 \times \tvald_2}
    {\tenvd \vdash \#_i(e):\tvald_i, i=1,2}
    \and
    \inferHtop{t-extension-op[d]}
    {\tenvd \vdash e_1:\eff \\ e_2:\bval \\ \eff' = \eff \odot \bval}
    {\tenvd \vdash e_1 \eop e_2 : \eff'}
  \end{mathpar}
  \caption{\label{fig:drift-type-rules} Data flow refinement type system}  
\end{figure}

Next we will describe how to translate programs with effects into monadified programs without effects. This then allows us to instantiate the inference algorithm to infer effect summaries.

\section{Soundness of Type and Effect Inference}
\label{sec:inference-soundness-top}

We now provide an extended discussion on the type and effect inference algorithm introduced in Sec. \ref{sec:inference}
At a high level, there are four steps:
(i) translate the program so that prefix event traces $\pi$ are symbolically represented at the syntactic level,
(ii) extend refinement \emph{types} and type inference to support \emph{event sequences} using our novel effect abstract domain, 
(iii) the types of those event sequences then correspond to \emph{effects} in our type and effect system in \cref{sec:types-and-effects} and then
(iv) ensure that at every program location  \circled{i}, the computed summary associates $\bot$ with every accepting state of the SAA that encodes the property of interest. We now describe these steps; for lack of space details are available in \cref{apx:inference}.


\subsection{Dataflow refinement type system with event sequences as program values}

In our instantiation of the \drift{} type system, we treat event sequences as values that can be manipulated directly by the program. The only primitive operator defined on event sequences is $e_1 \eop e_2$ where $e_1$ is expected to evaluate to an event sequence $\ceff_1$ and $e_2$ to a value $\cval_2$. The result of the operation is the concatenated event sequence $\ceff_1 \cdot \cval_2$. We additionally have the constant expression $\epsilon$ denoting the empty event sequence. We also have a built-in pair constructor $\pair{e_1}{e_2}$ and projection operators $\#_1(e)$ and $\#_2(e)$ on pairs.
In order to support inference of effects through dataflow refinement inference, we introduce a set of primitive values in the language: sequences of events.
Each event is a primitive value already tracked by the currently support basic refinement types of the language. 
We additionally support primitives to construct sequences of events: an empty sequence and an append operator.
Finally, we extend the type system to support \emph{effects} that represent event sequences.
The instantiated \drift{} types are then:
\[  \tvald ~::=~ \bot^d ~|~ \top^d ~|~ \bval ~|~ \eff ~|~ x:\tvald_1 \to \tvald_2 ~|~ \tvald_1 \times \tvald_2\enspace.\]
The types $\top^d$ and $\bot^d$ are the extrema of the type lattice. $\top^d$ can be thought of as representing a type error and $\bot^d$ represents the empty set of values, indicating that evaluation of an expression never returns, respectively, that the expression is unreachable in the program. $\bval$ is an element of $\bdom$, $\eff$ is an element of $\effdom$, $x:\tvald_1 \to \tvald_2$ is a dependent function type and $\tvald_1 \times \tvald_2$ a pair type.

It comes equipped with type operations $\tvald[x=y]$, $\tvald[\tenvd]$ that strengthen a type $\tvald$ with constraints corresponding to the equality between two variables, and the typing environment respectively.
These operators are defined similarly for base types to the strengthening operator $\bval[\tenv]$ in our type and effect system, the latter operator recursively pushes the strengthening of compound types to each constituent type.
The type constructor $[\nu=c]^d$ returns a type abstracting constant value $c$.
It con by enforcing an equality with the type variable $\nu$.  
Lastly, we assume a built-in primitive concatenation operator $\eop$ that extends an event prefix with a new event value.

The inferred \emph{types of encoded event sequences} correspond to \emph{effects} of the events in the original program.
A challenge in connecting the type inference result with our type and effects system is that the inference algorithm has been proven sound with respect to a bespoke dataflow semantics of functional program rather than a standard operational semantics like the one underlying our system. To bridge this gap, we relate the two type systems at the abstract level by showing that, from the typing derivation for a translated program produced by the soundness proof of \cite{DBLP:journals/pacmpl/PavlinovicSW21}, one can reconstruct a typing derivation in the types and effects system for the original effectful program. The key soundness \cref{thm:inference-sound} and its proof are in \cref{sec:inference-soundness}. The overall soundness then follows from \cref{thm:soundness}.

\subsection{Translation from effectful programs to programs with sequences}
\label{sec:tuple-translation}


\Cref{subfig:tr-term} defines the translation function from effectful programs into a functional language where $\termkw{ev}$ expressions are absent. This function preserves the operational semantics while ensuring that the events emitted during the program's execution are carried through the computation. 

\begin{figure}[t]
  \begin{subfigure}[t]{1\textwidth}\centering
  \[\arraycolsep=0pt\def\arraystretch{1.3}
    \begin{array}[t]{l}
      
      \tr\den{c}(\pivar) \Def= \langle c, \pivar \rangle
      \qquad \qquad \tr\den{x}(\pivar) \Def= \langle x, \pivar \rangle
      \qquad \qquad\tr\den{\lambda x.\,e}(\pivar) \Def= \langle \lambda x. \lambda y.(\tr \den{e}(y)), ~\pivar \rangle \\[3pt]
      \begin{array}[t]{lcr}
        \begin{array}[t]{l}
          \tr\den{e_1 ~ e_2}(\pivar) \Def= \\
          \;\;\;\;\begin{array}[t]{l}
            \termkw{let} ~\pair{x_1}{y_1} = \#_{1,2}(\tr\den{e_1}(\pivar))~\termkw{in}\\
            \termkw{let} ~\pair{x_2}{y_2} = \#_{1,2}(\tr\den{e_2}(y_1)) ~\termkw{in}\\
            x_1~x_2~y_2
          \end{array}
        \end{array}& \qquad \qquad \qquad \qquad \qquad &
        \begin{array}[t]{l}
          \tr\den{\termkw{ev}~e}(\pivar) \Def= \\
          \;\;\;\;\begin{array}[t]{l}
            \termkw{let} ~(x, y) = \#_{1,2}(\tr\den{e}(\pivar)) ~\termkw{in}\\
            \langle \unitval, y \eop x\rangle
          \end{array}
        \end{array}
      \end{array}
    \end{array}\]
  \caption{Forward term tranformation}
  \label{subfig:tr-term}
\end{subfigure}
\begin{subfigure}[t]{1\textwidth}\centering
\[\arraycolsep=5pt\def\arraystretch{1.3}
  \begin{array}{l}
    \begin{array}{l}
      \ttr(\bot^d) \Def= \bbot \quad \ttr(\top^d) \Def= \btop \quad \ttr(\bval) \Def= \bval \\[3pt]
      \ttr((x:\tvald_x) \to (\pivar_x:\eff_x) \to \tvald)) \Def= x:\tandeff{\ttr(\tvald_x)}{\eff_x} \to \tetr(\tvald) \qquad \quad
      \tetr(\pair{\tvald}{\eff}) \Def= \tandeff{\ttr(\tvald)}{\eff}
    \end{array}
  \end{array}
\]
\caption{Backward type and type/effect translation}
\label{subfig:tr-type}
\end{subfigure}
\caption{Tuple-encoding-based translation}
\label{fig:tr-tuple-encode}
\end{figure}

The translation functions $\tr\den{e}(\pivar)$ take two arguments: the effectful expression $e$ in the source language and an expression $\pivar$ in the target language that evaluates to the effect prefix produced by the context of $e$. 
The transformation follows the call-by-value, left-to-right evaluation order of source language's operational semantics.
For constant $\textrm{\lstinline|c|}$ and variable $\textrm{\lstinline|x|}$ expressions, we pair them with the event prefix observed up to the current evaluation context.
Lambda abstractions $\lambda x.\,e$ go through a syntactic transformation and are then paired with the event prefix in the evaluation environment. 
The translated function expressions expects an additional parameter $y$ representing the event prefix observed at the call site.
Then, as expected, the translation of the function body considers $y$ as the new event prefix.
Thus, a translated function always returns a pair, where the second component represents the event sequence produced after evaluation of a function call.
The translation of application terms $\tr\den{e_1~e_2}(\pivar)$ ensures the strict-evaluation semantics for our source language. We here abbreviate the sequence of let bindings, i.e.,
$
\termkw{let} ~x = e ~\termkw{in}
~\termkw{let} ~x_1 = \#_1~x ~\termkw{in}
~\termkw{let} ~x_1 = \#_2~x ~\termkw{in} ~e'
$,
with $\termkw{let}~ \pair{x_1}{x_2} = \#_{1,2}~e ~\termkw{in} ~e'$.
The last expression we consider is the event expression.
Translation $\tr\den{\termkw{ev}~e}(\pivar)$ follows the order of evaluation by first converting $e$ in the context of the current event prefix and then capturing the event sequence $y$ associated with its result value $x$. The result is the pair consisting of the unit value and the extended event sequence $y \eop x$.


\subsection{Soundness}
\label{sec:inference-soundness}

We prove the theorem that inference of type and effect via program translation is sound.
Intuitively, the theorem states that if we can obtain a \drift{} typing derivation for a translated term, then we can construct a derivation for the typing judgment in the type and effects systems.
The construction uses backward translation functions $\ttr$ and $\tetr$, defined in \cref{subfig:tr-type}, that embed types in the translated program back to types and type/effect pairs, respectively.

\begin{theorem}
  \label{thm:inference-sound}
  If $\tenvd, y:\eff \tjdgd \tr\den{e}(y): \tvald$, then $\ttr(\tenvd);\eff \tejdg e:\tetr(\tvald)$. 
\end{theorem}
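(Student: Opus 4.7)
The plan is to prove \cref{thm:inference-sound} by structural induction on the effectful expression $e$, with the induction hypothesis applied under the assumption that any \drift{} typing derivation for a sub-translation $\tr\den{e'}(y')$ yields a corresponding type and effect derivation for $e'$ with prefix effect determined by the typing of $y'$. For each syntactic form of $e$, I would unfold the definition of $\tr\den{e}(y)$ from \cref{subfig:tr-term}, then perform inversion on the assumed \drift{} derivation to extract typings for each component, and finally assemble a derivation in the type and effect system from \cref{fig:typing} using the inverse translations $\ttr$ and $\tetr$.

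For the base cases, $\tr\den{c}(y) = \pair{c}{y}$ and $\tr\den{x}(y) = \pair{x}{y}$ are typed by \ruleref{t-pair[d]} from a typing of each component, so inversion gives a \drift{} type $\tvald_1$ for the constant/variable (via \ruleref{t-const[d]}/\ruleref{t-var[d]}) paired with the effect $\eff$ witnessed by $y$; applying \ruleref{t-const} or \ruleref{t-var} and possibly \ruleref{t-weaken} yields the required accumulative judgment with unchanged prefix effect $\eff$. For the event case, the translation produces a let-sequence that destructures $\tr\den{e}(\pivar)$ and returns $\pair{\unitval}{y \eop x}$; inversion yields (by the induction hypothesis applied to $e$) an effect $\eff'$ and a base type $\bval$ for the underlying value, and then \ruleref{t-extension-op[d]} forces the outer effect component to be $\eff' \odot \bval$, which matches exactly the conclusion of \ruleref{t-ev}. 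For the abstraction case, the translation wraps the body in an extra $\lambda y$-binding that threads the caller's prefix effect into the body's analysis; inversion through \ruleref{t-abs[d]} twice plus the induction hypothesis on the body yields a judgment $\ttr(\tenvd), x:\ttr(\tvald_x); \eff_x \vdash e : \tetr(\tvald_i)$, which is precisely the premise of \ruleref{t-abs}.

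The application case is the technically heaviest, because the translation introduces two let-bindings that carry and update the effect prefix between the evaluations of $e_1$ and $e_2$. Inversion on \ruleref{t-app[d]} combined with \ruleref{t-proj[d]} gives: a \drift{} type $x_1:\ttr(\tvald_j) \to \cdots$ for the function (whose function-type shape, in $\ttr$, already encodes the dependent arrow-with-effect), a \drift{} typing of the argument with prefix effect $\eff_1$ threaded from the first let-binding, and the overall result type obtained by further projection. Piecing these back together yields the premises of \ruleref{t-app} up to an application of \ruleref{t-weaken} to account for the intermediate effects, and the $\texists$-shaped result of \ruleref{t-app} is obtained from the backward translation $\tetr$. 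Throughout, I would need auxiliary lemmas stating that $\ttr$ and $\tetr$ commute with strengthening and substitution (so that the dependency annotations on $x$ propagate correctly), and that \drift{} subtyping on translated types refines to our subtyping plus effect ordering on the component effect pieces.

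The main obstacle will be the application case, specifically ensuring that the flow of effects through the translation's let-bindings can be faithfully reconstructed as sequential premises $\tjudge{\tenv;\eff}{e_1}{\tval_1}{\eff_1}$ and $\tjudge{\tenv;\eff_1}{e_2}{\tval_2}{\eff_2}$ without losing the dependency structure that the existential-type rule \ruleref{t-app} requires. A second subtle point is that the \drift{} type system bakes subtyping into its typing rules rather than having an explicit weakening rule, so each inversion step produces subtyping side-conditions that must be discharged by emitting \ruleref{t-weaken} at the appropriate places in the reconstructed derivation, and I must verify that the required monotonicity of $\ttr$ and $\tetr$ with respect to both subtype and effect orderings indeed holds on the image of the translation.
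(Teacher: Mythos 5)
Your proposal matches the paper's proof in both structure and substance: structural induction on the source expression, inversion of the \drift{} derivation through \ruleref{t-pair[d]}/\ruleref{t-proj[d]} for each translated form, reassembly via \ruleref{t-weaken}, and the same auxiliary lemmas (that strengthening commutes with the backward translation of typing environments, and that \drift{} subtyping on backward-translatable types yields subtyping and effect ordering in the accumulative system). You also correctly single out the application case and the handling of its existential result type as the technically heaviest step, which is exactly where the paper's proof spends most of its effort.
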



The proof of \cref{thm:inference-sound} builds on the structure of the translated program and relates backward translatable typing environments and types. 
We start by showing that base refinement types have the same type semantics, and that strengthening a base type with respect to the typing environment in the target type system gives us a semantically equal type after strengthening with backward translated typing environment in the type and effects system. 
Then we show that if we have a \drift{} subtyping derivation for backward translatable types and typing environments, $\tvald\tstren{\tenvd} \subtyped \tvald'\tstren{\tenvd}$, then we can obtain a subtyping derivation $\ttr(\tenvd) \tejdg \ttr(\tvald) \subtype \ttr(\tvald')$ in the type and effects system.
The proof proceeds by structural induction on the source expression $e$. The details follow below.


We assume that strengthening operator is extended with the following clauses:
\[\tvald\tstrend{x \leftarrow \eff} = \tvald \qquad \tvald\tstrend{x \leftarrow (\tvald_1 \times \tvald_2)} = \tvald \qquad
  (\tvald_1 \times \tvald_2)\tstrend{x \leftarrow \tvald'} = (\tvald_1\tstrend{x \leftarrow \tvald'} \times \tvald_2\tstrend{x \leftarrow \tvald'})
\]

\begin{proposition}
  \label{prop:base-equality}
  If $\beta^d \in \mathcal{R}^t_X$ with scope $X$, $\beta \in \bdom$ and $X = dom(\beta)$ then $\beta^d = \beta$ iff ${\gamma}^d(\beta^d) \cup (\cvalues \times Env) = \gamma(\beta)$
\end{proposition}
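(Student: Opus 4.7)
The plan is to reduce the claim to the observation that both concretization functions agree on $\bdom$, once the scope bookkeeping has been aligned. Note that base refinement types are the shared parameter of both type systems: $\bval^d$ as an element of $\mathcal{R}^t_X$ in the \drift{} type system and $\bval$ in the type and effect system are drawn from the same lattice $\langle \bdom, \bord, \bbot, \btop, \bjoin, \bmeet\rangle$. Their respective concretizations are built by lifting the single parametric concretization $\bgamma \in \bdom \to \powerset(\cvalues \times \cenvs)$ assumed in \cref{sec:types-and-effects}, so the entire argument rests on unfolding the two definitions and exploiting this common root.

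First I would expand $\gamma^d(\bval^d)$ on base types following the \drift{} definition (for types in $\mathcal{R}^t_X$), and $\gamma(\bval)$ following the definition in \cref{sec:types-and-effects}. Both reduce to $\bgamma$ of the underlying abstract element, with the ambient set of environments fixed by the scope. The hypothesis $X = \dom(\bval)$ synchronizes these ambient sets: the \drift{} concretization is parameterized by the declared scope $X$, while in our system the scope is read off from $\dom(\bval)$, so the two concretizations live over the same universe of value environments precisely under this assumption. This is the main bookkeeping step and also the principal obstacle: the two type systems describe scope in different styles, and one must check that the restriction to $\cvalues \times \cenvs$ (the $\cap (\cvalues \times \mathit{Env})$ clause in the statement) precisely cancels the difference.

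With the two concretizations identified with $\bgamma(\bval^d)$ and $\bgamma(\bval)$ respectively, the forward direction is immediate by functionality of $\bgamma$: $\bval^d = \bval$ as elements of $\bdom$ implies $\bgamma(\bval^d) = \bgamma(\bval)$, so the two (restricted) concretizations agree. For the backward direction I would appeal to the standing convention in \drift{} and in our formalization that elements of $\bdom$ are taken up to semantic equivalence, i.e., $\bgamma$ is effectively injective on the chosen representatives (this is how standard numerical domains such as polyhedra and octagons are normally presented, and it is implicit in the equality $\bval^d = \bval$ being stated as a statement about $\bdom$ rather than about syntactic presentations). Under this convention, equality of concretizations lifts back to equality of abstract elements, which, combined with the scope equality $X = \dom(\bval)$, yields $\bval^d = \bval$ as required. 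If one does not wish to assume injectivity of $\bgamma$, the statement should be read up to semantic equivalence in $\bdom$, and then both directions become purely by unfolding of definitions.
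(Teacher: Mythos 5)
Your proposal is correct and takes essentially the same route as the paper: the paper's own proof is literally the one-line remark that the claim is ``immediate from the definition of types,'' and your unfolding of both concretizations down to the shared $\bgamma$ on $\bdom$, with the hypothesis $X = \dom(\beta)$ aligning the ambient environments, is precisely the content that remark elides (you also implicitly read the statement's $\cup(\cvalues\times Env)$ as the intended restriction, which is the sensible reading). Your caveat that the backward direction needs $\bgamma$ to be injective on representatives --- or the equality to be read up to semantic equivalence --- is a genuine subtlety the paper's proof does not acknowledge, and flagging it explicitly is an improvement rather than a deviation.
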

\begin{proof}
 The proof is immediate from the definition of types. 
\end{proof}
A similar result can be shown for base types $\eff^d$ representing event sequences. Hereafter, $\bval$ (and $\eff$) denotes base type (and effect) both in the effect-free language and in our language that agrees on the scope and represent the same values (event sequences). 

Recall the two simultaneously inductive backwards translation functions that embed types in the translated program back to types and type and effects respectively. 
\[\arraycolsep=5pt\def\arraystretch{1.3}
  \begin{array}{l}
    \begin{array}{l}
      \ttr(\bot^d) \Def= \bbot \qquad \ttr(\top^d) \Def= \btop \qquad \ttr(\bval) \Def= \bval \\
      \ttr((x:\tvald_x) \to (\pivar_x:\eff_x) \to \tvald)) \Def= x:\tandeff{\ttr(\tvald_x)}{\eff_x} \to \tetr(\tvald) \\
      \tetr(\pair{\tvald}{\eff}) \Def= \tandeff{\ttr(\tval)}{\eff}
    \end{array}
  \end{array}
\]
For all the other cases both $\ttr$ and $\tetr$ are undefined. We lift this backwards type translation function to typing environments
\[
  \ttr(\emptyset) = \emptyset \qquad \ttr(\tenvd, x:\tvald) = \ttr(\tenvd), x: \ttr(\tvald) \\
\]

Note that even the backward translation functions are partially defined, they are sufficient to state the following lemmas and theorem that must hold only for the typing derivations of translated programs.
It's immediate to see that, given a closed program, the typing judgments of the corresponding translated terms relate typing contexts on which the backward translation is defined.
That does not impose a restriction on the typing derivation of the subexpressions of the translated terms.  
In the remaining part of this section, when referring to a typing judgment or subtyping relation, we consider only typing environments that are in the domain of $\ttr$. Henceforth, we refer to them as backward translatable type or typing environment.  

The following lemma states that strengthening a base type with respect to a backward translatable typing environment is the same as strengthening it with the translated typing environment in the type and effect system

\begin{lemma}
  \label{lem:type-strength}
  For a basic refinement type $\bval$, $\tenvd$ on $\bval\tstrend{\tenvd} = \bval[\ttr(\tenvd)]$  
\end{lemma}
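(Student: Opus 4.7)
The plan is to proceed by structural induction on the typing environment $\tenvd$, reducing everything to how each possible shape of bound type contributes to strengthening a base refinement type on each side of the equation.

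For the base case $\tenvd = \varnothing$, the backward translation gives $\ttr(\varnothing) = \varnothing$, and both strengthening operators are the identity on the empty environment, so both sides equal $\bval$.

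For the inductive step, let $\tenvd = \tenvd_0, x:\tvald$. I would unfold the definition of strengthening as $\bval\tstrend{\tenvd_0, x:\tvald} = (\bval\tstrend{x \leftarrow \tvald})\tstrend{\tenvd_0}$ and then split on the shape of $\tvald$. There are four cases matching the grammar of \drift{} types (modulo $\top^d$ and $\bot^d$, which cannot appear inside a well-formed typing context). First, if $\tvald = \bval'$ is a base type, then $\ttr(\tvald) = \bval'$ by definition and both strengthening operators push the same base refinement into $\bval$ via the lifted base-domain strengthening, so the equality follows directly from the induction hypothesis applied to $\tenvd_0$. Second, if $\tvald = (x:\tvald_1') \to (\pivar: \eff') \to \tvald_2'$, then $\ttr(\tvald)$ is a function type $x: \tandeff{\ttr(\tvald_1')}{\eff'} \to \tetr(\tvald_2')$ in the accumulative type and effect system. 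By the clause for function types in the \drift{} strengthening operator we have $\bval\tstrend{x \leftarrow \tvald} = \bval$, and on the type-and-effect side the concretization of a function type is $\cvalues \times \cenvs$, so strengthening by a function-type binding also leaves $\bval$ unchanged; the case closes by the induction hypothesis on $\tenvd_0$. The remaining cases, where $\tvald$ is a pair type $\tvald_1 \times \tvald_2$ or an effect type $\eff$, are handled uniformly by the two new strengthening clauses $\tvald\tstrend{x \leftarrow \eff} = \tvald$ and $\tvald\tstrend{x \leftarrow (\tvald_1 \times \tvald_2)} = \tvald$, noting that such bindings only arise in translated programs (for the event-sequence variable and for the intermediate let-bound pairs) and are, by definition of $\ttr$, not passed through to the backward-translated environment, so they contribute nothing on either side.

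The main obstacle I expect is the pair/effect case: strictly speaking $\ttr$ is only defined on base and function types, so the statement of the lemma implicitly restricts $\tenvd$ to environments in the domain of $\ttr$. I would either make that restriction explicit up front (as the surrounding text already indicates for backward-translatable contexts) or show that on any $\tenvd$ arising from a typing derivation of a translated term, pair/effect bindings can be commuted outward and discarded without affecting $\bval\tstrend{\tenvd}$, using \Cref{prop:base-equality} to match the semantic interpretations of base types on the two sides. Once this bookkeeping is in place the inductive argument is entirely routine.
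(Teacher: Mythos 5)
Your proof follows essentially the same route as the paper's: induction on the length of $\tenvd$, a trivial base case, and a case split on the shape of the newly bound type, with the base-type case closed by observing that both sides are the most precise abstraction of the same concrete set (the paper invokes \cref{prop:base-equality} here, as you do) and the function-type case closed by noting that strengthening against a function binding is the identity on both sides. The one point where you diverge is in which cases you enumerate: you dismiss $\top^d$ and $\bot^d$ on the grounds that they "cannot appear inside a well-formed typing context," but that claim is unjustified in the \drift{} setting ($\bot^d$ is precisely the type of unreachable bindings and can occur in an inferred environment), and the paper does treat both as explicit cases — fortunately they are immediate, since $\ttr(\bot^d)=\bbot$ forces both sides to $\bbot$ and $\ttr(\top^d)=\btop$ makes strengthening a no-op by top-strictness of $\bgamma$. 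Conversely, the pair- and effect-type cases you add are excluded in the paper by the standing restriction to backward-translatable environments (which you correctly identify as the cleanest resolution); your alternative of commuting such bindings outward is workable given the extended strengthening clauses but is unnecessary bookkeeping. With the $\top^d$/$\bot^d$ cases restored, your argument matches the paper's.
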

\begin{proof}
  The proof goes by induction on the length of the environment $\tenvd$.

  \begin{description}
  \item[Case Base] We have $\tenvd = \emptyset$. The proof is immediate from the base definition of $\ttr$ function on the empty typing environment.
    
  \item[Case Induction] We assume the induction hypothesis $\beta\tstrend{\tenvd} = \bval[\ttr(\tenvd)]$ and we set to prove that $\bval\tstrend{\tenvd, x: \tvald} = \bval[\ttr(\tenvd, x: \tvald)]$. We proceed by case analysis on the structure of a backwards translatable type $\tvald$.
    
    \textit{Case $\tvald = \bot^d$}. By the definition we know that $\bval\tstrend{\tenvd, x: \bot^d} = \bval\tstrend{\tenvd} \bmeet \beta[x \leftarrow \bot^d]$. From the definition of type strengthening operator on types, and the definition of meet we get $\bval\tstrend{\tenvd, x: \bot^d} = \bbot$. Consider now the right side of the equation that after expanding the backwards type translation on typing environment is $\bval[\ttr(\tenvd), x: \ttr(\bot^d)]$. Using the translation definition for type $\bot^d$, and then using the definition of the base type strengthening with an environment we get $\beta[{\ttr(\tenvd), x: \bbot}] = \bbot$, thus concluding the proof of this case.
    
  \textit{Case $\tvald = \top^d$}. Following a similar approach, we expand the strengthening operator on the left side and get $\bval\tstrend{\tenvd, x: \top^d} = \bval\tstrend{\tenvd}$. By the definitions of the strengthening operator and the backwards translation of type $\top^d$, and by the fact that the concretization function is top-strict we obtain that $\beta[\ttr(\tenvd), x: \btop] = \beta[\ttr(\tenvd)]$. We finalize the proof of this case by applying the induction hypothesis.  

  \textit{Case $\tvald = \bval'$}. Expanding both sides of the equations and then applying the respective definitions we get the following proof obligation: $\bval\tstrend{\tenvd} \bmeet \bval[x \leftarrow \bval'] = \beta[\ttr(\tenvd), x: \bval']$.
  Using the definitions and the induction hypothesis it's easy to see that they agree in terms of their semantics given by their respective concretization functions because both types represent the most precise abstraction of the same set of concrete values and the infimum is unique. We use proposition \ref{prop:base-equality} we conclude the proof

  \textit{Case $\tvald = (x: \tvald_x) \to (\pivar: \eff) \to \tvald$}. The proof is immediate because the strengthening of a base type with a dependency variable bound to a function is similar to an identity operation. The argument for this case is the same as Case T-Top.
  
  \end{description}
\end{proof}

A similar result can be proved for strengthening a base type $\eff$ representing effect sequences. 

\begin{lemma}
  \label{lem:eff-strength}
  For an effect $\eff$, $\eff\tstrend{\tenvd} = \eff[\ttr(\tenvd)]$  
\end{lemma}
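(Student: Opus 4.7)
The plan is to adapt the proof strategy of Lemma~\ref{lem:type-strength} from base refinement types to effects, proceeding by induction on the length of $\tenvd$ with case analysis on the structure of the backward translatable type bound in the last position. Before the induction I would state and prove an effect analogue of Proposition~\ref{prop:base-equality}: whenever two effects $\eff_1, \eff_2 \in \effdom$ satisfy $\effgamma(\eff_1) = \effgamma(\eff_2)$ on the relevant scope, they coincide as abstract elements. This uniqueness-of-best-abstraction property on $\effdom$ is the effect-side counterpart of the base-type identification used for Lemma~\ref{lem:type-strength} and will be invoked in the key base-type subcase below. For the base case $\tenvd = \varnothing$, both sides collapse to $\eff$ via $\ttr(\varnothing) = \varnothing$ and the definition of strengthening with the empty environment.

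For the inductive step, fix $\eff\tstrend{\tenvd} = \eff[\ttr(\tenvd)]$ as the induction hypothesis and case split on $\tvald$. When $\tvald = \bot^d$, the extended strengthening clauses together with $\ttr(\bot^d) = \bbot$ force both sides to $\effbot$ (meet with $\bbot$ on the left, pointwise bottom behavior on the right). When $\tvald = \top^d$, both sides reduce to $\eff\tstrend{\tenvd}$ and $\eff[\ttr(\tenvd)]$ respectively by top-strictness of $\tgamma$, so the induction hypothesis closes the case. When $\tvald$ is a function type $(x:\tvald_x) \to (\pivar_x:\eff_x) \to \tvald''$, the extension of $\tstrend{\cdot}$ stated just before the lemma makes strengthening of an effect with a function-typed binding the identity, since every pair whose value component is a function lies in the coarse concretization of the function type; again the hypothesis closes. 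The key case is $\tvald = \bval'$: unfolding the definitions of both strengthening operators, each side computes the best abstraction of the same concrete set, namely those $\tuple{\ceff,\cenv} \in \effgamma(\eff\tstrend{\tenvd})$ for which there exists some $\cval$ with $\tuple{\cval,\cenv} \in \bgamma(\bval')$. Combining the induction hypothesis (applied to $\tenvd$) with the effect analogue of Proposition~\ref{prop:base-equality} lifts this semantic equality to the syntactic equality required.

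The main obstacle is precisely this last subcase, since it hinges on the effect-lattice uniqueness property. Because the abstract effect domain $\effdom$ constructed in Section~\ref{sec:acc-autom-eff} arises as the pointwise lifting $Q \to \bdom$, uniqueness for $\effdom$ should transfer directly from the corresponding property on $\bdom$ that underlies Proposition~\ref{prop:base-equality}; however, justifying this transfer carefully in the presence of control-state partitioning is the principal technical step. A secondary concern is ensuring that the extended strengthening clauses for pair-typed and effect-typed dependency bindings (listed immediately before the lemma statement) are compatible with the intended meaning of $\ttr$, so that no relevant constraint is silently dropped when the dependency variable legitimately participates in the effect.
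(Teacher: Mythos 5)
Your proposal matches the paper's approach: the paper's own proof of this lemma is a one-line deferral stating that, since effects are base refinement types drawn from $\effdom$, the argument is the same as for Lemma~\ref{lem:type-strength}, and your proposal simply carries out that adaptation in detail (induction on the length of $\tenvd$, case analysis on the bound type, and the effect-side analogue of Proposition~\ref{prop:base-equality}, which the paper itself asserts holds). Everything you spell out is consistent with the paper's intended argument, so no gap.
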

\begin{proof}
  Because $\eff$ are base refinement types drawn from the domain $\effdom$, the proof is similar to the proof for lemma \ref{lem:type-strength}.
\end{proof}

We ease the notation by dropping the superscript for the strengthening operator when it is clear from the context in which type system it is applied.

In the next lemma we show how that if we have a subtyping derivations in the target type system for backward translatable types and typing environments then we can obtain a derivation for subtyping in the type and effect system. As expected, the same result holds for effects with respect to their order relation.

\begin{lemma}
  \label{lem:subtype-sound}
  For all backward translatable $\tenvd, \tvald, \tvald'$, if $\tvald\tstren{\tenvd} \subtyped \tvald'\tstren{\tenvd}$ then $\ttr(\tenvd) \tejdg \ttr(\tvald) \subtype \ttr(\tvald')$
\end{lemma}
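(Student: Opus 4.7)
The plan is to prove the lemma by structural induction on the backward translatable types $\tvald$ and $\tvald'$ (or, equivalently, on the derivation of $\tvald\tstren{\tenvd} \subtyped \tvald'\tstren{\tenvd}$). Since $\ttr$ is only defined on types of the form $\bot^d$, $\bval$, or $(x:\tvald_x) \to (\pivar_x:\eff_x) \to \pair{\tvald_o}{\eff_o}$, we only need to consider these structural cases for $\tvald$ and $\tvald'$, exploiting the fact that the shapes of $\tvald$ and $\tvald'$ must be compatible in the \drift{} subtyping relation (so that a derivation can exist).

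For the base refinement case ($\tvald = \bval_1$, $\tvald' = \bval_2$), the \drift{} derivation yields $\bval_1\tstren{\tenvd} \bord \bval_2\tstren{\tenvd}$ by \ruleref{s-base[d]}. Since strengthening is decreasing, $\bval_2\tstren{\tenvd} \bord \bval_2$, so by transitivity $\bval_1\tstren{\tenvd} \bord \bval_2$. Applying \cref{lem:type-strength} then gives $\bval_1[\ttr(\tenvd)] \bord \bval_2$, which is exactly the premise of rule \ruleref{s-base} needed to derive $\subtjudge{\ttr(\tenvd)}{\bval_1}{\bval_2}$. The cases where $\tvald = \bot^d$ are similar: when $\tvald' = \bot^d$ the conclusion is reflexivity, and when $\tvald'$ is a base refinement we apply \ruleref{s-base} using $\bbot \bord \bval$; for compatibility reasons no derivation produces $\bot^d$ against a function type after translation of well-formed \drift{} judgements.

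The main obstacle is the function case: $\tvald = (x : \tvald_x) \to (\pivar_x : \eff_x) \to \pair{\tvald_o}{\eff_o}$ and similarly $\tvald'$. Inverting the \drift{} derivation (via \ruleref{s-fun[d]} applied twice and then \ruleref{s-pair[d]}) yields four premises: $\tvald_x'\tstren{\tenvd} \subtyped \tvald_x\tstren{\tenvd}$, $\eff_x'\tstren{\tenvd,x:\tvald_x'} \subtyped \eff_x\tstren{\tenvd,x:\tvald_x'}$, $\tvald_o\tstren{\tenvd,x:\tvald_x',\pivar_x:\eff_x'} \subtyped \tvald_o'\tstren{\tenvd,x:\tvald_x',\pivar_x:\eff_x'}$, and $\eff_o\tstren{\tenvd,x:\tvald_x',\pivar_x:\eff_x'} \subtyped \eff_o'\tstren{\tenvd,x:\tvald_x',\pivar_x:\eff_x'}$. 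Using the extended strengthening clause $\tvald\tstrend{x \leftarrow \eff} = \tvald$, the $\pivar_x{:}\eff_x'$ binding drops out. We then apply the induction hypothesis to the type premises to obtain the corresponding subtyping judgements in the accumulative system, and use \cref{lem:eff-strength} (plus monotonicity of strengthening, as in the base case) to reformulate each effect inclusion as $\tstrengthen{\eff_x'}{\ttr(\tenvd),x:\ttr(\tvald_x')} \efford \eff_x$ and $\tstrengthen{\eff_o}{\ttr(\tenvd),x:\ttr(\tvald_x')} \efford \eff_o'$. These four facts are precisely the premises of rule \ruleref{s-fun}, so we conclude $\subtjudge{\ttr(\tenvd)}{\ttr(\tvald)}{\ttr(\tvald')}$.

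The delicate point throughout will be reconciling the bidirectional strengthening on both sides of $\subtyped$ in the \drift{} premises with the one-sided strengthening in the accumulative system's subtyping premises; this is handled uniformly by monotonicity of $\bgamma$ and $\effgamma$, together with the fact that $\tval\tstren{\tenv} \bord \tval$. A symmetric lemma for effect ordering (which the proof of the function case implicitly invokes) follows by the same argument on the $\effdom$ side.
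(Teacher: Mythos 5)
Your proposal is correct and follows essentially the same route as the paper's proof: induction on the structure/derivation with a case split on the last \drift{} subtyping rule, the base case discharged via \cref{lem:type-strength} together with the reductive property of strengthening and rule \ruleref{s-base}, and the function case discharged by inverting \ruleref{s-fun[d]} (with the pair output handled component-wise), applying the induction hypothesis and \cref{lem:eff-strength}, and concluding with \ruleref{s-fun}. The separate effect-ordering statement you mention at the end is exactly the paper's \cref{lem:subeff-sound}, proved there by the same argument.
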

\begin{proof}
  The proof is by simultaneous induction over the depth of $\tvald$ and $\tvald'$. We case split on subtyping rules that apply to backward translatable types. 

  \begin{description}
  \item[Case \ruleref{s-bot[d]}] We have $\tvald = \bot^d$ and $\tvald'$ is any type, where the following must hold $\tvald' \ne \top^d$ . From the definition of translation of $\bot^d$ and the definition of strengthening operator, we get that that $\tvald\tstren{\tenvd} = \bot^d$. Then, by the definition of translation we get $\ttr(\tvald) = \bbot$. 
    Next, we consider the type structure of $\tvald'$. If $\tvald' = \bot^d$ or $\tvald' = \bval'$, then the proof is immediate as the ordering holds in the domain of basic refinement types and we can apply \ruleref{s-base}. 
    If it's a function type case, we must have $\tvald' = (x:\tvald_1) \to (\pivar:\eff_1) \to \tvald_2$, with its type translated backward to $\ttr(\tvald') = x:\tandeff{\ttr(\tvald_1)}{\eff} \to \tetr(\tvald_2)$. The rule is immediate following the premise that the function in the type and effect system is different from $\top$.
  \item[Case \ruleref{s-base[d]}] Follows immediately from lemma \ref{lem:type-strength}, reductive property of strengthening and rule \ruleref{s-base} 
    \item[Case \ruleref{s-fun[d]}] We can have $\tvald = (x:\tvald_1) \to (\pivar:\eff_1) \to \tvald_2$ and $\tvald' = (x:\tvald_1') \to (\pivar:\eff_1') \to \tvald_2'$. From subtyping rule premises we get the proof that input type and effect are contravariant, and output type is covariant. We must have $\tvald_1' \subtyped \tvald_1$, $\eff_1'\tstren{\tenvd,x:\tvald_1'} \subtyped \eff_1\tstren{\tenvd,x:\tvald_1'}$ and $\tvald_2\tstren{\tenvd, x:\tvald_1'} \subtyped \tvald_2'\tstren{\tenvd, x: \tvald_1}$. We know that $\tvald_2$ and $\tvald_2'$ are pair types and we can use the definition of strengthening that is applied component wise. Finally, we use the induction hypothesis and rule \ruleref{s-fun} to conclude the proof.
  \end{description}
\end{proof}

\begin{lemma}
  \label{lem:subeff-sound}
  For all backward translatable $\tenvd$, and effects $\eff, \eff'$, if $\eff\tstren{\tenvd} \subtyped \eff'\tstren{\tenvd}$ then $\eff[\ttr(\tenvd)] \efford \eff'[\ttr(\tenvd)]$
\end{lemma}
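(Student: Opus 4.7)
The plan is to specialize the reasoning of Lemma~\ref{lem:subtype-sound} to the effect case, which collapses to a single subtyping rule and therefore requires no induction on type structure. First I would observe that in the instantiated \drift{} type system from Section~\ref{sec:inference-algo}, elements of $\effdom$ are included as an additional family of base-like types alongside $\bval \in \bdom$, with subtyping governed by an analog of rule \ruleref{s-base[d]}. Since $\eff$ and $\eff'$ are neither $\bot^d$, $\top^d$, function types, nor pair types, this is the only rule whose conclusion can have the shape $\eff\tstren{\tenvd} \subtyped \eff'\tstren{\tenvd}$, and its premise unpacks directly to $\eff\tstren{\tenvd} \efford \eff'\tstren{\tenvd}$ in the effect lattice.

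Next I would apply Lemma~\ref{lem:eff-strength} twice, rewriting the two sides of the obtained inequality as $\eff\tstren{\tenvd} = \eff[\ttr(\tenvd)]$ and $\eff'\tstren{\tenvd} = \eff'[\ttr(\tenvd)]$. Substituting these equalities yields the desired conclusion $\eff[\ttr(\tenvd)] \efford \eff'[\ttr(\tenvd)]$.

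I do not anticipate a substantive obstacle. Unlike the proof of Lemma~\ref{lem:subtype-sound}, there is no case split on the shape of the related types, since effects are not compound and none of the other subtyping rules of Figure~\ref{fig:drift-type-rules} apply. The only minor nuance is to make explicit that the \drift{} subtyping calculus treats effects via a rule analogous to \ruleref{s-base[d]}; this is routine given the parametric handling of base refinement types in Section~\ref{sec:inference-algo}, combined with the fact that $\langle \effdom, \efford \rangle$ is itself a lattice in which the soundness condition $\eff_1 \efford \eff_2$ is the natural semantic ordering. The proof is therefore a direct composition of Lemma~\ref{lem:eff-strength} with the degenerate subtyping rule.
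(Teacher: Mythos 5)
Your proof is correct and follows essentially the same route as the paper, which simply argues that the claim is established ``similarly to Lemma~\ref{lem:subtype-sound}'': the only applicable case is the base-type case, where inversion of the \drift{} subtyping derivation yields the ordering $\efford$ on the strengthened effects and Lemma~\ref{lem:eff-strength} rewrites both sides into the form $\eff[\ttr(\tenvd)]$. Your observation that no induction or case split is needed because effects are non-compound, base-like types is exactly the specialization the paper intends.
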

\begin{proof}
  The proof is similar to the proof for \ref{lem:subtype-sound}
\end{proof}

We are now ready to prove the theorem \ref{thm:inference-sound}
\begin{proof}
  The proof goes by structural induction. We show that we can inductively construct a derivation tree for a type and effect judgment of term $e$ from the derivation trees of its subterms. 
  We do this by considering each of the possible forms $e$ can have.
  We use $\tenv=\ttr(\tenvd)$ in the following. 
  \\  
  \begin{description} 
  \item[Case Const] We have term $e = c$ for some constant $c$ such that, given a term $\pivar$ with type $\eff$, the translated term is $\tr[\![c]\!](\pivar) \Def= \langle c, \pivar \rangle$ and $\tenvd,\pivar:\eff \tjdgd \langle c, \pivar \rangle: \tvald$.
    By the fact that the translated term is a pair we must have a proof for \ruleref{t-pair[d]}, and $\tvald = \tvald_1 \times \eff'$
    Using the typing rule \ruleref{t-proj[d]} we retrieve the proofs for individual types $\tenvd,\pivar:\eff \tjdgd c:\tvald_1$ and $\tenvd,\pivar:\eff \tjdgd \pivar:\eff'$ of each pair component.

    By rule \ruleref{t-const[d]} we know that the most precise type of $c$ is $\tstrend{\nu=c}\tstren{\tenvd, \pivar: \eff}$ and that $\tvald_1$ must be a base refinement type. Let $\tval = \ttr(\tvald_1)$. By the definition of the backwards translation for types, $\tval \in \bdom$.    
    From the definition of type semantics for constants we know that both $\tstrend{\nu = c}$ and $\{\nu = c\}_\bdom$ are the most precise approximations of $\{c\} \times Env$, therefore they must be equal.
    Additionally, we know that $\tstrend{\nu=c}\tstren{\tenvd, \pivar: \eff} = \tstrend{\nu=c}\tstren{\tenvd}$ because by definition the strengthening with a type information for effects leaves unchanged the type that is strengthened, and $\tstrend{\nu = c}\tstren{\tenvd} \subtyped \tvald_1$ from \ruleref{t-const[d]}. Then by lemma \ref{lem:subtype-sound} and rule \ruleref{s-base} we obtain that $\tenv \vdash \{\nu = c\}_\bdom \subtype \tval$. 
    
    Next, we move to derive the effect ordering. From the rule \ruleref{t-var[d]} we know that the following must hold $\eff\tstrend{\nu=\pivar}\tstrend{\tenvd, \pivar:\eff} \subtype \eff'\tstrend{\nu=\pivar}\tstrend{\tenvd, \pivar:\eff}$, that after eliminating the strengthening operation with $\eff$ is $\eff\tstrend{\nu=\pivar}\tstrend{\tenvd} \subtype \eff'\tstrend{\nu=\pivar}\tstrend{\tenvd}$.  
    By lemma \ref{lem:subeff-sound} together with the commutativity and reductive properties of the strengthening operator, we get $\eff\tstren{\tenv} \efford \eff\tstren{\nu=\pivar}\tstren{\tenv} \efford \eff'\tstren{\nu=\pivar}\tstren{\tenv} \efford \eff'$, giving us $\eff\tstren{\tenvd} \efford \eff'$. 

    By the reductive property of the strengthening operation we also have $\eff[\tenv] \efford \eff$
    
    We conclude the proof for $\tenv;\eff \vdash c: \tandeff{\tval}{\eff'}$ by using \ruleref{t-weaken} together with \ruleref{t-const} for constant $c$ instantiated with typing environment $\tenv$ and effect $\eff$, and using the proofs derived for $\eff[\tenv] \efford \eff$, $\tenv \vdash \{\nu = c\}_\bdom \subtype \tval$ and $\eff[\tenv] \efford \eff'$.   
    \\
    
  \item[Case Var] We have term $e = x$ for some $x$ such that, given a term $\pivar$ with type $\eff$, the translated term is $\tr[\![x]\!](\pivar) \Def= \langle x, \pivar \rangle$ and $\tenvd, \pivar:\eff \tjdgd \langle x, \pivar \rangle: \tvald$.
    By the fact that the translated term is a pair we must have a proof for \ruleref{t-pair[d]}, and $\tvald = \tvald_1 \times \eff'$
    We use again \ruleref{t-proj[d]} for getting $\tenvd,\pivar:\eff \tjdgd x : \tvald_1$ and $\tenvd,\pivar:\eff \tjdgd \pivar: \eff'$.
    
    From the rule \ruleref{t-var[d]} we know that the premise must hold $\tenvd(x)\tstren{\nu=x}\tstren{\tenvd, \pivar:\eff} \subtyped \tvald_1\tstren{\nu=x}\tstren{\tenvd, \pivar:\eff}$.
    We start by eliminating the strengthening with the $\eff$ type information in both sides of the subtyping relation.
    Let $\tval = \ttr(\tvald_1)$.
    Next, by lemma \ref{lem:subtype-sound} we get $\tenv \tejdg \tenv(x)\tstren{\nu=x} \subtype \tval[\nu=x]$. We have $\tenv(x)\tstren{\nu=x} = \tenv(x)$ because $x$ is a fresh variable and it doesn't restrict the values represented by the type variable $\nu$. By the reductive property of strengthening we have $\tenv \tejdg \tval[\nu=x] \subtype \tval$. Therefore, we get $\tenv \tejdg \tenv(x) \subtype \tval$. 
    We get the proof that $\eff\tstren{\tenv} \efford \eff'$ and $\eff\tstren{\tenv} \efford \eff$ the same we did for Case e-var.
    We use again \ruleref{t-weaken} with \ruleref{t-var} for variable $x$ instantiated with typing environment $\tenv$ and effect $\eff$, together with the proofs derived for
    $\eff[\tenv] \efford \eff$, $\tenv \tejdg \tenv(x) \subtype \tval$ and $\eff[\tenv] \efford \eff'$ to conclude the proof.
    \\
    
  \item[Case Abs] We have term $e = \lambda x. e_i$ such that, given a term $\pivar$ with type $\eff$, the translated term is $\tr[\![\lambda x. e_i]\!] = \langle \lambda x. \lambda \pivar_x. (\tr[\![e_i]\!](\pivar_x)), \pivar \rangle$, and $\tenvd, \pivar: \eff \tjdgd \langle \lambda x. \lambda \pivar_x. (\tr[\![e_i]\!](\pivar_x)), \pivar \rangle: \tvald$.
     From the rules \ruleref{t-pair[d]}, \ruleref{t-proj[d]} and \ruleref{t-abs[d]} used in derivation, and by induction hypothesis we get the derivations in the type and effect system for the following:
     \begin{itemize}
     \item $\tetr(\tvald) = \tandeff{\tval}{\eff'}$
     \item $\tenv \tejdg \lambda x.e_i : \tval$
     \item $\tenv, x: \tval_x;\; \eff_x \tejdg e_i : \tandeff{\tval_i}{\eff_i}$ 
     \item $\tenv \tejdg (x: \tandeff{\tval_x}{\eff_x} \to \tandeff{\tval_i}{\eff_i}) \subtype \tval$.
     \end{itemize}
     Then following the subtyping judgment it must be that $\tval = x: \tval_x'\&\eff_x' \to \tval_i'\&\eff_i'$. The proof follows immediately from using \ref{lem:subtype-sound} several times to construct the subtyping judgments in the type and effect system and lastly \ruleref{t-weaken} after constructing a derivation for the effect in the same spirit as for the constant and variable expressions.  
   \item[Case App] We have term $e = e_1~e_2$ such that, given a term $\pivar$ with type $\eff$, the translated term is after desugaring the $\termkw{let}$ constructs:
     \[\columnsep=5pt
       \begin{array}{l}
       (\lambda e_1'.(\\
        \;\;\lambda \pivar_1'.(\\
        \;\;\;\;\lambda e_2'.(\\
        \;\;\;\;\;\lambda \pivar_2'.(e_1'~e_2')~\pivar_2')\\
        \;\;\;\;\;(\#_2(\tr[\![e_2]\!](\pivar_1')))\\
        \;\;\;\;(\#_1(\tr[\![e_2]\!](\pivar_1'))))\\
        \;\;(\#_2(\tr[\![e_1])\!](\pivar))))\\
        \;(\#_1(\tr[\![e_1]\!](\pivar)))\\
     \end{array}\]
   We know that $\tenvd, \pivar:\eff \tjdgd \tr[\![e_1~e_2]\!](\pivar):\tvald$, and we must have used the rule \ruleref{t-app[d]}, therefore we know that $\tenvd, \pivar:\eff \tjdgd \tr[\![e_1]\!](\pivar): \tvald_1$ and $\tenvd, e_1': \tvald_1', \pivar_1':\eff_1 \tjdgd \tr[\![e_2]\!](\pivar_1'): \tvald_2$ and $\tvald_1 \subtyped x:\tvald_2 \to \tvald$.
   By induction hypothesis we know that $\tenvd, \pivar:\eff \tjdgd \tr[\![e_1]\!](\pivar): \tvald_1 \Rightarrow \tenv; \eff \tejdg e_1: \tetr(\tvald_1)$. Let $\tetr(\tvald_1) = \tval_1 \times \eff_1$.
   It is immediate to see that, if in the type and effect system the following holds $\tenv \vdash \tval_1 <: \tval_2$, then we can always prove that for any $x:\tval_x$, $\tenv, x:\tval_x$ by the monotonicity of strengthening operator.
   Using lemma \ref{lem:subtype-sound} we can construct from the subtyping derivation of $\lambda e_1'.\ldots$ we get from the premise of \ruleref{t-app[d]} a subtyping derivation. In our type and effect system the subtyping will hold in an empty typing environment. Because the strengthening operator is reductive, we can always introduce new type bindings in the context while maintaining the subtyping relation. We do that repeatedly until we can construct a proof for the subtyping with the same typing environment as used in for the expression $\#_1(\tr\den{e_1}(\pivar))$. Then we obtain the existential type by the reflexivity of subtyping followed by the derivation we can construct in the type and effect system using the introduction of a new type binding, and by the application of rule \ruleref{s-exists}.
   While long and cumbersome, the proof follows from applying the same strategy multiple times and then the rule \ruleref{t-app} to construct the typing derivation for the application. 
   \\
  \item[Case EV] The proof is immediate by using rules \ruleref{t-extension-op[d]} and the type operator $\odot$ provided by the abstract effect domain.
  \end{description}
  \end{proof}

\section{Trace-partitioning typing rules}
\label{sec:tp-details}

\paragraph{Original Drift type system} We present the original \drift{} subtyping and typing rules in Fig.~\ref{fig:drift-type-rules-ctx}.
We only present some of the interesting rule changes pertaining to callsite partitioning compared to Fig.~\ref{fig:drift-type-rules}.
The \drift{} type system, as described by \cite{DBLP:journals/pacmpl/PavlinovicSW21}, takes in as a parameter a finite set of \textit{abstract stacks} $\pstacks$, which represent abstractions of concrete call stacks.
These abstract stacks are equipped with an abstract concatenation operation $\pconcat : Loc \times \pstacks \to \pstacks$ that prepends a callsite location $i$ onto an abstract stack $\pstack$, denoted $i\pconcat\pstack$.
They further define function types, $x : \ttabled$, to be a mapping function from abstract stacks $\pstack \in \pstacks$ to dependent types.
Thus, $x : \ttabled$ essentially captures separate dependent function types $x : t_i \to t_o$ for every abstract stack $\pstack$.

Here, for a function type $t = x:\ttabled$, and $\pstack \in \pstacks$, $\ttabled(\pstack) = \pair{\tvald_i}{\tvald_o}$ for some $t_i$ and $t_o$.
Also, we denote by $t|_{\pstack}$ the function type $x : [\tentry{\pstack}{t_i}{t_o}]$ obtained from $t$ by restricting it to the call stack $\pstack$.
Some of the other typing operations are already defined earlier.

\begin{figure}
    \begin{mathpar}
      \inferH{s-fun[dcs]}
      {\ttabled_1(\pstack) = \pair{\tvald_{i1}}{\tvald_{o1}} \\ \ttabled_2(\pstack) = \pair{\tvald_{i2}}{\tvald_{o2}} \\\\ \tvald_{i2} <: \tvald_{i1} \\ \tvald_{o1}[x \mto \tvald_{i2}] <: \tvald_{o2}[x \mto \tvald_{i2}]}
      {x:\ttabled_1 \;<:\; x:\ttabled_2}\forall \pstack\in\pstacks
      \and
      \inferHtop{t-app[dcs]}
      {\tenvd,\pstack \vdash e_i : t_i \\ \tenvd,\pstack \vdash e_j : t_j \\ t_i <: x : [\tentry{i\pconcat\pstack}{t_j}{t}]}
      {\tenvd,\pstack \vdash e_i~e_j: \tvald}
      \and
      \inferHtop{t-abs[dcs]}
      {\tenvd_i = \tenvd.x : t_x \\ \tenvd_i,\pstack' \vdash e_i : t_i \\ x : [\tentry{\pstack'}{t_x}{t_i}] <: t|_{\pstack'}}
      {\tenvd,\pstack \vdash \lambda x.e_i : t} \forall \pstack' \in t
    \end{mathpar}
    \caption{\label{fig:drift-type-rules-ctx} Data flow refinement type system with callsite partitioning}  
  \end{figure}

In general, the typing judgement now takes the form $\tenvd, \pstack \vdash e : t$, i.e. every expression is also typed in the context of the abstract call stack.
The rule \ruleref{t-app[dcs]} for typing function applications $e_i~e_j$ requires that the type $t_i$ of $e_i$ must be a subtype of the function type $x : [\tentry{i\pconcat\pstack}{t_j}{t}]$ where $t_j$ is the type of the argument expression $e_j$ and $t$ is the result type of the function application.
Note that the rule extends the abstract stack $\pstack$ with the call site location $i$ identifying $e_i$.
The subtype relation then forces $t_i$ to have an appropriate entry for the abstract call stack $i\pconcat\pstack$.
The rule \ruleref{t-abs[dcs]} for typing lambda abstraction is as usual, except that it universally quantifies over all abstract stacks $\pstack'$ at which $t$ has been called.
The side condition $\forall\pstack' \in t$ implicitly constraints $t$ to be a function type.

\paragraph{Introducing if-then-else partitioning}
To extend the original drift type system with trace partitioning, we introduce a notion of \textit{abstract traces} $\ptraces$, which represents abstractions of concrete program traces, i.e. any subsequence (need not be contiguous) of program locations visited in a valid execution of a program.
Abstract traces, like abstract stacks are also equipped with an abstract concatenation operation $\pconcat : Loc \times \ptraces \to \ptraces$ that prepends a location $i$ onto an abstract trace $\ptrace$, denoted $i\pconcat\ptrace$.

We now say that every expression is typed in an \textit{abstract context}, a combination of an abstract stack and an abstract trace.
We redefine the typing judgement to take the form $\tenvd, (\pstack,\ptrace) \vdash e : r$, where $r\in \mathcal{R}$.
$\mathcal{R}$ is the set of \textit{trace-mappings} from abstract traces to refinement types.
Any expression may have nested if-then-else clauses, and according to whether those clauses evaluate to true or false, the expression may have different types.
A trace-mapping $r \in \mathcal{R}$, maps the different traces created by those clauses to different refinement types.
Thus, the typing judgement informs the possible types that an expression can have given the trace the program took up to the expression.
Further, we change the definition of output types for function types to be a trace-mapping $r$.
Note that we don't change input types to be trace-mappings, as any trace-mapping could be lifted to the function to have different abstract contexts.
Moreover, we also augment table types to include the environmental constraints the table type is valid in.

\begin{mathpar}
  t \in \cvalues ::= \bot^d ~\mid~ \top^d ~\mid~ \bval ~\mid~ x : \ttabled \qquad 
  \bval \in \mathcal{B} \qquad 
  r \in \mathcal{R} ::= \ptraces \rightarrow \mathcal{V} \\ 
  x : \ttabled \in \mathcal{L} ::= (\Sigma x \in Var. \pstacks\times\ptraces \rightarrow \cvalues \times \mathcal{R}) \times Env
\end{mathpar}

We define a function $\env(v) \in \cvalues \rightarrow (Var \rightarrow \cvalues)$, that extracts the environment $\cenvs$ from a refinement type $v$, as follows
\begin{equation*}
  \env(x) = \begin{cases}
    y : Var. \bot^d &x = \bot^d\\
    y : Var. \top^d &x = \top^d\\
    \cenv &x = \pair{x}{\cenv} \in \mathcal{B}\cup\mathcal{L}
  \end{cases}
\end{equation*}

\begin{figure}
  \begin{mathpar}
    \inferH{s-tr}
    {r_1(\ptrace)=t_1 \\ r_2(\ptrace)=t_2 \\ t_1 <: t_2}
    {r_1 <: r_2} \forall \ptrace \in \ptraces

    \inferH{t-cons[dcs+tr]}
    {[v=c]^d[\tenvd] <: t}
    {\tenvd,(\pstack,\ptrace) \vdash c : [\trentry{\ptrace}{t}]}

    \inferH{t-var[dcs+tr]}
    {\tenvd(x)[v=x]^d[\tenvd] <: t[v=x]^d[\tenvd]}
    {\tenvd,(\pstack,\ptrace) \vdash x : [\trentry{\ptrace}{t}]}

    \inferH{t-abs[dcs+tr]}
    {\tenvd_i = \tenvd.x : r_x = [\trentry{\ptrace'}{t_x}] \\ \tenvd_i,(\pstack',\ptrace') \vdash e_i : r_i \\ x : [\tentry{(\pstack',\ptrace')}{t_x}{r_i}] <: t|_{(\pstack',\ptrace')}}
    {\tenvd,(\pstack,\ptrace) \vdash \lambda x.e_i : [\trentry{\ptrace}{t}]} \forall (\pstack',\ptrace') \in t

    \inferH{t-app[dcs+tr]}
    {\tenvd,(\pstack,\ptrace) \vdash e_i : r_i \\ r_i(\ptrace_i)=t_i \\ \tenvd_i = \env(t_i) \\ \tenvd_i,(\pstack,\ptrace_i) \vdash e_j : r_j \\\\  \forall \ptrace_j\in r_j.r_j(\ptrace_j)=t_j. t_i <: x : [\tentry{(i\pconcat\pstack, \ptrace_j)}{t_j}{r_k}]. r_k <: r}
    {\tenvd,(\pstack,\ptrace) \vdash e_i~e_j: r} \forall \ptrace_i\in r_i

    \inferH{t-ite-true[dcs+tr]}
    {\tenvd,(\pstack,\ptrace) \vdash b <: [\trentry{\_}{bool}] \\ \tenvd_t = \tenvd[b=\textbf{true}] \\ \tenvd_t,(\pstack,\ptrace) \vdash e_i : r_i \\ \forall \ptrace_i\in r_i.r_i(\ptrace_i)=t_i. [\trentry{i\pconcat \ptrace_i}{t_i}] <: r }
    {\tenvd,(\pstack,\ptrace) \vdash \textbf{if}~b~\textbf{then}~e_i~\textbf{else}~e_j : r}
  
    \inferH{t-ite-false[dcs+tr]}
    {\tenvd,(\pstack,\ptrace) \vdash b <: [\trentry{\_}{bool}] \\ \tenvd_f = \tenvd[b=\textbf{false}] \\ \tenvd_f,(\pstack,\ptrace) \vdash e_j : r_j \\ \forall \ptrace_j\in r_j.r_j(\ptrace_j)=t_j. [\trentry{j\pconcat \ptrace_j}{t_j}] <: r }
    {\tenvd,(\pstack,\ptrace) \vdash \textbf{if}~b~\textbf{then}~e_i~\textbf{else}~e_j : r}

  \end{mathpar}
  \caption{\label{fig:drift-type-rules-tp} Data flow refinement type system with callsite and trace partitioning}  
\end{figure}

In figure~\ref{fig:drift-type-rules-tp} we give the typing rules for the extended type system, along with a subtyping relation for trace-mappings.
The subtyping rule \ruleref{s-tr} enforces subtyping over all entries in trace-mappings.
Compared to the rules in figure~\ref{fig:drift-type-rules}, the \ruleref{t-cons[dcs+tr]} and \ruleref{t-var[dcs+tr]} rules are largely the same, with the exception of the change in the typing judgement.
The trace-mapping only includes a single entry for the trace the respective expression is typed in, as these expressions don't create any new traces.
The rule \ruleref{t-abs[dcs+tr]} is also mostly similar to the \ruleref{t-abs[dcs]} rule.
The new rule also shows the changes in the table types.

The \ruleref{t-app[dcs+tr]} rule shows how abstract traces are threaded through for composite expressions.
The argument expression $e_j$ is typed under all the traces created by the function expression $e_i$, and the respective environment constraints captured using the refinement type $t_i$.
As for \ruleref{t-app[dcs]}, this rule similarly forces $t_i$ to have an entry for abstract contexts consisting of the new callsite $i$, and the abstract traces created by $e_i$ and then $e_j$.
Finally, all valid output mappings $r_k$ are subtypes of trace-mapping $r$.
This also captures the different traces created by the function body $e_i$ maps to.

Finally, the \ruleref{t-ite-true[dcs+tr]} and \ruleref{t-ite-false[dcs+tr]} rules show how new locations are concatenated to the abstract traces for if-then-else clauses.
Note that the conditional branches are typed under the same abstract context.
But the rule constraints the environments in which the respective conditional branches are evaluated.
The respective conditional branches might also be concatenating to the abstract trace $\ptrace$.
The final condition ensures that $r$ has entries for all the traces created by the conditional branch after concatenating them with the new location.
See that the two rules concatenate different locations to distinguish the two traces created by the clause.
Note that to simplify the presentation, we show a simplified version of these rules where the conditional clause doesn't crate any new traces.
One can take inspiration about how to handle traces created by the conditional branch from the \ruleref{t-app[dcs+tr]} to arrive at a more complete rule.

\section{CPS Translation of Example~\ref{ex:overview1}}
\label{apx:overview1-cps}

{\scriptsize
\begin{lstlisting}
let main prefx prefn = 
  let ev = fun k0 q acc evx ->
             if (q = 0) then k0 1 evx () 
             else if ((q = 1) && ((acc + evx) = 0)) then k0 2 acc () 
                  else if (q = 2) then k0 2 acc () 
                       else k0 q acc () in 
  let q1 = 0 in 
  let acc1 = 0 in 
  let f0 = fun k4 q3 acc3 busy ->
             let f1 = fun k6 q5 acc5 _main ->
                        let k8 q7 acc7 res3 =
                          let k7 q6 acc6 res2 =
                            k6 q6 acc6 res2 in 
                          res3 k7 q7 acc7 prefn in 
                        _main k8 q5 acc5 prefx in 
             let f2 = fun k9 q8 acc8 x ->
                        let f3 = fun k10 q9 acc9 n ->
                                   let x3 = () in 
                                   let k11 q10 acc10 x2 =
                                     let k13 q12 acc12 res5 =
                                       let k12 q11 acc11 res4 =
                                         let x1 = x3 ; res4 in  k10 q11 acc11 x1 in 
                                       res5 k12 q12 acc12 x in 
                                     busy k13 q10 acc10 n in 
                                   ev k11 q9 acc9 x in 
                        k9 q8 acc8 f3 in 
             let k5 q4 acc4 res1 =
               k4 q4 acc4 res1 in 
             f1 k5 q3 acc3 f2 in 
  let rec busy k14 q13 acc13 n =
    let f4 = fun k15 q14 acc14 t ->
               let x5 = 0 in 
               let x4 = n <= x5 in 
               let k16 q15 acc15 res6 =
                 k15 q15 acc15 res6 in 
               let k17 q16 acc16 res7 =
                 let x11 = - t in  let x10 = () in 
                                   let k21 q20 acc20 x9 =
                                     let x12 = 0 in 
                                     let x8 = x10 ; x12 in  k16 q20 acc20 x8 in 
                                   ev k21 q16 acc16 x11 in 
               let k18 q17 acc17 res8 =
                 let x7 = 1 in 
                 let x6 = n - x7 in  let k20 q19 acc19 res10 =
                                       let k19 q18 acc18 res9 =
                                         k16 q18 acc18 res9 in 
                                       res10 k19 q19 acc19 t in 
                                     busy k20 q17 acc17 x6 in 
               if x4 then k17 q14 acc14 x4 else k18 q14 acc14 x4 in 
    k14 q13 acc13 f4 in 
  let k3 q2 acc2 res0 =
    let k22 q acc x13 =
      let x15 = 2 in 
      let x14 = q = x15 in  assert(x14);x13 in 
    k22 q2 acc2 res0 in 
  f0 k3 q1 acc1 busy
\end{lstlisting}}


%


\section{Impact of Trace Partitioning}
\label{apx:tp-impact}

\renewcommand\humanCfgdirect[6]{$\langle tp\!:\!#3,  #5 \rangle$}
\renewcommand\humanCfgtrans[6]{$\langle tp\!:\!#3,  #5 \rangle$}

\begin{table}
	{\footnotesize\centering
		\renewcommand{\arraystretch}{\benchtablerowstretch}\setlength{\tabcolsep}{\benchtabletabcolsep}\footnotesize
		\begin{tabular}{l|crc|crc|crc}
			\toprule
			& \multicolumn{3}{c|}{\drift{} with trace part.} & \multicolumn{3}{c|}{\evdrift{} without trace part.} & \multicolumn{3}{c}{\evdrift{} with trace part.} \\ 
			{\bf Bench}  & {\bf Res} & {\bf CPU} & {\bf Config.} & {\bf Res} & {\bf CPU} & {\bf Config.} & {\bf Res} & {\bf CPU} & {\bf Config.} \\ 
			\midrule
			1. \texttt{\scriptsize all-ev-pos} & \Chk  & 1.6 & \humanCfgtrans{oopsla25july22}{1}{T}{T}{ls}{F} & \Chk  & 0.4 & \humanCfgdirect{oopsla25july22}{1}{F}{T}{ls}{T} & \Chk  & 0.4 & \humanCfgdirect{oopsla25july22}{1}{T}{T}{ls}{T} \\ 
2. \texttt{\scriptsize alt-inev} & \Unk  & 103.6 & \humanCfgtrans{oopsla25july22}{1}{T}{T}{ls}{F} & \Chk  & 5.4 & \humanCfgdirect{oopsla25july22}{1}{F}{T}{ls}{T} & \Chk  & 5.4 & \humanCfgdirect{oopsla25july22}{1}{T}{T}{ls}{T} \\ 
3. \texttt{\scriptsize auction} & \Unk  & 108.0 & \humanCfgtrans{oopsla25july22}{1}{T}{T}{ls}{F} & \Chk  & 4.1 & \humanCfgdirect{oopsla25july22}{1}{F}{T}{ls}{T} & \Chk  & 3.9 & \humanCfgdirect{oopsla25july22}{1}{T}{T}{ls}{F} \\ 
4. \texttt{\scriptsize binomial\_heap} & \Unk  & 238.1 & \humanCfgtrans{oopsla25july22}{1}{T}{T}{ls}{F} & \Chk  & 2.2 & \humanCfgdirect{oopsla25july22}{1}{F}{T}{ls}{F} & \Chk  & 2.2 & \humanCfgdirect{oopsla25july22}{1}{T}{T}{ls}{T} \\ 
5. \texttt{\scriptsize concurrent\_sum} & \Chk  & 11.8 & \humanCfgtrans{oopsla25july22}{1}{T}{T}{ls}{F} & \Chk  & 0.2 & \humanCfgdirect{oopsla25july22}{1}{F}{T}{ls}{T} & \Chk  & 0.8 & \humanCfgdirect{oopsla25july22}{1}{T}{T}{ls}{F} \\ 
6. \texttt{\scriptsize depend} & \Chk  & 0.0 & \humanCfgtrans{oopsla25july22}{1}{T}{T}{ls}{F} & \Chk  & 0.0 & \humanCfgdirect{oopsla25july22}{1}{F}{T}{ls}{T} & \Chk  & 0.0 & \humanCfgdirect{oopsla25july22}{1}{T}{T}{ls}{F} \\ 
7. \texttt{\scriptsize disj} & \Chk  & 137.1 & \humanCfgtrans{oopsla25july22}{1}{T}{F}{pg}{F} & \Chk  & 5.6 & \humanCfgdirect{oopsla25july22}{1}{F}{F}{pg}{F} & \Chk  & 10.3 & \humanCfgdirect{oopsla25july22}{1}{T}{F}{pg}{F} \\ 
8. \texttt{\scriptsize disj-gte} & \Chk  & 25.7 & \humanCfgtrans{oopsla25july22}{1}{T}{T}{ls}{F} & \Chk  & 2.8 & \humanCfgdirect{oopsla25july22}{1}{F}{T}{ls}{F} & \Chk  & 2.7 & \humanCfgdirect{oopsla25july22}{1}{T}{T}{ls}{T} \\ 
9. \texttt{\scriptsize disj-nondet} & \Chk  & 35.0 & \humanCfgtrans{oopsla25july22}{1}{T}{T}{ls}{F} & \Chk  & 2.9 & \humanCfgdirect{oopsla25july22}{1}{F}{T}{ls}{T} & \Chk  & 4.4 & \humanCfgdirect{oopsla25july22}{1}{T}{T}{ls}{T} \\ 
10. \texttt{\scriptsize higher-order} & \Chk  & 9.9 & \humanCfgtrans{oopsla25july22}{1}{T}{T}{ls}{F} & \Chk  & 0.4 & \humanCfgdirect{oopsla25july22}{1}{F}{T}{ls}{F} & \Chk  & 0.4 & \humanCfgdirect{oopsla25july22}{1}{T}{T}{ls}{F} \\ 
11. \texttt{\scriptsize intro-ord3} & \Chk  & 32.3 & \humanCfgtrans{oopsla25july22}{1}{T}{T}{ls}{F} & \Chk  & 6.0 & \humanCfgdirect{oopsla25july22}{1}{F}{T}{ls}{F} & \Chk  & 5.0 & \humanCfgdirect{oopsla25july22}{1}{T}{T}{ls}{F} \\ 
12. \texttt{\scriptsize last-ev-even} & \Chk  & 39.9 & \humanCfgtrans{oopsla25july22}{1}{T}{F}{pg}{F} & \Unk  & 1.0 & \humanCfgdirect{oopsla25july22}{1}{F}{T}{ls}{T} & \Chk  & 5.5 & \humanCfgdirect{oopsla25july22}{1}{T}{F}{pg}{T} \\ 
13. \texttt{\scriptsize lics18-amortized} & \Unk  & 587.4 & \humanCfgtrans{oopsla25july22}{1}{T}{T}{ls}{F} & \Chk  & 15.8 & \humanCfgdirect{oopsla25july22}{1}{F}{T}{ls}{T} & \Chk  & 15.8 & \humanCfgdirect{oopsla25july22}{1}{T}{T}{ls}{F} \\ 
14. \texttt{\scriptsize lics18-hoshrink} & \Unk  & 20.2 & \humanCfgtrans{oopsla25july22}{1}{T}{F}{pg}{F} & \Unk  & 1.0 & \humanCfgdirect{oopsla25july22}{1}{F}{T}{ls}{F} & \Unk  & 7.0 & \humanCfgdirect{oopsla25july22}{1}{T}{F}{pg}{T} \\ 
15. \texttt{\scriptsize lics18-web} & \TO   & 901.0 & \humanCfgtrans{oopsla25july22}{1}{T}{T}{ls}{F} & \Chk  & 23.8 & \humanCfgdirect{oopsla25july22}{1}{F}{T}{ls}{F} & \Chk  & 23.6 & \humanCfgdirect{oopsla25july22}{1}{T}{T}{ls}{F} \\ 
16. \texttt{\scriptsize market} & \TO   & 900.8 & \humanCfgtrans{oopsla25july22}{1}{T}{F}{pg}{F} & \Unk  & 35.4 & \humanCfgdirect{oopsla25july22}{1}{F}{F}{pg}{F} & \Unk  & 36.8 & \humanCfgdirect{oopsla25july22}{1}{T}{F}{pg}{F} \\ 
17. \texttt{\scriptsize max-min} & \TO   & 900.6 & \humanCfgtrans{oopsla25july22}{1}{T}{F}{pg}{F} & \Unk  & 28.2 & \humanCfgdirect{oopsla25july22}{1}{F}{F}{pg}{F} & \Chk  & 30.1 & \humanCfgdirect{oopsla25july22}{1}{T}{T}{ls}{T} \\ 
18. \texttt{\scriptsize monotonic} & \Chk  & 6.3 & \humanCfgtrans{oopsla25july22}{1}{T}{T}{ls}{F} & \Chk  & 0.5 & \humanCfgdirect{oopsla25july22}{1}{F}{T}{ls}{T} & \Chk  & 0.5 & \humanCfgdirect{oopsla25july22}{1}{T}{T}{ls}{T} \\ 
19. \texttt{\scriptsize nondet\_max} & \Chk  & 3.8 & \humanCfgtrans{oopsla25july22}{1}{T}{T}{ls}{F} & \Chk  & 1.0 & \humanCfgdirect{oopsla25july22}{1}{F}{T}{ls}{T} & \Chk  & 1.0 & \humanCfgdirect{oopsla25july22}{1}{T}{T}{ls}{T} \\ 
20. \texttt{\scriptsize order-irrel} & \Unk  & 36.4 & \humanCfgtrans{oopsla25july22}{1}{T}{F}{pg}{F} & \Unk  & 3.4 & \humanCfgdirect{oopsla25july22}{1}{F}{F}{pg}{F} & \Chk  & 3.5 & \humanCfgdirect{oopsla25july22}{1}{T}{F}{pg}{T} \\ 
21. \texttt{\scriptsize order-irrel-nondet} & \Unk  & 75.5 & \humanCfgtrans{oopsla25july22}{1}{T}{F}{pg}{F} & \Unk  & 10.3 & \humanCfgdirect{oopsla25july22}{1}{F}{F}{pg}{F} & \Chk  & 1.8 & \humanCfgdirect{oopsla25july22}{1}{T}{T}{ls}{T} \\ 
22. \texttt{\scriptsize overview1} & \Chk  & 2.3 & \humanCfgtrans{oopsla25july22}{1}{T}{T}{ls}{F} & \Chk  & 0.4 & \humanCfgdirect{oopsla25july22}{1}{F}{T}{ls}{T} & \Chk  & 0.5 & \humanCfgdirect{oopsla25july22}{1}{T}{T}{ls}{T} \\ 
23. \texttt{\scriptsize reentr} & \Chk  & 8.8 & \humanCfgtrans{oopsla25july22}{1}{T}{T}{ls}{F} & \Chk  & 0.2 & \humanCfgdirect{oopsla25july22}{1}{F}{T}{ls}{T} & \Chk  & 0.2 & \humanCfgdirect{oopsla25july22}{1}{T}{T}{ls}{F} \\ 
24. \texttt{\scriptsize resource-analysis} & \Chk  & 3.1 & \humanCfgtrans{oopsla25july22}{1}{T}{T}{ls}{F} & \Chk  & 0.3 & \humanCfgdirect{oopsla25july22}{1}{F}{T}{ls}{T} & \Chk  & 0.3 & \humanCfgdirect{oopsla25july22}{1}{T}{T}{ls}{F} \\ 
25. \texttt{\scriptsize sum-appendix} & \Chk  & 1.6 & \humanCfgtrans{oopsla25july22}{1}{T}{T}{ls}{F} & \Chk  & 0.0 & \humanCfgdirect{oopsla25july22}{1}{F}{T}{ls}{F} & \Chk  & 0.0 & \humanCfgdirect{oopsla25july22}{1}{T}{T}{ls}{F} \\ 
26. \texttt{\scriptsize sum-of-ev-even} & \Chk  & 6.2 & \humanCfgtrans{oopsla25july22}{1}{T}{F}{pg}{F} & \Chk  & 0.7 & \humanCfgdirect{oopsla25july22}{1}{F}{F}{pg}{F} & \Chk  & 0.7 & \humanCfgdirect{oopsla25july22}{1}{T}{F}{pg}{T} \\ 
27. \texttt{\scriptsize temperature} & \Chk  & 322.5 & \humanCfgtrans{oopsla25july22}{1}{T}{F}{pg}{F} & \Chk  & 12.0 & \humanCfgdirect{oopsla25july22}{1}{F}{F}{pg}{F} & \Chk  & 17.5 & \humanCfgdirect{oopsla25july22}{1}{T}{F}{pg}{F} \\ 

			\emph{geomean for \Chk's:} & & {\bf \expTPGMonDrift{}} & & & {\bf \expTPGMoffevDrift{}} & & & {\bf \expTPGMonevDrift{}} \\
			\bottomrule
		\end{tabular}
	}
\caption{\label{table:tp} Evaluating the impact of trace partitioning on \evdrift{}'s performance. The first set of columns in Tbl.~\ref{table:results} displayed \drift{}'s performance (via the tuple translation) \emph{without} trace partitioning, which had a geomean of \expTPGMoffDrift{}. In this table, column sets represent
(i) \drift{} \emph{with} trace partitioning with geomean \expTPGMonDrift,
(ii) \evdrift{} \emph{without} trace partitioning with geomean \expTPGMoffevDrift, and
(iii) \evdrift{} \emph{with} trace partitioning with geomean \expTPGMonevDrift. 
The colored highlighting shows how, even without trace partitioning, the abstract effect domain of \evdrift{} enables 7 benchmarks to be verified over drift, where as the additional improvement due to trace partitioning is a more modest 3 benchmarks.
Additionally, trace partitioning has a slowdown of \expTPSpeedupDrift{} for \drift{} and a slowdown of \expTPSpeedupevDrift{} for \evdrift{}.}
\end{table}

\paragraph{Analyzing the impact of trace-partitioning}
Since \evdrift{} uses our abstract effect domain \emph{and} trace partitioning, a natural question is: which feature provides the more substantial amount of improvement. We now compare \evdrift{} (and 1-context-sensitive \drift{} with the tuple translation) with and without trace partitioning to answer this question, and to quantify the speed overhead.
We summarize the results for these configurations in Table~\ref{table:tp}.
For \evdrift{}, the tool is able to verify two more benchmarks with trace partitioning.
Also note that trace partitioning helps \drift{} verify two additional benchmarks.
All these programs have if-then-else clauses that result in disjoint values for subsequent nodes.
However, \drift{} even with trace partitioning still does quite worse than \evdrift{}, as still verifies only 13 out of 26 benchmarks, compared to \evdrift{}'s 23 out of 26.

Moreover, for \evdrift{} there are several programs where the running times with trace partitioning are similar to the running times without trace partitioning.
This is because these benchmarks do not have nodes that follow an if-then-else clause, and hence there are no instances of multiplicatively analyzed nodes.
However, this doesn't happen for \drift{} due to the nature of the program translation.
Overall, trace partitioning slows \drift{} by \expTPSpeedupDrift{}$\times$, and slows \evdrift{} by \expTPSpeedupevDrift{}$\times$.

\section{Extended Evaluation}
\label{apx:extended}

\paragraph{Configurations} T=he following table lists all configurations we used for \drift{} and {\bf ev}-\drift{}:

\begin{center}
{\footnotesize
\begin{tabular}{|l|c|c|c|c|l|}
\hline
            & {\bf Trace} & {\bf Trace} & {\bf Thresh-} &           &  \\ 
{\bf Tool}  & {\bf Len.}  & {\bf Part.} & {\bf old}     & {\bf I/O} & {\bf Domain} \\ 
\hline
{\bf ev}-{\sc Drift} & 0 & false & true & true & Polka (Loose) \\
{\bf ev}-{\sc Drift} & 0 & false & false & true & PolkaGrid \\
{\bf ev}-{\sc Drift} & 0 & false & true & false & Polka (Loose) \\
{\bf ev}-{\sc Drift} & 0 & false & false & false & PolkaGrid \\
{\bf ev}-{\sc Drift} & 1 & true & true & true & Polka (Loose) \\
{\bf ev}-{\sc Drift} & 1 & true & false & true & PolkaGrid \\
{\bf ev}-{\sc Drift} & 1 & true & true & false & Polka (Loose) \\
{\bf ev}-{\sc Drift} & 1 & true & false & false & PolkaGrid \\
{\bf ev}-{\sc Drift} & 1 & false & true & true & Polka (Loose) \\
{\bf ev}-{\sc Drift} & 1 & false & false & true & PolkaGrid \\
{\bf ev}-{\sc Drift} & 1 & false & true & false & Polka (Loose) \\
{\bf ev}-{\sc Drift} & 1 & false & false & false & PolkaGrid \\
{\sc Drift} (tuple reduc.) & 0 & false & true & false & Polka (Loose) \\
{\sc Drift} (tuple reduc.) & 0 & false & false & false & PolkaGrid \\
{\sc Drift} (tuple reduc.) & 1 & true & true & false & Polka (Loose) \\
{\sc Drift} (tuple reduc.) & 1 & true & false & false & PolkaGrid \\
{\sc Drift} (tuple reduc.) & 1 & false & true & false & Polka (Loose) \\
{\sc Drift} (tuple reduc.) & 1 & false & false & false & PolkaGrid \\

\end{tabular}
}
\end{center}

\paragraph{Full experiments} The following table lists all results from running \rcaml, \drift\ and \evdrift\ on all configurations.

\renewcommand*{\hl}{\cellcolor{green!20}}

\renewcommand\humanCfgdirect[6]{$\langle tl\!:\!#2, tp\!:\!#3, th\!:\!#4, io\!:\!#6, #5 \rangle$}
\renewcommand\humanCfgtrans[6]{$\langle tl\!:\!#2, tp\!:\!#3, th\!:\!#4, io\!:\!#6, #5 \rangle$}

{\footnotesize
\renewcommand{\arraystretch}{\benchtablerowstretch}\setlength{\tabcolsep}{\benchtabletabcolsep}\footnotesize
\begin{longtable}{|l|l|c|c|r|}
\hline
{\bf Bench}  & {\bf Res} & {\bf CPU} & {\bf Config} \\ 
\hline
1. \texttt{\scriptsize all-ev-pos} \\
& \drift &      \humanCfgtrans{oopsla25july22}{1}{F}{F}{pg}{F} &      \Chk  &      6.26 \\
& \drift &      \humanCfgtrans{oopsla25july22}{1}{F}{T}{ls}{F} &      \Chk  &      1.53 \\
& \drift &      \humanCfgtrans{oopsla25july22}{0}{F}{T}{ls}{F} &      \Chk  &      0.61 \\
& \drift &      \humanCfgtrans{oopsla25july22}{1}{T}{F}{pg}{F} &      \Chk  &      6.89 \\
& \drift &      \humanCfgtrans{oopsla25july22}{1}{T}{T}{ls}{F} &      \Chk  &      1.60 \\
& \drift &      \humanCfgtrans{oopsla25july22}{0}{F}{F}{pg}{F} &      \Unk  &      2.20 \\
& \evdrift &      \humanCfgdirect{oopsla25july22}{1}{T}{T}{ls}{T} &      \Chk  &      0.38 \\
& \evdrift &      \humanCfgdirect{oopsla25july22}{0}{F}{F}{pg}{F} &      \Unk  &      0.29 \\
& \evdrift &      \humanCfgdirect{oopsla25july22}{1}{F}{T}{ls}{T} &      \Chk  &      0.36 \\
& \evdrift &      \humanCfgdirect{oopsla25july22}{1}{T}{F}{pg}{T} &      \Chk  &      1.02 \\
& \evdrift &      \humanCfgdirect{oopsla25july22}{1}{F}{F}{pg}{T} &      \Chk  &      0.86 \\
& \evdrift &      \humanCfgdirect{oopsla25july22}{0}{F}{F}{pg}{T} &      \Unk  &      0.32 \\
& \evdrift &      \humanCfgdirect{oopsla25july22}{1}{F}{F}{pg}{F} &      \Chk  &      0.87 \\
& \evdrift &      \humanCfgdirect{oopsla25july22}{1}{T}{T}{ls}{F} &      \Chk  &      0.39 \\
& \evdrift &      \humanCfgdirect{oopsla25july22}{0}{F}{T}{ls}{T} &      \Chk  &      0.18 \\
& \evdrift & \hl  \humanCfgdirect{oopsla25july22}{0}{F}{T}{ls}{F} & \hl  \Chk  & \hl  0.18 \\
& \evdrift &      \humanCfgdirect{oopsla25july22}{1}{T}{F}{pg}{F} &      \Chk  &      1.07 \\
& \evdrift &      \humanCfgdirect{oopsla25july22}{1}{F}{T}{ls}{F} &      \Chk  &      0.40 \\
\hline
2. \texttt{\scriptsize alt-inev} \\
& \drift &      \humanCfgtrans{oopsla25july22}{0}{F}{F}{pg}{F} &      \Unk  &      62.07 \\
& \drift &      \humanCfgtrans{oopsla25july22}{1}{T}{T}{ls}{F} &      \Unk  &      103.55 \\
& \drift &      \humanCfgtrans{oopsla25july22}{0}{F}{T}{ls}{F} &      \Unk  &      18.34 \\
& \drift &      \humanCfgtrans{oopsla25july22}{1}{F}{T}{ls}{F} &      \Unk  &      49.01 \\
& \drift &      \humanCfgtrans{oopsla25july22}{1}{T}{F}{pg}{F} &      \Unk  &      329.58 \\
& \drift &      \humanCfgtrans{oopsla25july22}{1}{F}{F}{pg}{F} &      \Unk  &      200.11 \\
& \evdrift &      \humanCfgdirect{oopsla25july22}{1}{T}{F}{pg}{F} &      \Chk  &      34.53 \\
& \evdrift &      \humanCfgdirect{oopsla25july22}{0}{F}{T}{ls}{F} &      \Chk  &      1.81 \\
& \evdrift &      \humanCfgdirect{oopsla25july22}{1}{F}{T}{ls}{F} &      \Chk  &      5.49 \\
& \evdrift & \hl  \humanCfgdirect{oopsla25july22}{0}{F}{T}{ls}{T} & \hl  \Chk  & \hl  1.78 \\
& \evdrift &      \humanCfgdirect{oopsla25july22}{1}{F}{F}{pg}{F} &      \Chk  &      33.70 \\
& \evdrift &      \humanCfgdirect{oopsla25july22}{1}{T}{T}{ls}{F} &      \Chk  &      5.50 \\
& \evdrift &      \humanCfgdirect{oopsla25july22}{1}{F}{F}{pg}{T} &      \Chk  &      33.82 \\
& \evdrift &      \humanCfgdirect{oopsla25july22}{0}{F}{F}{pg}{T} &      \Chk  &      8.23 \\
& \evdrift &      \humanCfgdirect{oopsla25july22}{1}{F}{T}{ls}{T} &      \Chk  &      5.42 \\
& \evdrift &      \humanCfgdirect{oopsla25july22}{1}{T}{F}{pg}{T} &      \Chk  &      34.46 \\
& \evdrift &      \humanCfgdirect{oopsla25july22}{0}{F}{F}{pg}{F} &      \Chk  &      8.31 \\
& \evdrift &      \humanCfgdirect{oopsla25july22}{1}{T}{T}{ls}{T} &      \Chk  &      5.40 \\
\hline
3. \texttt{\scriptsize auction} \\
& \drift &      \humanCfgtrans{oopsla25july22}{0}{F}{F}{pg}{F} &      \Unk  &      88.00 \\
& \drift &      \humanCfgtrans{oopsla25july22}{1}{T}{T}{ls}{F} &      \Unk  &      108.05 \\
& \drift &      \humanCfgtrans{oopsla25july22}{1}{F}{F}{pg}{F} &      \Unk  &      155.73 \\
& \drift &      \humanCfgtrans{oopsla25july22}{1}{T}{F}{pg}{F} &      \Unk  &      307.68 \\
& \drift &      \humanCfgtrans{oopsla25july22}{1}{F}{T}{ls}{F} &      \Unk  &      53.77 \\
& \drift &      \humanCfgtrans{oopsla25july22}{0}{F}{T}{ls}{F} &      \Unk  &      43.51 \\
& \evdrift &      \humanCfgdirect{oopsla25july22}{0}{F}{T}{ls}{T} &      \Chk  &      2.48 \\
& \evdrift &      \humanCfgdirect{oopsla25july22}{1}{F}{T}{ls}{F} &      \Chk  &      4.18 \\
& \evdrift & \hl  \humanCfgdirect{oopsla25july22}{0}{F}{T}{ls}{F} & \hl  \Chk  & \hl  2.46 \\
& \evdrift &      \humanCfgdirect{oopsla25july22}{1}{T}{F}{pg}{F} &      \Chk  &      7.64 \\
& \evdrift &      \humanCfgdirect{oopsla25july22}{1}{F}{F}{pg}{F} &      \Chk  &      7.58 \\
& \evdrift &      \humanCfgdirect{oopsla25july22}{1}{T}{T}{ls}{F} &      \Chk  &      3.90 \\
& \evdrift &      \humanCfgdirect{oopsla25july22}{1}{T}{F}{pg}{T} &      \Chk  &      7.00 \\
& \evdrift &      \humanCfgdirect{oopsla25july22}{1}{F}{T}{ls}{T} &      \Chk  &      4.12 \\
& \evdrift &      \humanCfgdirect{oopsla25july22}{0}{F}{F}{pg}{T} &      \Chk  &      3.41 \\
& \evdrift &      \humanCfgdirect{oopsla25july22}{1}{F}{F}{pg}{T} &      \Chk  &      7.56 \\
& \evdrift &      \humanCfgdirect{oopsla25july22}{1}{T}{T}{ls}{T} &      \Chk  &      3.91 \\
& \evdrift &      \humanCfgdirect{oopsla25july22}{0}{F}{F}{pg}{F} &      \Chk  &      3.27 \\
\hline
4. \texttt{\scriptsize binomial\_heap} \\
& \drift &      \humanCfgtrans{oopsla25july22}{1}{F}{F}{pg}{F} &      \Unk  &      572.02 \\
& \drift &      \humanCfgtrans{oopsla25july22}{0}{F}{T}{ls}{F} &      \Unk  &      107.17 \\
& \drift &      \humanCfgtrans{oopsla25july22}{1}{F}{T}{ls}{F} &      \Unk  &      198.54 \\
& \drift &      \humanCfgtrans{oopsla25july22}{1}{T}{F}{pg}{F} &      \Unk  &      669.60 \\
& \drift &      \humanCfgtrans{oopsla25july22}{1}{T}{T}{ls}{F} &      \Unk  &      238.11 \\
& \drift &      \humanCfgtrans{oopsla25july22}{0}{F}{F}{pg}{F} &      \Unk  &      446.67 \\
& \evdrift &      \humanCfgdirect{oopsla25july22}{0}{F}{F}{pg}{F} &      \Chk  &      2.46 \\
& \evdrift &      \humanCfgdirect{oopsla25july22}{1}{T}{T}{ls}{T} &      \Chk  &      2.20 \\
& \evdrift &      \humanCfgdirect{oopsla25july22}{1}{F}{T}{ls}{T} &      \Chk  &      2.22 \\
& \evdrift &      \humanCfgdirect{oopsla25july22}{1}{T}{F}{pg}{T} &      \Chk  &      3.38 \\
& \evdrift &      \humanCfgdirect{oopsla25july22}{1}{F}{F}{pg}{T} &      \Chk  &      3.32 \\
& \evdrift &      \humanCfgdirect{oopsla25july22}{0}{F}{F}{pg}{T} &      \Chk  &      2.44 \\
& \evdrift &      \humanCfgdirect{oopsla25july22}{1}{T}{T}{ls}{F} &      \Chk  &      2.22 \\
& \evdrift &      \humanCfgdirect{oopsla25july22}{1}{F}{F}{pg}{F} &      \Chk  &      3.39 \\
& \evdrift & \hl  \humanCfgdirect{oopsla25july22}{0}{F}{T}{ls}{T} & \hl  \Chk  & \hl  1.60 \\
& \evdrift &      \humanCfgdirect{oopsla25july22}{1}{T}{F}{pg}{F} &      \Chk  &      3.34 \\
& \evdrift &      \humanCfgdirect{oopsla25july22}{0}{F}{T}{ls}{F} &      \Chk  &      1.61 \\
& \evdrift &      \humanCfgdirect{oopsla25july22}{1}{F}{T}{ls}{F} &      \Chk  &      2.22 \\
\hline
5. \texttt{\scriptsize concurrent\_sum} \\
& \drift &      \humanCfgtrans{oopsla25july22}{1}{T}{T}{ls}{F} &      \Chk  &      11.83 \\
& \drift &      \humanCfgtrans{oopsla25july22}{0}{F}{F}{pg}{F} &      \Chk  &      3.45 \\
& \drift &      \humanCfgtrans{oopsla25july22}{1}{F}{F}{pg}{F} &      \Chk  &      8.45 \\
& \drift &      \humanCfgtrans{oopsla25july22}{1}{F}{T}{ls}{F} &      \Chk  &      3.39 \\
& \drift &      \humanCfgtrans{oopsla25july22}{0}{F}{T}{ls}{F} &      \Chk  &      1.51 \\
& \drift &      \humanCfgtrans{oopsla25july22}{1}{T}{F}{pg}{F} &      \Chk  &      25.44 \\
& \evdrift &      \humanCfgdirect{oopsla25july22}{1}{T}{T}{ls}{F} &      \Chk  &      0.77 \\
& \evdrift &      \humanCfgdirect{oopsla25july22}{1}{F}{F}{pg}{F} &      \Chk  &      0.46 \\
& \evdrift &      \humanCfgdirect{oopsla25july22}{0}{F}{T}{ls}{T} &      \Chk  &      0.17 \\
& \evdrift & \hl  \humanCfgdirect{oopsla25july22}{0}{F}{T}{ls}{F} & \hl  \Chk  & \hl  0.17 \\
& \evdrift &      \humanCfgdirect{oopsla25july22}{1}{T}{F}{pg}{F} &      \Chk  &      1.73 \\
& \evdrift &      \humanCfgdirect{oopsla25july22}{1}{F}{T}{ls}{F} &      \Chk  &      0.27 \\
& \evdrift &      \humanCfgdirect{oopsla25july22}{1}{T}{T}{ls}{T} &      \Chk  &      0.77 \\
& \evdrift &      \humanCfgdirect{oopsla25july22}{0}{F}{F}{pg}{F} &      \Chk  &      0.21 \\
& \evdrift &      \humanCfgdirect{oopsla25july22}{1}{F}{T}{ls}{T} &      \Chk  &      0.25 \\
& \evdrift &      \humanCfgdirect{oopsla25july22}{1}{T}{F}{pg}{T} &      \Chk  &      1.61 \\
& \evdrift &      \humanCfgdirect{oopsla25july22}{1}{F}{F}{pg}{T} &      \Chk  &      0.46 \\
& \evdrift &      \humanCfgdirect{oopsla25july22}{0}{F}{F}{pg}{T} &      \Chk  &      0.22 \\
\hline
6. \texttt{\scriptsize depend} \\
& \drift &      \humanCfgtrans{oopsla25july22}{0}{F}{T}{ls}{F} &      \Chk  &      0.04 \\
& \drift &      \humanCfgtrans{oopsla25july22}{1}{F}{T}{ls}{F} &      \Chk  &      0.04 \\
& \drift &      \humanCfgtrans{oopsla25july22}{1}{T}{F}{pg}{F} &      \Chk  &      0.10 \\
& \drift &      \humanCfgtrans{oopsla25july22}{1}{F}{F}{pg}{F} &      \Chk  &      0.10 \\
& \drift &      \humanCfgtrans{oopsla25july22}{0}{F}{F}{pg}{F} &      \Chk  &      0.10 \\
& \drift &      \humanCfgtrans{oopsla25july22}{1}{T}{T}{ls}{F} &      \Chk  &      0.04 \\
& \evdrift &      \humanCfgdirect{oopsla25july22}{1}{F}{F}{pg}{T} &      \Chk  &      0.03 \\
& \evdrift &      \humanCfgdirect{oopsla25july22}{0}{F}{F}{pg}{T} &      \Chk  &      0.03 \\
& \evdrift &      \humanCfgdirect{oopsla25july22}{1}{F}{T}{ls}{T} &      \Chk  &      0.02 \\
& \evdrift &      \humanCfgdirect{oopsla25july22}{1}{T}{F}{pg}{T} &      \Chk  &      0.03 \\
& \evdrift &      \humanCfgdirect{oopsla25july22}{0}{F}{F}{pg}{F} &      \Chk  &      0.03 \\
& \evdrift &      \humanCfgdirect{oopsla25july22}{1}{T}{T}{ls}{T} &      \Chk  &      0.02 \\
& \evdrift &      \humanCfgdirect{oopsla25july22}{1}{T}{F}{pg}{F} &      \Chk  &      0.03 \\
& \evdrift &      \humanCfgdirect{oopsla25july22}{0}{F}{T}{ls}{F} &      \Chk  &      0.02 \\
& \evdrift &      \humanCfgdirect{oopsla25july22}{1}{F}{T}{ls}{F} &      \Chk  &      0.02 \\
& \evdrift &      \humanCfgdirect{oopsla25july22}{0}{F}{T}{ls}{T} &      \Chk  &      0.02 \\
& \evdrift &      \humanCfgdirect{oopsla25july22}{1}{F}{F}{pg}{F} &      \Chk  &      0.03 \\
& \evdrift & \hl  \humanCfgdirect{oopsla25july22}{1}{T}{T}{ls}{F} & \hl  \Chk  & \hl  0.02 \\
\hline
7. \texttt{\scriptsize disj} \\
& \drift &      \humanCfgtrans{oopsla25july22}{1}{T}{T}{ls}{F} &      \Unk  &      102.69 \\
& \drift &      \humanCfgtrans{oopsla25july22}{0}{F}{F}{pg}{F} &      \Unk  &      36.36 \\
& \drift &      \humanCfgtrans{oopsla25july22}{1}{F}{F}{pg}{F} &      \Unk  &      60.87 \\
& \drift &      \humanCfgtrans{oopsla25july22}{0}{F}{T}{ls}{F} &      \Unk  &      7.46 \\
& \drift &      \humanCfgtrans{oopsla25july22}{1}{F}{T}{ls}{F} &      \Unk  &      11.04 \\
& \drift &      \humanCfgtrans{oopsla25july22}{1}{T}{F}{pg}{F} &      \Chk  &      137.05 \\
& \evdrift &      \humanCfgdirect{oopsla25july22}{1}{T}{T}{ls}{F} &      \Unk  &      3.76 \\
& \evdrift &      \humanCfgdirect{oopsla25july22}{1}{F}{F}{pg}{F} &      \Chk  &      5.63 \\
& \evdrift &      \humanCfgdirect{oopsla25july22}{0}{F}{T}{ls}{T} &      \Unk  &      1.74 \\
& \evdrift &      \humanCfgdirect{oopsla25july22}{1}{T}{F}{pg}{F} &      \Chk  &      10.27 \\
& \evdrift &      \humanCfgdirect{oopsla25july22}{0}{F}{T}{ls}{F} &      \Unk  &      1.78 \\
& \evdrift &      \humanCfgdirect{oopsla25july22}{1}{F}{T}{ls}{F} &      \Unk  &      2.31 \\
& \evdrift & \hl  \humanCfgdirect{oopsla25july22}{0}{F}{F}{pg}{F} & \hl  \Chk  & \hl  5.33 \\
& \evdrift &      \humanCfgdirect{oopsla25july22}{1}{T}{T}{ls}{T} &      \Unk  &      3.91 \\
& \evdrift &      \humanCfgdirect{oopsla25july22}{1}{F}{T}{ls}{T} &      \Unk  &      2.32 \\
& \evdrift &      \humanCfgdirect{oopsla25july22}{1}{T}{F}{pg}{T} &      \Chk  &      10.34 \\
& \evdrift &      \humanCfgdirect{oopsla25july22}{1}{F}{F}{pg}{T} &      \Chk  &      6.57 \\
& \evdrift &      \humanCfgdirect{oopsla25july22}{0}{F}{F}{pg}{T} &      \Chk  &      6.18 \\
\hline
8. \texttt{\scriptsize disj-gte} \\
& \drift &      \humanCfgtrans{oopsla25july22}{1}{F}{T}{ls}{F} &      \Unk  &      13.60 \\
& \drift &      \humanCfgtrans{oopsla25july22}{0}{F}{T}{ls}{F} &      \Unk  &      8.07 \\
& \drift &      \humanCfgtrans{oopsla25july22}{1}{T}{F}{pg}{F} &      \Chk  &      129.83 \\
& \drift &      \humanCfgtrans{oopsla25july22}{1}{F}{F}{pg}{F} &      \Unk  &      94.92 \\
& \drift &      \humanCfgtrans{oopsla25july22}{0}{F}{F}{pg}{F} &      \Unk  &      34.29 \\
& \drift &      \humanCfgtrans{oopsla25july22}{1}{T}{T}{ls}{F} &      \Chk  &      25.67 \\
& \evdrift &      \humanCfgdirect{oopsla25july22}{1}{F}{F}{pg}{T} &      \Chk  &      8.37 \\
& \evdrift &      \humanCfgdirect{oopsla25july22}{0}{F}{F}{pg}{T} &      \Chk  &      9.06 \\
& \evdrift &      \humanCfgdirect{oopsla25july22}{1}{F}{T}{ls}{T} &      \Chk  &      2.76 \\
& \evdrift &      \humanCfgdirect{oopsla25july22}{1}{T}{F}{pg}{T} &      \Chk  &      7.76 \\
& \evdrift &      \humanCfgdirect{oopsla25july22}{1}{T}{T}{ls}{T} &      \Chk  &      2.68 \\
& \evdrift &      \humanCfgdirect{oopsla25july22}{0}{F}{F}{pg}{F} &      \Chk  &      8.93 \\
& \evdrift & \hl  \humanCfgdirect{oopsla25july22}{0}{F}{T}{ls}{F} & \hl  \Chk  & \hl  1.97 \\
& \evdrift &      \humanCfgdirect{oopsla25july22}{1}{T}{F}{pg}{F} &      \Chk  &      8.54 \\
& \evdrift &      \humanCfgdirect{oopsla25july22}{1}{F}{T}{ls}{F} &      \Chk  &      2.76 \\
& \evdrift &      \humanCfgdirect{oopsla25july22}{0}{F}{T}{ls}{T} &      \Chk  &      2.02 \\
& \evdrift &      \humanCfgdirect{oopsla25july22}{1}{F}{F}{pg}{F} &      \Chk  &      8.36 \\
& \evdrift &      \humanCfgdirect{oopsla25july22}{1}{T}{T}{ls}{F} &      \Chk  &      2.69 \\
\hline
9. \texttt{\scriptsize disj-nondet} \\
& \drift &      \humanCfgtrans{oopsla25july22}{1}{F}{F}{pg}{F} &      \Unk  &      101.79 \\
& \drift &      \humanCfgtrans{oopsla25july22}{1}{T}{F}{pg}{F} &      \Chk  &      252.05 \\
& \drift &      \humanCfgtrans{oopsla25july22}{0}{F}{T}{ls}{F} &      \Unk  &      10.24 \\
& \drift &      \humanCfgtrans{oopsla25july22}{1}{F}{T}{ls}{F} &      \Unk  &      12.33 \\
& \drift &      \humanCfgtrans{oopsla25july22}{1}{T}{T}{ls}{F} &      \Chk  &      35.00 \\
& \drift &      \humanCfgtrans{oopsla25july22}{0}{F}{F}{pg}{F} &      \Unk  &      66.89 \\
& \evdrift &      \humanCfgdirect{oopsla25july22}{0}{F}{F}{pg}{F} &      \Chk  &      15.59 \\
& \evdrift &      \humanCfgdirect{oopsla25july22}{1}{T}{T}{ls}{T} &      \Chk  &      4.39 \\
& \evdrift &      \humanCfgdirect{oopsla25july22}{1}{T}{F}{pg}{T} &      \Chk  &      20.61 \\
& \evdrift &      \humanCfgdirect{oopsla25july22}{1}{F}{T}{ls}{T} &      \Chk  &      2.90 \\
& \evdrift &      \humanCfgdirect{oopsla25july22}{0}{F}{F}{pg}{T} &      \Chk  &      15.69 \\
& \evdrift &      \humanCfgdirect{oopsla25july22}{1}{F}{F}{pg}{T} &      \Chk  &      14.44 \\
& \evdrift &      \humanCfgdirect{oopsla25july22}{1}{T}{T}{ls}{F} &      \Chk  &      4.43 \\
& \evdrift &      \humanCfgdirect{oopsla25july22}{1}{F}{F}{pg}{F} &      \Chk  &      14.38 \\
& \evdrift &      \humanCfgdirect{oopsla25july22}{0}{F}{T}{ls}{T} &      \Chk  &      2.35 \\
& \evdrift &      \humanCfgdirect{oopsla25july22}{1}{F}{T}{ls}{F} &      \Chk  &      2.90 \\
& \evdrift &      \humanCfgdirect{oopsla25july22}{1}{T}{F}{pg}{F} &      \Chk  &      20.45 \\
& \evdrift & \hl  \humanCfgdirect{oopsla25july22}{0}{F}{T}{ls}{F} & \hl  \Chk  & \hl  2.35 \\
\hline
10. \texttt{\scriptsize higher-order} \\
& \drift &      \humanCfgtrans{oopsla25july22}{1}{T}{F}{pg}{F} &      \Chk  &      23.05 \\
& \drift &      \humanCfgtrans{oopsla25july22}{0}{F}{T}{ls}{F} &      \Chk  &      1.88 \\
& \drift &      \humanCfgtrans{oopsla25july22}{1}{F}{T}{ls}{F} &      \Chk  &      5.12 \\
& \drift &      \humanCfgtrans{oopsla25july22}{1}{F}{F}{pg}{F} &      \Unk  &      18.46 \\
& \drift &      \humanCfgtrans{oopsla25july22}{1}{T}{T}{ls}{F} &      \Chk  &      9.88 \\
& \drift &      \humanCfgtrans{oopsla25july22}{0}{F}{F}{pg}{F} &      \Unk  &      9.06 \\
& \evdrift &      \humanCfgdirect{oopsla25july22}{0}{F}{F}{pg}{F} &      \Chk  &      1.09 \\
& \evdrift &      \humanCfgdirect{oopsla25july22}{1}{T}{T}{ls}{T} &      \Chk  &      0.52 \\
& \evdrift &      \humanCfgdirect{oopsla25july22}{0}{F}{F}{pg}{T} &      \Chk  &      1.06 \\
& \evdrift &      \humanCfgdirect{oopsla25july22}{1}{F}{F}{pg}{T} &      \Chk  &      2.36 \\
& \evdrift &      \humanCfgdirect{oopsla25july22}{1}{T}{F}{pg}{T} &      \Chk  &      2.35 \\
& \evdrift &      \humanCfgdirect{oopsla25july22}{1}{F}{T}{ls}{T} &      \Chk  &      0.49 \\
& \evdrift &      \humanCfgdirect{oopsla25july22}{1}{T}{T}{ls}{F} &      \Chk  &      0.42 \\
& \evdrift &      \humanCfgdirect{oopsla25july22}{1}{F}{F}{pg}{F} &      \Chk  &      2.37 \\
& \evdrift &      \humanCfgdirect{oopsla25july22}{1}{F}{T}{ls}{F} &      \Chk  &      0.41 \\
& \evdrift &      \humanCfgdirect{oopsla25july22}{1}{T}{F}{pg}{F} &      \Chk  &      2.39 \\
& \evdrift &      \humanCfgdirect{oopsla25july22}{0}{F}{T}{ls}{F} &      \Chk  &      0.38 \\
& \evdrift & \hl  \humanCfgdirect{oopsla25july22}{0}{F}{T}{ls}{T} & \hl  \Chk  & \hl  0.35 \\
\hline
11. \texttt{\scriptsize intro-ord3} \\
& \drift &      \humanCfgtrans{oopsla25july22}{0}{F}{F}{pg}{F} &      \Unk  &      70.86 \\
& \drift &      \humanCfgtrans{oopsla25july22}{1}{T}{T}{ls}{F} &      \Chk  &      32.26 \\
& \drift &      \humanCfgtrans{oopsla25july22}{1}{F}{F}{pg}{F} &      \Chk  &      152.38 \\
& \drift &      \humanCfgtrans{oopsla25july22}{1}{T}{F}{pg}{F} &      \Chk  &      181.92 \\
& \drift &      \humanCfgtrans{oopsla25july22}{0}{F}{T}{ls}{F} &      \Unk  &      8.37 \\
& \drift &      \humanCfgtrans{oopsla25july22}{1}{F}{T}{ls}{F} &      \Chk  &      24.14 \\
& \evdrift & \hl  \humanCfgdirect{oopsla25july22}{0}{F}{T}{ls}{T} & \hl  \Chk  & \hl  2.59 \\
& \evdrift &      \humanCfgdirect{oopsla25july22}{1}{F}{T}{ls}{F} &      \Chk  &      6.02 \\
& \evdrift &      \humanCfgdirect{oopsla25july22}{1}{T}{F}{pg}{F} &      \Chk  &      40.09 \\
& \evdrift &      \humanCfgdirect{oopsla25july22}{0}{F}{T}{ls}{F} &      \Chk  &      2.60 \\
& \evdrift &      \humanCfgdirect{oopsla25july22}{1}{F}{F}{pg}{F} &      \Chk  &      39.83 \\
& \evdrift &      \humanCfgdirect{oopsla25july22}{1}{T}{T}{ls}{F} &      \Chk  &      4.97 \\
& \evdrift &      \humanCfgdirect{oopsla25july22}{1}{T}{F}{pg}{T} &      \Chk  &      40.40 \\
& \evdrift &      \humanCfgdirect{oopsla25july22}{1}{F}{T}{ls}{T} &      \Chk  &      6.03 \\
& \evdrift &      \humanCfgdirect{oopsla25july22}{0}{F}{F}{pg}{T} &      \Chk  &      13.46 \\
& \evdrift &      \humanCfgdirect{oopsla25july22}{1}{F}{F}{pg}{T} &      \Chk  &      40.38 \\
& \evdrift &      \humanCfgdirect{oopsla25july22}{0}{F}{F}{pg}{F} &      \Chk  &      13.34 \\
& \evdrift &      \humanCfgdirect{oopsla25july22}{1}{T}{T}{ls}{T} &      \Chk  &      6.17 \\
\hline
12. \texttt{\scriptsize last-ev-even} \\
& \drift &      \humanCfgtrans{oopsla25july22}{1}{F}{F}{pg}{F} &      \Unk  &      17.43 \\
& \drift &      \humanCfgtrans{oopsla25july22}{0}{F}{T}{ls}{F} &      \Unk  &      1.48 \\
& \drift &      \humanCfgtrans{oopsla25july22}{1}{F}{T}{ls}{F} &      \Unk  &      6.21 \\
& \drift &      \humanCfgtrans{oopsla25july22}{1}{T}{F}{pg}{F} &      \Chk  &      39.86 \\
& \drift &      \humanCfgtrans{oopsla25july22}{1}{T}{T}{ls}{F} &      \Unk  &      12.52 \\
& \drift &      \humanCfgtrans{oopsla25july22}{0}{F}{F}{pg}{F} &      \Unk  &      7.16 \\
& \evdrift &      \humanCfgdirect{oopsla25july22}{0}{F}{F}{pg}{F} &      \Unk  &      1.05 \\
& \evdrift &      \humanCfgdirect{oopsla25july22}{1}{T}{T}{ls}{T} &      \Unk  &      2.19 \\
& \evdrift &      \humanCfgdirect{oopsla25july22}{1}{F}{T}{ls}{T} &      \Unk  &      0.97 \\
& \evdrift & \hl  \humanCfgdirect{oopsla25july22}{1}{T}{F}{pg}{T} & \hl  \Chk  & \hl  5.52 \\
& \evdrift &      \humanCfgdirect{oopsla25july22}{1}{F}{F}{pg}{T} &      \Unk  &      2.55 \\
& \evdrift &      \humanCfgdirect{oopsla25july22}{0}{F}{F}{pg}{T} &      \Unk  &      1.00 \\
& \evdrift &      \humanCfgdirect{oopsla25july22}{1}{T}{T}{ls}{F} &      \Unk  &      2.07 \\
& \evdrift &      \humanCfgdirect{oopsla25july22}{1}{F}{F}{pg}{F} &      \Unk  &      2.59 \\
& \evdrift &      \humanCfgdirect{oopsla25july22}{0}{F}{T}{ls}{T} &      \Unk  &      0.40 \\
& \evdrift &      \humanCfgdirect{oopsla25july22}{1}{T}{F}{pg}{F} &      \Chk  &      5.64 \\
& \evdrift &      \humanCfgdirect{oopsla25july22}{0}{F}{T}{ls}{F} &      \Unk  &      0.44 \\
& \evdrift &      \humanCfgdirect{oopsla25july22}{1}{F}{T}{ls}{F} &      \Unk  &      1.01 \\
\hline
13. \texttt{\scriptsize lics18-amortized} \\
& \drift &      \humanCfgtrans{oopsla25july22}{1}{F}{F}{pg}{F} &      \TO   &      901.01 \\
& \drift &      \humanCfgtrans{oopsla25july22}{0}{F}{T}{ls}{F} &      \TO   &      900.43 \\
& \drift &      \humanCfgtrans{oopsla25july22}{1}{F}{T}{ls}{F} &      \Unk  &      288.04 \\
& \drift &      \humanCfgtrans{oopsla25july22}{1}{T}{F}{pg}{F} &      \TO   &      901.02 \\
& \drift &      \humanCfgtrans{oopsla25july22}{0}{F}{F}{pg}{F} &      \Unk  &      276.11 \\
& \drift &      \humanCfgtrans{oopsla25july22}{1}{T}{T}{ls}{F} &      \Unk  &      587.41 \\
& \evdrift &      \humanCfgdirect{oopsla25july22}{1}{F}{T}{ls}{T} &      \Chk  &      15.81 \\
& \evdrift &      \humanCfgdirect{oopsla25july22}{1}{T}{F}{pg}{T} &      \Unk  &      69.32 \\
& \evdrift &      \humanCfgdirect{oopsla25july22}{1}{F}{F}{pg}{T} &      \Unk  &      69.41 \\
& \evdrift &      \humanCfgdirect{oopsla25july22}{0}{F}{F}{pg}{T} &      \Unk  &      20.37 \\
& \evdrift &      \humanCfgdirect{oopsla25july22}{0}{F}{F}{pg}{F} &      \Unk  &      20.23 \\
& \evdrift &      \humanCfgdirect{oopsla25july22}{1}{T}{T}{ls}{T} &      \Chk  &      16.15 \\
& \evdrift &      \humanCfgdirect{oopsla25july22}{0}{F}{T}{ls}{T} &      \Chk  &      6.47 \\
& \evdrift &      \humanCfgdirect{oopsla25july22}{1}{T}{F}{pg}{F} &      \Unk  &      68.85 \\
& \evdrift & \hl  \humanCfgdirect{oopsla25july22}{0}{F}{T}{ls}{F} & \hl  \Chk  & \hl  6.32 \\
& \evdrift &      \humanCfgdirect{oopsla25july22}{1}{F}{T}{ls}{F} &      \Chk  &      15.87 \\
& \evdrift &      \humanCfgdirect{oopsla25july22}{1}{F}{F}{pg}{F} &      \Unk  &      68.66 \\
& \evdrift &      \humanCfgdirect{oopsla25july22}{1}{T}{T}{ls}{F} &      \Chk  &      15.77 \\
\hline
14. \texttt{\scriptsize lics18-hoshrink} \\
& \drift &      \humanCfgtrans{oopsla25july22}{1}{T}{T}{ls}{F} &      \Unk  &      3.36 \\
& \drift &      \humanCfgtrans{oopsla25july22}{0}{F}{F}{pg}{F} &      \Unk  &      9.73 \\
& \drift &      \humanCfgtrans{oopsla25july22}{1}{F}{F}{pg}{F} &      \Unk  &      19.35 \\
& \drift &      \humanCfgtrans{oopsla25july22}{1}{F}{T}{ls}{F} &      \Unk  &      3.42 \\
& \drift &      \humanCfgtrans{oopsla25july22}{0}{F}{T}{ls}{F} &      \Unk  &      1.22 \\
& \drift &      \humanCfgtrans{oopsla25july22}{1}{T}{F}{pg}{F} &      \Unk  &      20.18 \\
& \evdrift &      \humanCfgdirect{oopsla25july22}{1}{T}{T}{ls}{F} &      \Unk  &      0.92 \\
& \evdrift & \hl  \humanCfgdirect{oopsla25july22}{1}{F}{F}{pg}{F} & \hl  \Unk  & \hl  6.90 \\
& \evdrift &      \humanCfgdirect{oopsla25july22}{0}{F}{T}{ls}{T} &      \Unk  &      0.38 \\
& \evdrift &      \humanCfgdirect{oopsla25july22}{0}{F}{T}{ls}{F} &      \Unk  &      0.42 \\
& \evdrift &      \humanCfgdirect{oopsla25july22}{1}{T}{F}{pg}{F} &      \Unk  &      7.11 \\
& \evdrift &      \humanCfgdirect{oopsla25july22}{1}{F}{T}{ls}{F} &      \Unk  &      1.04 \\
& \evdrift &      \humanCfgdirect{oopsla25july22}{1}{T}{T}{ls}{T} &      \Unk  &      1.16 \\
& \evdrift &      \humanCfgdirect{oopsla25july22}{0}{F}{F}{pg}{F} &      \Unk  &      2.99 \\
& \evdrift &      \humanCfgdirect{oopsla25july22}{1}{F}{T}{ls}{T} &      \Unk  &      1.08 \\
& \evdrift &      \humanCfgdirect{oopsla25july22}{1}{T}{F}{pg}{T} &      \Unk  &      7.01 \\
& \evdrift &      \humanCfgdirect{oopsla25july22}{1}{F}{F}{pg}{T} &      \Unk  &      6.41 \\
& \evdrift &      \humanCfgdirect{oopsla25july22}{0}{F}{F}{pg}{T} &      \Unk  &      2.93 \\
\hline
15. \texttt{\scriptsize lics18-web} \\
& \drift &      \humanCfgtrans{oopsla25july22}{1}{F}{F}{pg}{F} &      \Unk  &      241.51 \\
& \drift &      \humanCfgtrans{oopsla25july22}{0}{F}{T}{ls}{F} &      \Unk  &      53.00 \\
& \drift &      \humanCfgtrans{oopsla25july22}{1}{F}{T}{ls}{F} &      \Unk  &      189.32 \\
& \drift &      \humanCfgtrans{oopsla25july22}{1}{T}{F}{pg}{F} &      \TO   &      900.27 \\
& \drift &      \humanCfgtrans{oopsla25july22}{1}{T}{T}{ls}{F} &      \TO   &      901.02 \\
& \drift &      \humanCfgtrans{oopsla25july22}{0}{F}{F}{pg}{F} &      \Unk  &      91.42 \\
& \evdrift &      \humanCfgdirect{oopsla25july22}{0}{F}{F}{pg}{F} &      \Unk  &      10.26 \\
& \evdrift &      \humanCfgdirect{oopsla25july22}{1}{T}{T}{ls}{T} &      \Chk  &      23.83 \\
& \evdrift &      \humanCfgdirect{oopsla25july22}{1}{F}{T}{ls}{T} &      \Chk  &      23.89 \\
& \evdrift &      \humanCfgdirect{oopsla25july22}{1}{T}{F}{pg}{T} &      \Unk  &      59.56 \\
& \evdrift &      \humanCfgdirect{oopsla25july22}{1}{F}{F}{pg}{T} &      \Unk  &      59.68 \\
& \evdrift &      \humanCfgdirect{oopsla25july22}{0}{F}{F}{pg}{T} &      \Unk  &      10.57 \\
& \evdrift &      \humanCfgdirect{oopsla25july22}{1}{T}{T}{ls}{F} &      \Chk  &      23.56 \\
& \evdrift &      \humanCfgdirect{oopsla25july22}{1}{F}{F}{pg}{F} &      \Unk  &      58.95 \\
& \evdrift &      \humanCfgdirect{oopsla25july22}{0}{F}{T}{ls}{T} &      \Chk  &      7.45 \\
& \evdrift &      \humanCfgdirect{oopsla25july22}{1}{T}{F}{pg}{F} &      \Unk  &      59.19 \\
& \evdrift & \hl  \humanCfgdirect{oopsla25july22}{0}{F}{T}{ls}{F} & \hl  \Chk  & \hl  7.21 \\
& \evdrift &      \humanCfgdirect{oopsla25july22}{1}{F}{T}{ls}{F} &      \Chk  &      23.78 \\
\hline
16. \texttt{\scriptsize market} \\
& \drift &      \humanCfgtrans{oopsla25july22}{0}{F}{F}{pg}{F} &      \Unk  &      127.80 \\
& \drift &      \humanCfgtrans{oopsla25july22}{1}{T}{T}{ls}{F} &      \Unk  &      641.30 \\
& \drift &      \humanCfgtrans{oopsla25july22}{1}{F}{T}{ls}{F} &      \Unk  &      277.58 \\
& \drift &      \humanCfgtrans{oopsla25july22}{0}{F}{T}{ls}{F} &      \Unk  &      33.12 \\
& \drift &      \humanCfgtrans{oopsla25july22}{1}{T}{F}{pg}{F} &      \TO   &      900.85 \\
& \drift &      \humanCfgtrans{oopsla25july22}{1}{F}{F}{pg}{F} &      \TO   &      901.01 \\
& \evdrift &      \humanCfgdirect{oopsla25july22}{0}{F}{T}{ls}{F} &      \Unk  &      5.47 \\
& \evdrift &      \humanCfgdirect{oopsla25july22}{1}{T}{F}{pg}{F} &      \Unk  &      36.76 \\
& \evdrift &      \humanCfgdirect{oopsla25july22}{1}{F}{T}{ls}{F} &      \Unk  &      14.43 \\
& \evdrift &      \humanCfgdirect{oopsla25july22}{0}{F}{T}{ls}{T} &      \Unk  &      5.43 \\
& \evdrift &      \humanCfgdirect{oopsla25july22}{1}{T}{T}{ls}{F} &      \Unk  &      14.88 \\
& \evdrift &      \humanCfgdirect{oopsla25july22}{1}{F}{F}{pg}{F} &      \Unk  &      35.36 \\
& \evdrift &      \humanCfgdirect{oopsla25july22}{1}{F}{F}{pg}{T} &      \Unk  &      34.96 \\
& \evdrift & \hl  \humanCfgdirect{oopsla25july22}{0}{F}{F}{pg}{T} & \hl  \Unk  & \hl  14.78 \\
& \evdrift &      \humanCfgdirect{oopsla25july22}{1}{F}{T}{ls}{T} &      \Unk  &      14.45 \\
& \evdrift &      \humanCfgdirect{oopsla25july22}{1}{T}{F}{pg}{T} &      \Unk  &      37.05 \\
& \evdrift &      \humanCfgdirect{oopsla25july22}{1}{T}{T}{ls}{T} &      \Unk  &      15.03 \\
& \evdrift &      \humanCfgdirect{oopsla25july22}{0}{F}{F}{pg}{F} &      \Unk  &      14.93 \\
\hline
17. \texttt{\scriptsize max-min} \\
& \drift &      \humanCfgtrans{oopsla25july22}{1}{T}{T}{ls}{F} &      \Unk  &      233.49 \\
& \drift &      \humanCfgtrans{oopsla25july22}{0}{F}{F}{pg}{F} &      \Unk  &      34.04 \\
& \drift &      \humanCfgtrans{oopsla25july22}{1}{F}{F}{pg}{F} &      \Unk  &      145.14 \\
& \drift &      \humanCfgtrans{oopsla25july22}{1}{T}{F}{pg}{F} &      \TO   &      900.58 \\
& \drift &      \humanCfgtrans{oopsla25july22}{1}{F}{T}{ls}{F} &      \Unk  &      24.44 \\
& \drift &      \humanCfgtrans{oopsla25july22}{0}{F}{T}{ls}{F} &      \Unk  &      9.19 \\
& \evdrift &      \humanCfgdirect{oopsla25july22}{1}{T}{T}{ls}{F} &      \Chk  &      30.38 \\
& \evdrift &      \humanCfgdirect{oopsla25july22}{1}{F}{F}{pg}{F} &      \Unk  &      28.20 \\
& \evdrift &      \humanCfgdirect{oopsla25july22}{0}{F}{T}{ls}{T} &      \Unk  &      3.77 \\
& \evdrift &      \humanCfgdirect{oopsla25july22}{1}{F}{T}{ls}{F} &      \Unk  &      6.31 \\
& \evdrift &      \humanCfgdirect{oopsla25july22}{0}{F}{T}{ls}{F} &      \Unk  &      3.73 \\
& \evdrift &      \humanCfgdirect{oopsla25july22}{1}{T}{F}{pg}{F} &      \Chk  &      48.57 \\
& \evdrift & \hl  \humanCfgdirect{oopsla25july22}{1}{T}{T}{ls}{T} & \hl  \Chk  & \hl  30.10 \\
& \evdrift &      \humanCfgdirect{oopsla25july22}{0}{F}{F}{pg}{F} &      \Unk  &      24.39 \\
& \evdrift &      \humanCfgdirect{oopsla25july22}{1}{T}{F}{pg}{T} &      \Chk  &      48.28 \\
& \evdrift &      \humanCfgdirect{oopsla25july22}{1}{F}{T}{ls}{T} &      \Unk  &      6.29 \\
& \evdrift &      \humanCfgdirect{oopsla25july22}{0}{F}{F}{pg}{T} &      \Unk  &      24.35 \\
& \evdrift &      \humanCfgdirect{oopsla25july22}{1}{F}{F}{pg}{T} &      \Unk  &      28.45 \\
\hline
18. \texttt{\scriptsize monotonic} \\
& \drift &      \humanCfgtrans{oopsla25july22}{0}{F}{F}{pg}{F} &      \Chk  &      5.33 \\
& \drift &      \humanCfgtrans{oopsla25july22}{1}{T}{T}{ls}{F} &      \Chk  &      6.28 \\
& \drift &      \humanCfgtrans{oopsla25july22}{0}{F}{T}{ls}{F} &      \Chk  &      2.38 \\
& \drift &      \humanCfgtrans{oopsla25july22}{1}{F}{T}{ls}{F} &      \Chk  &      3.60 \\
& \drift &      \humanCfgtrans{oopsla25july22}{1}{T}{F}{pg}{F} &      \Chk  &      31.05 \\
& \drift &      \humanCfgtrans{oopsla25july22}{1}{F}{F}{pg}{F} &      \Unk  &      19.91 \\
& \evdrift &      \humanCfgdirect{oopsla25july22}{1}{T}{F}{pg}{F} &      \Unk  &      2.44 \\
& \evdrift &      \humanCfgdirect{oopsla25july22}{0}{F}{T}{ls}{F} &      \Chk  &      0.36 \\
& \evdrift &      \humanCfgdirect{oopsla25july22}{1}{F}{T}{ls}{F} &      \Chk  &      0.59 \\
& \evdrift & \hl  \humanCfgdirect{oopsla25july22}{0}{F}{T}{ls}{T} & \hl  \Chk  & \hl  0.35 \\
& \evdrift &      \humanCfgdirect{oopsla25july22}{1}{F}{F}{pg}{F} &      \Unk  &      2.26 \\
& \evdrift &      \humanCfgdirect{oopsla25july22}{1}{T}{T}{ls}{F} &      \Chk  &      0.49 \\
& \evdrift &      \humanCfgdirect{oopsla25july22}{1}{F}{F}{pg}{T} &      \Unk  &      2.16 \\
& \evdrift &      \humanCfgdirect{oopsla25july22}{0}{F}{F}{pg}{T} &      \Chk  &      0.96 \\
& \evdrift &      \humanCfgdirect{oopsla25july22}{1}{F}{T}{ls}{T} &      \Chk  &      0.54 \\
& \evdrift &      \humanCfgdirect{oopsla25july22}{1}{T}{F}{pg}{T} &      \Unk  &      2.32 \\
& \evdrift &      \humanCfgdirect{oopsla25july22}{0}{F}{F}{pg}{F} &      \Chk  &      0.92 \\
& \evdrift &      \humanCfgdirect{oopsla25july22}{1}{T}{T}{ls}{T} &      \Chk  &      0.49 \\
\hline
19. \texttt{\scriptsize nondet\_max} \\
& \drift &      \humanCfgtrans{oopsla25july22}{1}{F}{F}{pg}{F} &      \Chk  &      17.40 \\
& \drift &      \humanCfgtrans{oopsla25july22}{1}{F}{T}{ls}{F} &      \Chk  &      5.14 \\
& \drift &      \humanCfgtrans{oopsla25july22}{0}{F}{T}{ls}{F} &      \Chk  &      2.24 \\
& \drift &      \humanCfgtrans{oopsla25july22}{1}{T}{F}{pg}{F} &      \Chk  &      17.25 \\
& \drift &      \humanCfgtrans{oopsla25july22}{0}{F}{F}{pg}{F} &      \Unk  &      6.39 \\
& \drift &      \humanCfgtrans{oopsla25july22}{1}{T}{T}{ls}{F} &      \Chk  &      3.80 \\
& \evdrift &      \humanCfgdirect{oopsla25july22}{1}{F}{T}{ls}{T} &      \Chk  &      1.04 \\
& \evdrift &      \humanCfgdirect{oopsla25july22}{1}{T}{F}{pg}{T} &      \Unk  &      5.55 \\
& \evdrift &      \humanCfgdirect{oopsla25july22}{1}{F}{F}{pg}{T} &      \Unk  &      5.52 \\
& \evdrift &      \humanCfgdirect{oopsla25july22}{0}{F}{F}{pg}{T} &      \Unk  &      1.90 \\
& \evdrift &      \humanCfgdirect{oopsla25july22}{1}{T}{T}{ls}{T} &      \Chk  &      0.99 \\
& \evdrift &      \humanCfgdirect{oopsla25july22}{0}{F}{F}{pg}{F} &      \Unk  &      1.68 \\
& \evdrift &      \humanCfgdirect{oopsla25july22}{0}{F}{T}{ls}{T} &      \Chk  &      0.61 \\
& \evdrift & \hl  \humanCfgdirect{oopsla25july22}{0}{F}{T}{ls}{F} & \hl  \Chk  & \hl  0.59 \\
& \evdrift &      \humanCfgdirect{oopsla25july22}{1}{T}{F}{pg}{F} &      \Unk  &      5.59 \\
& \evdrift &      \humanCfgdirect{oopsla25july22}{1}{F}{T}{ls}{F} &      \Chk  &      1.07 \\
& \evdrift &      \humanCfgdirect{oopsla25july22}{1}{T}{T}{ls}{F} &      \Chk  &      1.03 \\
& \evdrift &      \humanCfgdirect{oopsla25july22}{1}{F}{F}{pg}{F} &      \Unk  &      4.71 \\
\hline
20. \texttt{\scriptsize num\_evens} \\
& \drift &      \humanCfgtrans{oopsla25july22}{1}{T}{T}{ls}{F} &      \Chk  &      21.41 \\
& \drift &      \humanCfgtrans{oopsla25july22}{0}{F}{F}{pg}{F} &      \Unk  &      16.61 \\
& \drift &      \humanCfgtrans{oopsla25july22}{1}{F}{F}{pg}{F} &      \Chk  &      30.58 \\
& \drift &      \humanCfgtrans{oopsla25july22}{0}{F}{T}{ls}{F} &      \Chk  &      8.88 \\
& \drift &      \humanCfgtrans{oopsla25july22}{1}{F}{T}{ls}{F} &      \Chk  &      13.58 \\
& \drift &      \humanCfgtrans{oopsla25july22}{1}{T}{F}{pg}{F} &      \Chk  &      61.34 \\
& \evdrift &      \humanCfgdirect{oopsla25july22}{1}{T}{T}{ls}{F} &      \Chk  &      4.50 \\
& \evdrift &      \humanCfgdirect{oopsla25july22}{1}{F}{F}{pg}{F} &      \Chk  &      4.06 \\
& \evdrift &      \humanCfgdirect{oopsla25july22}{0}{F}{T}{ls}{T} &      \Unk  &      0.90 \\
& \evdrift &      \humanCfgdirect{oopsla25july22}{1}{T}{F}{pg}{F} &      \Chk  &      7.46 \\
& \evdrift &      \humanCfgdirect{oopsla25july22}{0}{F}{T}{ls}{F} &      \Unk  &      0.93 \\
& \evdrift &      \humanCfgdirect{oopsla25july22}{1}{F}{T}{ls}{F} &      \Chk  &      2.46 \\
& \evdrift &      \humanCfgdirect{oopsla25july22}{0}{F}{F}{pg}{F} &      \Unk  &      1.80 \\
& \evdrift &      \humanCfgdirect{oopsla25july22}{1}{T}{T}{ls}{T} &      \Chk  &      4.53 \\
& \evdrift & \hl  \humanCfgdirect{oopsla25july22}{1}{F}{T}{ls}{T} & \hl  \Chk  & \hl  2.34 \\
& \evdrift &      \humanCfgdirect{oopsla25july22}{1}{T}{F}{pg}{T} &      \Chk  &      7.24 \\
& \evdrift &      \humanCfgdirect{oopsla25july22}{1}{F}{F}{pg}{T} &      \Chk  &      4.20 \\
& \evdrift &      \humanCfgdirect{oopsla25july22}{0}{F}{F}{pg}{T} &      \Unk  &      1.91 \\
\hline
21. \texttt{\scriptsize order-irrel} \\
& \drift &      \humanCfgtrans{oopsla25july22}{0}{F}{F}{pg}{F} &      \Unk  &      9.99 \\
& \drift &      \humanCfgtrans{oopsla25july22}{1}{T}{T}{ls}{F} &      \Unk  &      15.91 \\
& \drift &      \humanCfgtrans{oopsla25july22}{0}{F}{T}{ls}{F} &      \Unk  &      1.85 \\
& \drift &      \humanCfgtrans{oopsla25july22}{1}{F}{T}{ls}{F} &      \Unk  &      2.97 \\
& \drift &      \humanCfgtrans{oopsla25july22}{1}{T}{F}{pg}{F} &      \Unk  &      36.43 \\
& \drift &      \humanCfgtrans{oopsla25july22}{1}{F}{F}{pg}{F} &      \Unk  &      13.22 \\
& \evdrift &      \humanCfgdirect{oopsla25july22}{1}{T}{F}{pg}{F} &      \Chk  &      3.80 \\
& \evdrift &      \humanCfgdirect{oopsla25july22}{0}{F}{T}{ls}{F} &      \Unk  &      0.52 \\
& \evdrift &      \humanCfgdirect{oopsla25july22}{1}{F}{T}{ls}{F} &      \Unk  &      0.89 \\
& \evdrift &      \humanCfgdirect{oopsla25july22}{0}{F}{T}{ls}{T} &      \Unk  &      0.50 \\
& \evdrift &      \humanCfgdirect{oopsla25july22}{1}{T}{T}{ls}{F} &      \Unk  &      1.74 \\
& \evdrift &      \humanCfgdirect{oopsla25july22}{1}{F}{F}{pg}{F} &      \Unk  &      3.38 \\
& \evdrift &      \humanCfgdirect{oopsla25july22}{1}{F}{F}{pg}{T} &      \Unk  &      3.61 \\
& \evdrift &      \humanCfgdirect{oopsla25july22}{0}{F}{F}{pg}{T} &      \Unk  &      1.67 \\
& \evdrift &      \humanCfgdirect{oopsla25july22}{1}{F}{T}{ls}{T} &      \Unk  &      0.82 \\
& \evdrift & \hl  \humanCfgdirect{oopsla25july22}{1}{T}{F}{pg}{T} & \hl  \Chk  & \hl  3.54 \\
& \evdrift &      \humanCfgdirect{oopsla25july22}{0}{F}{F}{pg}{F} &      \Unk  &      1.63 \\
& \evdrift &      \humanCfgdirect{oopsla25july22}{1}{T}{T}{ls}{T} &      \Unk  &      1.77 \\
\hline
22. \texttt{\scriptsize order-irrel-nondet} \\
& \drift &      \humanCfgtrans{oopsla25july22}{0}{F}{F}{pg}{F} &      \Unk  &      13.01 \\
& \drift &      \humanCfgtrans{oopsla25july22}{1}{T}{T}{ls}{F} &      \Unk  &      14.84 \\
& \drift &      \humanCfgtrans{oopsla25july22}{1}{F}{F}{pg}{F} &      \Unk  &      27.81 \\
& \drift &      \humanCfgtrans{oopsla25july22}{0}{F}{T}{ls}{F} &      \Unk  &      2.63 \\
& \drift &      \humanCfgtrans{oopsla25july22}{1}{F}{T}{ls}{F} &      \Unk  &      6.72 \\
& \drift &      \humanCfgtrans{oopsla25july22}{1}{T}{F}{pg}{F} &      \Unk  &      75.50 \\
& \evdrift &      \humanCfgdirect{oopsla25july22}{0}{F}{T}{ls}{T} &      \Unk  &      1.13 \\
& \evdrift &      \humanCfgdirect{oopsla25july22}{1}{T}{F}{pg}{F} &      \Chk  &      8.38 \\
& \evdrift &      \humanCfgdirect{oopsla25july22}{0}{F}{T}{ls}{F} &      \Unk  &      1.14 \\
& \evdrift &      \humanCfgdirect{oopsla25july22}{1}{F}{T}{ls}{F} &      \Unk  &      2.01 \\
& \evdrift &      \humanCfgdirect{oopsla25july22}{1}{T}{T}{ls}{F} &      \Chk  &      2.19 \\
& \evdrift &      \humanCfgdirect{oopsla25july22}{1}{F}{F}{pg}{F} &      \Unk  &      10.27 \\
& \evdrift &      \humanCfgdirect{oopsla25july22}{1}{F}{T}{ls}{T} &      \Unk  &      1.99 \\
& \evdrift &      \humanCfgdirect{oopsla25july22}{1}{T}{F}{pg}{T} &      \Chk  &      8.32 \\
& \evdrift &      \humanCfgdirect{oopsla25july22}{1}{F}{F}{pg}{T} &      \Unk  &      10.39 \\
& \evdrift &      \humanCfgdirect{oopsla25july22}{0}{F}{F}{pg}{T} &      \Unk  &      6.26 \\
& \evdrift &      \humanCfgdirect{oopsla25july22}{0}{F}{F}{pg}{F} &      \Unk  &      6.19 \\
& \evdrift & \hl  \humanCfgdirect{oopsla25july22}{1}{T}{T}{ls}{T} & \hl  \Chk  & \hl  1.84 \\
\hline
23. \texttt{\scriptsize overview1} \\
& \drift &      \humanCfgtrans{oopsla25july22}{1}{F}{F}{pg}{F} &      \Chk  &      5.02 \\
& \drift &      \humanCfgtrans{oopsla25july22}{1}{T}{F}{pg}{F} &      \Chk  &      5.39 \\
& \drift &      \humanCfgtrans{oopsla25july22}{1}{F}{T}{ls}{F} &      \Chk  &      1.95 \\
& \drift &      \humanCfgtrans{oopsla25july22}{0}{F}{T}{ls}{F} &      \Unk  &      0.76 \\
& \drift &      \humanCfgtrans{oopsla25july22}{1}{T}{T}{ls}{F} &      \Chk  &      2.35 \\
& \drift &      \humanCfgtrans{oopsla25july22}{0}{F}{F}{pg}{F} &      \Unk  &      4.15 \\
& \evdrift &      \humanCfgdirect{oopsla25july22}{1}{T}{T}{ls}{T} &      \Chk  &      0.49 \\
& \evdrift &      \humanCfgdirect{oopsla25july22}{0}{F}{F}{pg}{F} &      \Chk  &      0.70 \\
& \evdrift &      \humanCfgdirect{oopsla25july22}{1}{T}{F}{pg}{T} &      \Chk  &      1.24 \\
& \evdrift &      \humanCfgdirect{oopsla25july22}{1}{F}{T}{ls}{T} &      \Chk  &      0.41 \\
& \evdrift &      \humanCfgdirect{oopsla25july22}{0}{F}{F}{pg}{T} &      \Chk  &      0.66 \\
& \evdrift &      \humanCfgdirect{oopsla25july22}{1}{F}{F}{pg}{T} &      \Chk  &      1.28 \\
& \evdrift &      \humanCfgdirect{oopsla25july22}{1}{F}{F}{pg}{F} &      \Chk  &      1.23 \\
& \evdrift &      \humanCfgdirect{oopsla25july22}{1}{T}{T}{ls}{F} &      \Chk  &      0.52 \\
& \evdrift & \hl  \humanCfgdirect{oopsla25july22}{0}{F}{T}{ls}{T} & \hl  \Chk  & \hl  0.29 \\
& \evdrift &      \humanCfgdirect{oopsla25july22}{1}{F}{T}{ls}{F} &      \Chk  &      0.52 \\
& \evdrift &      \humanCfgdirect{oopsla25july22}{0}{F}{T}{ls}{F} &      \Chk  &      0.30 \\
& \evdrift &      \humanCfgdirect{oopsla25july22}{1}{T}{F}{pg}{F} &      \Chk  &      1.24 \\
\hline
24. \texttt{\scriptsize reentr} \\
& \drift &      \humanCfgtrans{oopsla25july22}{1}{F}{F}{pg}{F} &      \Chk  &      14.98 \\
& \drift &      \humanCfgtrans{oopsla25july22}{1}{T}{F}{pg}{F} &      \Chk  &      19.02 \\
& \drift &      \humanCfgtrans{oopsla25july22}{0}{F}{T}{ls}{F} &      \Chk  &      3.66 \\
& \drift &      \humanCfgtrans{oopsla25july22}{1}{F}{T}{ls}{F} &      \Chk  &      6.84 \\
& \drift &      \humanCfgtrans{oopsla25july22}{1}{T}{T}{ls}{F} &      \Chk  &      8.84 \\
& \drift &      \humanCfgtrans{oopsla25july22}{0}{F}{F}{pg}{F} &      \Chk  &      6.57 \\
& \evdrift &      \humanCfgdirect{oopsla25july22}{0}{F}{F}{pg}{F} &      \Chk  &      0.18 \\
& \evdrift &      \humanCfgdirect{oopsla25july22}{1}{T}{T}{ls}{T} &      \Chk  &      0.18 \\
& \evdrift &      \humanCfgdirect{oopsla25july22}{1}{T}{F}{pg}{T} &      \Chk  &      0.25 \\
& \evdrift &      \humanCfgdirect{oopsla25july22}{1}{F}{T}{ls}{T} &      \Chk  &      0.17 \\
& \evdrift &      \humanCfgdirect{oopsla25july22}{0}{F}{F}{pg}{T} &      \Chk  &      0.17 \\
& \evdrift &      \humanCfgdirect{oopsla25july22}{1}{F}{F}{pg}{T} &      \Chk  &      0.26 \\
& \evdrift &      \humanCfgdirect{oopsla25july22}{1}{T}{T}{ls}{F} &      \Chk  &      0.17 \\
& \evdrift &      \humanCfgdirect{oopsla25july22}{1}{F}{F}{pg}{F} &      \Chk  &      0.25 \\
& \evdrift & \hl  \humanCfgdirect{oopsla25july22}{0}{F}{T}{ls}{T} & \hl  \Chk  & \hl  0.14 \\
& \evdrift &      \humanCfgdirect{oopsla25july22}{1}{F}{T}{ls}{F} &      \Chk  &      0.18 \\
& \evdrift &      \humanCfgdirect{oopsla25july22}{1}{T}{F}{pg}{F} &      \Chk  &      0.27 \\
& \evdrift &      \humanCfgdirect{oopsla25july22}{0}{F}{T}{ls}{F} &      \Chk  &      0.15 \\
\hline
25. \texttt{\scriptsize resource-analysis} \\
& \drift &      \humanCfgtrans{oopsla25july22}{0}{F}{F}{pg}{F} &      \Unk  &      4.67 \\
& \drift &      \humanCfgtrans{oopsla25july22}{1}{T}{T}{ls}{F} &      \Chk  &      3.10 \\
& \drift &      \humanCfgtrans{oopsla25july22}{1}{F}{F}{pg}{F} &      \Chk  &      9.36 \\
& \drift &      \humanCfgtrans{oopsla25july22}{1}{T}{F}{pg}{F} &      \Chk  &      9.56 \\
& \drift &      \humanCfgtrans{oopsla25july22}{1}{F}{T}{ls}{F} &      \Chk  &      3.07 \\
& \drift &      \humanCfgtrans{oopsla25july22}{0}{F}{T}{ls}{F} &      \Chk  &      2.44 \\
& \evdrift &      \humanCfgdirect{oopsla25july22}{0}{F}{T}{ls}{T} &      \Chk  &      0.18 \\
& \evdrift &      \humanCfgdirect{oopsla25july22}{1}{F}{T}{ls}{F} &      \Chk  &      0.26 \\
& \evdrift & \hl  \humanCfgdirect{oopsla25july22}{0}{F}{T}{ls}{F} & \hl  \Chk  & \hl  0.17 \\
& \evdrift &      \humanCfgdirect{oopsla25july22}{1}{T}{F}{pg}{F} &      \Chk  &      0.48 \\
& \evdrift &      \humanCfgdirect{oopsla25july22}{1}{F}{F}{pg}{F} &      \Chk  &      0.44 \\
& \evdrift &      \humanCfgdirect{oopsla25july22}{1}{T}{T}{ls}{F} &      \Chk  &      0.26 \\
& \evdrift &      \humanCfgdirect{oopsla25july22}{1}{T}{F}{pg}{T} &      \Chk  &      0.44 \\
& \evdrift &      \humanCfgdirect{oopsla25july22}{1}{F}{T}{ls}{T} &      \Chk  &      0.26 \\
& \evdrift &      \humanCfgdirect{oopsla25july22}{0}{F}{F}{pg}{T} &      \Chk  &      0.26 \\
& \evdrift &      \humanCfgdirect{oopsla25july22}{1}{F}{F}{pg}{T} &      \Chk  &      0.46 \\
& \evdrift &      \humanCfgdirect{oopsla25july22}{1}{T}{T}{ls}{T} &      \Chk  &      0.27 \\
& \evdrift &      \humanCfgdirect{oopsla25july22}{0}{F}{F}{pg}{F} &      \Chk  &      0.26 \\
\hline
26. \texttt{\scriptsize sum-appendix} \\
& \drift &      \humanCfgtrans{oopsla25july22}{1}{F}{F}{pg}{F} &      \Chk  &      5.98 \\
& \drift &      \humanCfgtrans{oopsla25july22}{0}{F}{T}{ls}{F} &      \Chk  &      1.18 \\
& \drift &      \humanCfgtrans{oopsla25july22}{1}{F}{T}{ls}{F} &      \Chk  &      1.63 \\
& \drift &      \humanCfgtrans{oopsla25july22}{1}{T}{F}{pg}{F} &      \Chk  &      5.86 \\
& \drift &      \humanCfgtrans{oopsla25july22}{0}{F}{F}{pg}{F} &      \Chk  &      4.38 \\
& \drift &      \humanCfgtrans{oopsla25july22}{1}{T}{T}{ls}{F} &      \Chk  &      1.63 \\
& \evdrift &      \humanCfgdirect{oopsla25july22}{1}{F}{T}{ls}{T} &      \Chk  &      0.03 \\
& \evdrift &      \humanCfgdirect{oopsla25july22}{1}{T}{F}{pg}{T} &      \Chk  &      0.06 \\
& \evdrift &      \humanCfgdirect{oopsla25july22}{1}{F}{F}{pg}{T} &      \Chk  &      0.06 \\
& \evdrift &      \humanCfgdirect{oopsla25july22}{0}{F}{F}{pg}{T} &      \Chk  &      0.06 \\
& \evdrift &      \humanCfgdirect{oopsla25july22}{0}{F}{F}{pg}{F} &      \Chk  &      0.06 \\
& \evdrift &      \humanCfgdirect{oopsla25july22}{1}{T}{T}{ls}{T} &      \Chk  &      0.03 \\
& \evdrift & \hl  \humanCfgdirect{oopsla25july22}{0}{F}{T}{ls}{T} & \hl  \Chk  & \hl  0.02 \\
& \evdrift &      \humanCfgdirect{oopsla25july22}{1}{T}{F}{pg}{F} &      \Chk  &      0.06 \\
& \evdrift &      \humanCfgdirect{oopsla25july22}{0}{F}{T}{ls}{F} &      \Chk  &      0.02 \\
& \evdrift &      \humanCfgdirect{oopsla25july22}{1}{F}{T}{ls}{F} &      \Chk  &      0.03 \\
& \evdrift &      \humanCfgdirect{oopsla25july22}{1}{F}{F}{pg}{F} &      \Chk  &      0.06 \\
& \evdrift &      \humanCfgdirect{oopsla25july22}{1}{T}{T}{ls}{F} &      \Chk  &      0.03 \\
\hline
27. \texttt{\scriptsize sum-of-ev-even} \\
& \drift &      \humanCfgtrans{oopsla25july22}{0}{F}{F}{pg}{F} &      \Chk  &      2.11 \\
& \drift &      \humanCfgtrans{oopsla25july22}{1}{T}{T}{ls}{F} &      \Unk  &      1.77 \\
& \drift &      \humanCfgtrans{oopsla25july22}{0}{F}{T}{ls}{F} &      \Unk  &      0.56 \\
& \drift &      \humanCfgtrans{oopsla25july22}{1}{F}{T}{ls}{F} &      \Unk  &      1.71 \\
& \drift &      \humanCfgtrans{oopsla25july22}{1}{T}{F}{pg}{F} &      \Chk  &      6.19 \\
& \drift &      \humanCfgtrans{oopsla25july22}{1}{F}{F}{pg}{F} &      \Chk  &      5.89 \\
& \evdrift &      \humanCfgdirect{oopsla25july22}{1}{T}{F}{pg}{F} &      \Chk  &      0.69 \\
& \evdrift &      \humanCfgdirect{oopsla25july22}{0}{F}{T}{ls}{F} &      \Unk  &      0.13 \\
& \evdrift &      \humanCfgdirect{oopsla25july22}{1}{F}{T}{ls}{F} &      \Unk  &      0.29 \\
& \evdrift &      \humanCfgdirect{oopsla25july22}{0}{F}{T}{ls}{T} &      \Unk  &      0.13 \\
& \evdrift &      \humanCfgdirect{oopsla25july22}{1}{F}{F}{pg}{F} &      \Chk  &      0.66 \\
& \evdrift &      \humanCfgdirect{oopsla25july22}{1}{T}{T}{ls}{F} &      \Unk  &      0.30 \\
& \evdrift &      \humanCfgdirect{oopsla25july22}{1}{F}{F}{pg}{T} &      \Chk  &      0.68 \\
& \evdrift &      \humanCfgdirect{oopsla25july22}{0}{F}{F}{pg}{T} &      \Chk  &      0.32 \\
& \evdrift &      \humanCfgdirect{oopsla25july22}{1}{F}{T}{ls}{T} &      \Unk  &      0.27 \\
& \evdrift &      \humanCfgdirect{oopsla25july22}{1}{T}{F}{pg}{T} &      \Chk  &      0.67 \\
& \evdrift & \hl  \humanCfgdirect{oopsla25july22}{0}{F}{F}{pg}{F} & \hl  \Chk  & \hl  0.32 \\
& \evdrift &      \humanCfgdirect{oopsla25july22}{1}{T}{T}{ls}{T} &      \Unk  &      0.30 \\
\hline
28. \texttt{\scriptsize temperature} \\
& \drift &      \humanCfgtrans{oopsla25july22}{1}{F}{F}{pg}{F} &      \Chk  &      63.71 \\
& \drift &      \humanCfgtrans{oopsla25july22}{1}{T}{F}{pg}{F} &      \Chk  &      322.50 \\
& \drift &      \humanCfgtrans{oopsla25july22}{0}{F}{T}{ls}{F} &      \Unk  &      13.34 \\
& \drift &      \humanCfgtrans{oopsla25july22}{1}{F}{T}{ls}{F} &      \Unk  &      15.75 \\
& \drift &      \humanCfgtrans{oopsla25july22}{0}{F}{F}{pg}{F} &      \Unk  &      40.36 \\
& \drift &      \humanCfgtrans{oopsla25july22}{1}{T}{T}{ls}{F} &      \Unk  &      133.53 \\
& \evdrift &      \humanCfgdirect{oopsla25july22}{1}{T}{F}{pg}{T} &      \Chk  &      17.56 \\
& \evdrift &      \humanCfgdirect{oopsla25july22}{1}{F}{T}{ls}{T} &      \Unk  &      3.14 \\
& \evdrift &      \humanCfgdirect{oopsla25july22}{0}{F}{F}{pg}{T} &      \Chk  &      6.08 \\
& \evdrift &      \humanCfgdirect{oopsla25july22}{1}{F}{F}{pg}{T} &      \Chk  &      12.12 \\
& \evdrift & \hl  \humanCfgdirect{oopsla25july22}{0}{F}{F}{pg}{F} & \hl  \Chk  & \hl  6.06 \\
& \evdrift &      \humanCfgdirect{oopsla25july22}{1}{T}{T}{ls}{T} &      \Unk  &      5.17 \\
& \evdrift &      \humanCfgdirect{oopsla25july22}{0}{F}{T}{ls}{T} &      \Unk  &      2.34 \\
& \evdrift &      \humanCfgdirect{oopsla25july22}{1}{F}{T}{ls}{F} &      \Unk  &      3.12 \\
& \evdrift &      \humanCfgdirect{oopsla25july22}{1}{T}{F}{pg}{F} &      \Chk  &      17.54 \\
& \evdrift &      \humanCfgdirect{oopsla25july22}{0}{F}{T}{ls}{F} &      \Unk  &      2.36 \\
& \evdrift &      \humanCfgdirect{oopsla25july22}{1}{F}{F}{pg}{F} &      \Chk  &      12.02 \\
& \evdrift &      \humanCfgdirect{oopsla25july22}{1}{T}{T}{ls}{F} &      \Unk  &      5.17 \\
\hline

\end{longtable}
}
\fi

\end{document}